\DeclareMathAlphabet{\oldmathcal}{OMS}{zplm}{m}{n}
\definecolor{darkgreen}{rgb}{0,0.5,0}
\definecolor{darkblue}{rgb}{0,0,0.8}
\newtheorem{theorem}{Theorem}[section]
\newtheorem{fact}[theorem]{Fact}
\newtheorem{lemma}[theorem]{Lemma}
\newtheorem{corollary}[theorem]{Corollary}
\newtheorem{property}[theorem]{Property}
\theoremstyle{definition}
\newtheorem{definition}[theorem]{Definition}
\mathchardef\mhyphen="2D
\newcommand{\vol}{\operatorname{\text{{\rm vol}}}}
\newcommand{\ignore}[1]{}
\algnewcommand\algorithmicswitch{\textbf{switch}}
\algnewcommand\algorithmiccase{\textbf{case}}
\newcommand{\congest}{\ensuremath{\mathsf{CONGEST}}\xspace}
\newcommand{\pram}{\ensuremath{\mathsf{PRAM}}\xspace}
\newcommand{\local}{$\mathsf{LOCAL}$\xspace}
\newcommand{\poly}{\operatorname{\text{{\rm poly}}}}
\newcommand{\polylog}{\operatorname{\text{{\rm polylog}}}}
\newcommand{\ID}{\operatorname{ID}}
\newcommand{\dist}{\operatorname{dist}}
\DeclareMathOperator{\E}{\mathbb{E}}
\newcommand{\T}{\oldmathcal{T}}
\algnewcommand{\LineComment}[1]{\Statex \(\triangleright\) #1}
\title{Deterministic Expander Routing: Faster and More Versatile}
\author{
Yi-Jun Chang\thanks{Supported by the NUS Presidential Young Professorship startup grant.}\\
National University of Singapore
\and 
Shang-En Huang\thanks{Supported by NSF CCF-2008422.}\\
Boston College
\and 
Hsin-Hao Su\footnotemark[2]\\
Boston College
}
\date{}
\begin{document}

\maketitle
\thispagestyle{empty}
\begin{abstract}
We consider the expander routing problem formulated by Ghaffari, Kuhn, and Su (PODC 2017), where the goal is to route all the tokens to their destinations given that each vertex is the source and the destination of at most $\deg(v)$ tokens. They developed  {\it randomized algorithms} that solve this problem in $\poly(\phi^{-1}) \cdot 2^{O(\sqrt{\log n \log \log n})}$ rounds in the \congest model, where $\phi$ is the conductance of the graph. In addition, as noted by Chang, Pettie, Saranurak, and Zhang (JACM 2021), it is possible to obtain a preprocessing/query tradeoff so that the routing queries can be answered faster at the cost of more preprocessing time. The efficiency and flexibility of the processing/query tradeoff of expander routing have led to many other distributed algorithms in the \congest model, such as subpolynomial-round minimum spanning tree algorithms in expander graphs and near-optimal algorithms for $k$-clique enumeration in general graphs. 

As the routing algorithm of Ghaffari, Kuhn, and Su and the subsequent improved algorithm by Ghaffari and Li (DISC 2018) are both randomized, all the resulting applications are also randomized. Recently, Chang and Saranurak (FOCS 2020) gave a deterministic algorithm that solves an expander routing instance in $2^{O(\log^{2/3} n \cdot \log^{1/3} \log n)}$ rounds. The deterministic algorithm is less efficient and does not allow preprocessing/query tradeoffs, which precludes the de-randomization of algorithms that require this feature, such as 
the aforementioned $k$-clique enumeration algorithm in general graphs.

The main contribution of our work is a new deterministic expander routing algorithm that not only matches the randomized bound of Ghaffari, Kuhn, and Su but also allows preprocessing/query tradeoffs. Our algorithm solves a single instance of routing query in $2^{{O}(\sqrt{\log n \cdot  \log \log n})}$ rounds. For instance, this allows us to compute an MST in an expander graph in the same round complexity deterministically, improving the previous state-of-the-art $2^{O(\log^{2/3} n \cdot \log^{1/3} \log n)}$.
Our algorithm achieves the following preprocessing and query tradeoffs: For $0 < \epsilon < 1$, we can answer every routing query in $\log^{O(1/\epsilon)} n$ rounds at the cost of a $(n^{O(\epsilon)} + \log^{O(1/\epsilon)} n)$-round preprocessing procedure. Combining this with the approach of Censor-Hillel, Leitersdorf, and Vulakh (PODC 2022), we obtain a near-optimal $\tilde{O}(n^{1-2/k})$-round deterministic algorithm for $k$-clique enumeration in general graphs, improving the previous state-of-the-art  $n^{1-2/k+o(1)}$.

As a side result of independent interest, we demonstrate the \emph{equivalence} between expander routing and \emph{sorting} in the sense that they are reducible to each other up to a polylogarithmic factor in round complexities in the \congest model.

\end{abstract}

\newpage 
{ \normalsize	
\tableofcontents
}
\thispagestyle{empty} %
\newpage
\newpage

\setcounter{page}{1}
\section{Introduction}

The \congest model is a prominent model that captures both the locality and the bandwidth in the study of distributed graph algorithms. In this model, the underlying network is a graph $G=(V,E)$, where we let $n=|V|$, $m=|E|$, and $\Delta =$ the maximum degree of $G$. Every vertex $v$ hosts a processor with an unique $\ID \in [1,\poly(n)]$. The computation proceeds in synchronized rounds. In each round, each vertex sends a distinct message of $O(\log n)$ bits to each of its neighbors, receives messages from its neighbors, and performs local computations. The complexity of an algorithm is measured as the number of rounds. 

In this work, we focus on networks with high {\it conductance}. 
Throughout the paper, we say that a graph is a \emph{$\phi$-expander} if its conductance is at least $\phi$, and we informally say that a graph is an \emph{expander} if it has high conductance. Depending on the context, the conductance of an expander can be $\Omega(1)$, $\log^{-O(1)} n$, or $n^{-o(1)}$.

We consider the following routing problem in a $\phi$-expander $G$ in the \congest model. Suppose that each vertex $v$ is the source and the destination of at most $\deg(v)$ tokens. The goal is to route all the tokens to their destinations. Ghaffari, Kuhn, and Su~\cite{GKS17} developed an algorithm that routes the tokens in $\poly(\phi^{-1}) \cdot 2^{O(\sqrt{\log n \log \log n})}$ rounds. By using such a primitive, 
a minimum spanning tree (MST) can be computed in $\poly(\phi^{-1}) \cdot 2^{O(\sqrt{\log n \log \log n})}$ rounds in the \congest model, beating the $\Omega(\sqrt{n} / \log n)$ lower bound of \cite{PR00, SHK12} in general graphs. Later, the $2^{O(\sqrt{\log n \cdot \log \log n})}$ term in the running time has been improved to ${2^{O(\sqrt{\log n})}}$ later by Ghaffari and Li~\cite{GL18}.

Chang, Pettie, Saranurak, and Zhang~\cite{CPZ21} leveraged the expander routing algorithms to general graphs by developing distributed algorithms for expander decomposition. They showed the method can be used to obtain efficient algorithms for a series of problems. In particular, they obtained  \congest algorithms for triangle counting, detection, and enumeration whose running times match the triangle enumeration lower bound of \cite{IzumiL17} up to $\polylog(n)$ factors. The approach is to decompose the input graph into disjoint expanders, where only a small number of edges are crossing between different expanders. Within each expander, the ease of routing provided by these algorithms allows one to solve the problem efficiently. They also noted that the algorithms of \cite{GKS17} can be tweaked to have preprocessing/query tradeoffs and used this perk in obtaining the above optimal-round algorithms. In particular, if one spends $O(n^{\epsilon})$ time doing the preprocessing then each subsequent routing instance can be answered in $O(\log^{O(1/\epsilon)} n)$ time. This is particularly useful for algorithms that need a polynomial number of queries, as each query can be answered in polylogarithmic rounds if we spend a small-polynomial time for preprocessing.

One major issue left by \cite{GKS17, GL18} was that their routing algorithms are randomized. As a result, all the resulting applications are randomized. In \cite{ChangS20}, they made progress by giving an deterministic algorithm that solves a routing instance in $\poly(\phi^{-1}) \cdot 2^{O(\log^{2/3} n \cdot \log^{1/3}\log n )}$, which is suboptimal compared to the randomized bound of $\poly(\phi^{-1}) \cdot 2^{O(\sqrt{\log n\cdot \log\log n})}$. More importantly, it did not achieve processing/query tradeoffs as in \cite{GKS17}. Therefore, for many applications of the deterministic expander routing, such as the aforementioned results for triangle detection and triangle enumeration, it induces an additional factor of $2^{O(\log^{2/3} n \cdot \log^{1/3}\log n )}$, leaving a substantial gap between randomized and deterministic algorithms.

\subsection{Our Contribution}
The main contribution of our paper is a deterministic expander routing algorithm that matches the randomized bound of \cite{GKS17} with preprocessing/routing tradeoffs.

\begin{theorem}\label{thm:main} Given a graph $G=(V,E)$ be a $\phi$-expander. Let $\epsilon > 0$ be a constant. There exists an algorithm that preprocesses the graph in $n^{O(\epsilon)} + \poly(\phi^{-1})\cdot (\log n)^{O(1/\epsilon)}$ time such that each subsequent routing instance can be solved in $\poly(\phi^{-1})\cdot (\log n)^{O(1/\epsilon)}$ rounds. \end{theorem}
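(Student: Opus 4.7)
The natural plan is to follow the recursive ``virtual graph'' paradigm of Ghaffari--Kuhn--Su, but replace every randomized primitive with a deterministic one and shape the recursion so that it terminates in $O(1/\epsilon)$ levels rather than $\sqrt{\log n / \log\log n}$ levels. Concretely, I would build, during preprocessing, a hierarchy $G = G_0 \supseteq G_1 \supseteq \cdots \supseteq G_k$ with $k = \Theta(1/\epsilon)$, where $G_{i+1}$ is a \emph{virtual expander} on roughly $n^{1 - (i+1)\epsilon}$ nodes, embedded into $G_i$ so that each virtual vertex is simulated by a cluster of $n^\epsilon$ real vertices and each virtual edge corresponds to a real short low-congestion path. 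Each level is obtained by a deterministic expander decomposition / expander embedding step (of the Chang--Saranurak flavor) together with a deterministic balanced matching/aggregation gadget that lets an $n^\epsilon$-sized cluster simulate one step of the virtual graph in $\polylog(n)$ real rounds. The base graph $G_k$ has only $n^{O(\epsilon)}$ vertices, so during preprocessing we can afford to spend $n^{O(\epsilon)}$ rounds learning its topology at every vertex and precomputing all-pairs routing tables for it.

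Given this hierarchy, a query is answered by a standard ``push up, resolve, push down'' schedule. Tokens are first aggregated along the cluster hierarchy into their representatives in $G_k$; these representatives then execute routing inside the small base graph using the precomputed tables; finally, tokens are pushed back down to their destinations. Each level of the hierarchy contributes a $\polylog(n) \cdot \poly(\phi^{-1})$ overhead from simulating one virtual round, and there are $k = O(1/\epsilon)$ levels, so the per-query cost comes out to $\poly(\phi^{-1}) \cdot (\log n)^{O(1/\epsilon)}$. The base-level routing contributes at most $n^{O(\epsilon)}$ to preprocessing plus $\polylog(n)^{O(1/\epsilon)}$ to the query (after the base-graph tables are already compiled).

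The main technical obstacle is the deterministic per-level step: in the randomized setting, a uniformly random matching delivers the concentration needed to embed a virtual expander edge in $\polylog(n)$ rounds with low congestion, and this is what Ghaffari--Kuhn--Su exploit. Derandomizing this \emph{without} paying the $2^{O(\log^{2/3} n \cdot \log^{1/3} \log n)}$ overhead of Chang--Saranurak requires a new deterministic routing/matching primitive whose cost per level is only $\poly(\phi^{-1}, \log n)$. I would build this primitive from the paper's claimed equivalence between expander routing and sorting: once one has a deterministic polylogarithmic-round sorting procedure on an expander (with appropriate preprocessing), one can implement token aggregation, cluster load-balancing, and virtual-edge simulation deterministically in $\polylog(n)$ rounds per level. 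The delicate point is circularity -- sorting is being reduced to expander routing and vice versa -- so the argument must be stratified: the sorting primitive used at level $i$ should be built from the routing infrastructure at strictly shallower levels $<i$, so the recursion bottoms out at the brute-force base $G_k$.

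With the per-level deterministic primitive in hand, the two round bounds in the theorem follow by summing over the $O(1/\epsilon)$ levels: preprocessing costs $n^{O(\epsilon)}$ (for the base tables and for the hierarchy construction at the top level) plus $\poly(\phi^{-1})\cdot (\log n)^{O(1/\epsilon)}$ (for constructing the remaining levels and compiling the deterministic sorting/matching data structures), and each subsequent query costs $\poly(\phi^{-1})\cdot (\log n)^{O(1/\epsilon)}$. The $\phi$ dependence is inherited unchanged from the per-level embedding and from the cost of deterministic expander decomposition, just as in \cite{GKS17,CPZ21}.
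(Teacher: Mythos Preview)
Your high-level skeleton---an $O(1/\epsilon)$-level hierarchy of virtual expanders, brute force at the base, and stratified sorting/routing to break the circularity---matches the paper's architecture. But the proposal does not supply either of the two technical mechanisms that actually achieve the bounds, and as written it would reproduce the Chang--Saranurak $2^{O(\log^{2/3} n\,\log^{1/3}\log n)}$ bound rather than improve on it.

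The first gap is in hierarchy construction. You propose building each level by ``a deterministic expander decomposition / expander embedding step (of the Chang--Saranurak flavor).'' But a single invocation of that primitive already takes $\log^{O(1/\epsilon)} n$ rounds and outputs expanders of conductance $\log^{-O(1/\epsilon)} n$; recursing on those expanders for $O(1/\epsilon)$ further levels compounds the quality loss to $\log^{O(1/\epsilon^2)} n$, which is exactly the bound you are trying to beat. The paper's fix (its ``one-shot hierarchical decomposition'') is to observe that the Chang--Saranurak embedding algorithm, run \emph{once} at the top, already produces internally a full $O(1/\epsilon)$-level hierarchy as a byproduct of its own recursion, and each of those internal levels contributes only a $\polylog(n)$ quality loss rather than $\log^{O(1/\epsilon)} n$. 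Reusing that internal hierarchy directly is the whole content of overcoming Challenge~I; it is absent from your plan, and the sentence ``requires a new deterministic routing/matching primitive whose cost per level is only $\poly(\phi^{-1},\log n)$'' is identifying the problem, not solving it.

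The second gap is in the per-query step. Your ``push up, resolve, push down'' hides precisely the operation that breaks the preprocessing/query tradeoff in Chang--Saranurak: at each internal node $X$ with $k=n^\epsilon$ child clusters, tokens destined for different children must be delivered to those children, and the prior deterministic method iterates over all $O(k^2)$ cluster pairs, costing $n^{O(\epsilon)}$ \emph{per query}. The paper's solution is to precompute, for each $X$, a \emph{shuffler}: a sequence of $O(\log n)$ matchings obtained from a cut-matching game whose cut player operates on the contracted $k$-vertex cluster graph but whose matching player embeds paths in $X$ itself. At query time tokens are deterministically pushed along these precomputed matchings---simulating a mixing walk at cluster granularity---into a ``dispersed configuration''; a meet-in-the-middle with dummy tokens, aligned via expander sorting inside each child, then finishes the route with no $\poly(k)$ factor. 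You correctly point at sorting as the relevant tool and at stratification for the circularity, but without the coarse-grained shuffler there is no mechanism that disperses tokens across $k$ children in $\polylog(n)$ query rounds.
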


Here we see that a single routing instance can be solved in time similar to the bounds obtained by \cite{GKS17} by setting $\epsilon = \sqrt{\log \log n / \log n}$ in \cref{thm:main}.

\begin{corollary}\label{cor:single-routing} A single expander routing instance can be solved in $\poly(\phi^{-1}) \cdot 2^{{O}(\sqrt{\log n \cdot \log \log n})}$ rounds deterministically without preprocessing. \end{corollary}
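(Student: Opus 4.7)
The plan is to derive the corollary directly from \cref{thm:main} by optimizing the parameter $\epsilon$. The total cost to answer a single routing instance, counting both preprocessing and query, is at most
\[
n^{O(\epsilon)} + \poly(\phi^{-1}) \cdot (\log n)^{O(1/\epsilon)},
\]
so the task reduces to choosing $\epsilon$ that balances the two summands up to constants in the exponent.

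First I would rewrite both terms in the form $2^{(\cdot)}$. We have $n^{O(\epsilon)} = 2^{O(\epsilon \log n)}$ and $(\log n)^{O(1/\epsilon)} = 2^{O((1/\epsilon) \log \log n)}$. Setting these exponents equal leads to $\epsilon^2 = \Theta(\log \log n / \log n)$, so I would pick $\epsilon := \sqrt{\log \log n / \log n}$ (for $n$ large enough this is in $(0,1)$, which is the range permitted by \cref{thm:main}; for the finitely many small $n$ the bound is trivial).

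With this choice, a direct substitution gives $\epsilon \log n = \sqrt{\log n \cdot \log \log n}$ and $(1/\epsilon) \log \log n = \sqrt{\log n \cdot \log \log n}$, so both terms equal $2^{O(\sqrt{\log n \cdot \log \log n})}$ up to constant factors in the exponent. The $\poly(\phi^{-1})$ factor is preserved in the query term, yielding the claimed bound of $\poly(\phi^{-1}) \cdot 2^{O(\sqrt{\log n \cdot \log \log n})}$ for executing preprocessing followed by a single query.

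There is no real obstacle here beyond the arithmetic: the corollary is a parameter-tuning consequence of the main theorem, and the only thing to check is that $\epsilon$ is treated as a function of $n$ rather than a fixed constant, which is fine because the $O(\cdot)$ inside $n^{O(\epsilon)}$ and $(\log n)^{O(1/\epsilon)}$ comes from absolute constants hidden in \cref{thm:main}, not constants depending on $\epsilon$ itself. As long as the statement of \cref{thm:main} is interpreted uniformly in $\epsilon$ (which is the natural reading, and which I would verify when writing the proof of the theorem), the calculation above is the whole argument.
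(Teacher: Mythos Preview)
Your proposal is correct and matches the paper's own argument exactly: the paper simply remarks that setting $\epsilon = \sqrt{\log\log n / \log n}$ in \cref{thm:main} balances the two terms to obtain the stated bound. Your additional care about treating $\epsilon$ as a function of $n$ and verifying that the hidden constants are absolute is appropriate (the paper's statement of \cref{thm:main} says ``constant'' but then applies it with this non-constant choice).
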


\cref{cor:single-routing} is an improvement over the previous deterministic expander routing algorithm of~\cite{ChangS20}, which costs $\poly(\phi^{-1}) \cdot 2^{O({\log^{2/3} n} \log^{1/3} \log n)}$ rounds.

Expander routing is extremely useful as a fundamental \emph{communication primitive} in designing distributed algorithms in expander graphs. Expander routing has been used to design efficient MST and minimum cut algorithms~\cite{GKS17}, efficient subgraph finding algorithms~\cite{CPZ21}, and efficient algorithms for sorting, top-$k$ frequent elements, and various data summarization tasks~\cite{su2019distributed} in expander graphs. 
Expander routing allows us to transform a large class of work-efficient \pram algorithms into \congest algorithms with small overhead~\cite{GL18}. Expander routing has also been utilized in a smooth analysis for distributed MST~\cite{chatterjee2020distributed} and to design sparsity-aware algorithms for various shortest path computation tasks in the \congest model~\cite{censor2021sparsity}.

Our improved expander routing algorithm immediately leads to improved deterministic upper bounds for all of the aforementioned applications. In particular, our result implies that an MST of an $\phi$-expander can be computed in $\poly(\phi^{-1}) \cdot 2^{O(\sqrt{\log n \log \log n})}$ rounds deterministically, improving upon the previous deterministic bound $\poly(\phi^{-1}) \cdot 2^{O({\log^{2/3} n} \log^{1/3} \log n)}$~\cite{ChangS20} and nearly matching the current randomized bound $\poly(\phi^{-1}) \cdot 2^{O(\sqrt{\log n})}$~\cite{GKS17,GL18}.

\begin{corollary} An MST of an $\phi$-expander can be computed in  $\poly(\phi^{-1}) \cdot 2^{{O}(\sqrt{\log n \log \log n})}$ rounds deterministically. \end{corollary}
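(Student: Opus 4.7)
The plan is to follow the Boruvka template used by Ghaffari, Kuhn, and Su, replacing their randomized expander routing primitive with the deterministic one given by \cref{cor:single-routing}. Initially every vertex is its own fragment, and we run $O(\log n)$ Boruvka phases; in each phase, every current fragment must identify its minimum-weight outgoing edge (one whose two endpoints lie in distinct fragments) and then merge with the fragment across that edge. Since the number of fragments shrinks by at least a factor of two per phase, $O(\log n)$ phases suffice to reduce to a single fragment, at which point the union of the chosen edges is the MST.

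The nontrivial work per phase is aggregating information inside each fragment using communication that only sees the underlying $\phi$-expander $G$, noting that the fragments themselves need not be well-connected. I would carry this out by a bounded number of expander routing calls on $G$ together with a logarithmic-depth convergecast. Each vertex $v$ first computes, for each distinct fragment ID appearing among its neighbors, its minimum outgoing edge to that fragment (at most $\deg(v)$ tokens). These tokens are redistributed by an expander routing call so that tokens sharing a destination fragment ID land on a common contiguous segment of vertices (essentially a sort, which is legitimate by the sorting/routing equivalence mentioned in the introduction). A binary-tree aggregation on that segment produces the per-fragment minimum outgoing edge, and a matching broadcast delivers it back to every vertex of the fragment so that the local Boruvka merge can be executed. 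The graph $G$ is never modified, so it remains a $\phi$-expander across all phases and \cref{cor:single-routing} applies to every routing call.

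Putting the pieces together, the cost is $O(\log n)$ phases multiplied by a polylogarithmic number of routing/aggregation operations per phase, each taking $\poly(\phi^{-1}) \cdot 2^{O(\sqrt{\log n \log \log n})}$ rounds deterministically; the polylogarithmic prefactors are absorbed by the sub-exponential term, giving the claimed bound. The main obstacle, as already encountered in \cite{GKS17,ChangS20}, is organizing the per-fragment aggregation so that every vertex respects the $\deg(v)$ source/sink budget demanded by the routing primitive, and so that no fragment is bottlenecked by a single ``leader'' vertex whose degree is too small to receive all of its fragment's information; the sort-then-convergecast arrangement sketched above resolves this, and because our routing primitive is now deterministic the derandomization of the MST computation follows with no union-bound argument required.
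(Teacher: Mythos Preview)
Your proposal is correct in spirit and would yield the stated bound, but it takes a more laborious route than the paper. The paper's proof is a one-liner: it simply invokes the fact, already established in~\cite{ChangS20}, that an MST can be computed deterministically using polylogarithmic rounds plus polylogarithmic many invocations of expander routing, and then plugs in the new deterministic routing bound from \cref{cor:single-routing}. In other words, the deterministic reduction from MST to expander routing was already done in prior work; only the routing primitive itself needed to be sped up.

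You instead re-derive that reduction from scratch via Boruvka, sorting, and segment-wise convergecast. This is a legitimate alternative and has the virtue of being self-contained, but a couple of your steps are sketched at a level that would need more care in a full writeup: the ``binary-tree aggregation on that segment'' is over a contiguous range of vertex IDs, not a connected subgraph, so each level of that tree is itself an expander-routing instance rather than a local step; and the appeal to the sorting/routing equivalence is fine but note that this equivalence is a contribution of the present paper, not prior work. None of this breaks the argument, since all of it stays within a polylogarithmic number of routing calls, which is absorbed by the $2^{O(\sqrt{\log n\log\log n})}$ term.
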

\begin{proof}
Similar to the randomized MST algorithm in~\cite{GKS17}, it was shown in~\cite{ChangS20} that an MST can be constructed using polylogarithmic deterministic rounds and invocations of expander routing. Therefore, an MST of an $\phi$-expander can be computed in  $\poly(\phi^{-1}) \cdot 2^{{O}(\sqrt{\log n \log \log n})}$ rounds deterministically by implementing the MST algorithm using the expander routing algorithm of \cref{cor:single-routing}.
\end{proof}

Expander routing is also useful in designing distributed algorithms in \emph{general graphs} indirectly via the use of \emph{expander decompositions}. An $(\epsilon, \phi)$ expander decomposition of a graph removes at most $\epsilon$ fraction of the edges in such a way that each remaining connected component induces a $\phi$-expander. 
In the \congest model, this decomposition is commonly applied in a divide-and-conquer approach, where efficient expander routing algorithms are employed to solve subproblems within $\phi$-expanders. This approach has been particularly successful in the area of distributed subgraph finding~\cite{CPZ21,ChangS20,censor2020distributed,censor2021tight,censor2022deterministic,censor2022quantum,eden2022sublinear,izumi2020quantum,le2021lower}. A different use of expander decompositions and routing is to establish barrier for proving lower bounds in \congest~\cite{eden2022sublinear}.

Again, our improved deterministic expander routing algorithm leads to improved bounds for the aforementioned applications. In particular, we obtain a near-optimal $\tilde{O}(n^{1-2/k})$-round deterministic algorithm for $k$-clique enumeration in general graphs, improving the previous deterministic upper bound $n^{1-2/k+o(1)}$~\cite{censor2022deterministic}.

\begin{corollary}\label{cor:clique_listing} There is a deterministic algorithm that list all $k$-cliques in  $\tilde{O}(n^{1-2/k})$ rounds deterministically. \end{corollary}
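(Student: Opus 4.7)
The plan is to instantiate the framework of Censor-Hillel, Leitersdorf, and Vulakh (PODC 2022) — which reduces deterministic $k$-clique enumeration in general graphs to expander routing on the pieces of an expander decomposition — using our new deterministic expander routing from \cref{thm:main} in place of the Chang-Saranurak routing primitive that induced the $n^{o(1)}$ slack in their previous bound. The CHLV algorithm already has polylogarithmic overhead on top of its routing calls, so it suffices to verify that the cost per routing query plus the amortized cost of preprocessing fits inside the target budget of $\tilde{O}(n^{1-2/k})$.

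First I would recall the shape of CHLV's algorithm: (i) compute a deterministic expander decomposition of the input graph into induced $\phi$-expanders with $\phi = 1/\polylog(n)$, (ii) within each expander, run a sequence of token-routing invocations that implement the subgraph-finding subroutines, and (iii) handle the few edges crossing between expanders. In their accounting, steps (i) and (iii) already fit within $\tilde{O}(n^{1-2/k})$ rounds; the $n^{o(1)}$ blow-up in their deterministic bound came entirely from paying $2^{O(\log^{2/3}n\cdot\log^{1/3}\log n)}$ for each routing call in step (ii) via the Chang-Saranurak primitive. Hence the round budget that our new routing subroutine must meet within each expander is $\tilde{O}(n^{1-2/k})$.

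Next I would apply \cref{thm:main} separately on each expander of the decomposition. Choose a small constant $\epsilon > 0$ depending on $k$, small enough that $n^{O(\epsilon)} \le n^{1-2/k}$ (for instance $\epsilon = 1/(2k)$ once the hidden constant is tracked). Then the one-time preprocessing per expander costs $n^{O(\epsilon)} + \poly(\phi^{-1})\cdot(\log n)^{O(1/\epsilon)} = \tilde{O}(n^{1-2/k})$ rounds, since $\phi^{-1}=\polylog(n)$ and $\epsilon$ is a constant. Each subsequent routing query then costs $\poly(\phi^{-1})\cdot(\log n)^{O(1/\epsilon)}=\polylog(n)$ rounds. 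Summing over all routing queries used by CHLV inside the expander — of which there are at most $\tilde{O}(n^{1-2/k})$ in total — yields total query cost $\tilde{O}(n^{1-2/k})$. Combined with the preprocessing and the non-routing overhead, the overall round complexity is $\tilde{O}(n^{1-2/k})$.

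The main obstacle I anticipate is a careful audit of the CHLV reduction to confirm two points: (a) all routing calls inside a given expander can share a single preprocessing phase, so that the $n^{O(\epsilon)}$ cost is amortized over all queries rather than paid per query, and (b) the deterministic expander decomposition that CHLV invoke is indeed available in $\tilde{O}(n^{1-2/k})$ rounds once our improved routing is plugged back into its recursive subcalls, without reintroducing any $n^{o(1)}$ factor. Both amount to bookkeeping on top of the CHLV framework, but they are precisely the junctures at which the $n^{o(1)}$ slack arose in the previous deterministic bound, and so they are the points that must be re-examined when swapping in the new primitive of \cref{thm:main}.
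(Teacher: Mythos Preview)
Your proposal is correct and follows essentially the same approach as the paper: instantiate the Censor-Hillel--Leitersdorf--Vulakh framework with the new routing of \cref{thm:main} at a small constant $\epsilon$, so that preprocessing fits in $\tilde{O}(n^{1-2/k})$ and each routing call is $\polylog(n)$. The one place where the paper is more explicit is your obstacle (b): rather than plugging the new routing back into the decomposition's recursion, the paper simply invokes the existing tradeoff of the Chang--Saranurak deterministic $(\epsilon,\phi)$ expander decomposition, choosing the internal parameter $\gamma$ to be a sufficiently small constant so that $\phi=1/\polylog(n)$ while the decomposition cost $n^{O(\gamma)}$ already sits below $\tilde{O}(n^{1-2/k})$---no circularity arises.
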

\begin{proof}
By slightly modifying the algorithm of~\cite{censor2022deterministic}, we know that all $k$-cliques can be listed using $\tilde{O}(n^{1-2/k})$ deterministic rounds and invocations of expander routing on $\phi$-expanders with $\phi = 1/\polylog(n)$. The modification needed is to alter the parameters for the deterministic $(\epsilon, \phi)$ expander decomposition in~\cite[Theorem 5]{censor2022deterministic}. Here we want to make $\phi = 1/\polylog(n)$. 

As discussed in~\cite{ChangS20}, the deterministic $(\epsilon, \phi)$ expander decomposition algorithm admits the following tradeoff: for any $1 \geq \gamma \geq \sqrt{\log \log n / \log n}$, there is a deterministic expander decomposition algorithm with round complexity $\epsilon^{-O(1)}  \cdot n^{O(\gamma)}$ with parameter $\phi = \epsilon^{O(1)}  \log^{-O(1/\gamma)}n$. In the $k$-clique enumeration algorithm of~\cite{censor2022deterministic}, the parameter $\epsilon$ is set to be some constant.
By selecting $\gamma$ to be a sufficiently large constant, we can ensure that $\phi = \epsilon^{O(1)}  \log^{-O(1/\gamma)}n = 1/\polylog(n)$ and the round complexity $\epsilon^{-O(1)}  \cdot n^{O(\gamma)}$ for constructing the decomposition is upper bounded by $\tilde{O}(n^{1-2/k})$.

If we implement the $k$-clique enumeration algorithm with the $\poly(\phi^{-1}) \cdot 2^{O(\sqrt{\log n \log \log n})}$-round deterministic expander routing algorithm of~\cite{ChangS20}, then the overall round complexity for  $k$-clique enumeration is $\tilde{O}(n^{1-2/k}) \cdot 2^{O(\sqrt{\log n \log \log n})} = n^{1-2/k+o(1)}$.
To improve the upper bound to $\tilde{O}(n^{1-2/k})$, we use our new deterministic expander routing algorithm. Specifically, by selecting $\epsilon$ to be a sufficiently small constant in \cref{thm:main}, we can ensure that each routing instance can be solved in $\poly(\phi^{-1})\cdot (\log n)^{O(1/\epsilon)} = \polylog(n)$ rounds and the cost  $n^{O(\epsilon)} + \poly(\phi^{-1})\cdot (\log n)^{O(1/\epsilon)}$ of the preprocessing step is upper bounded by $\tilde{O}(n^{1-2/k})$.
\end{proof}

Our algorithm is optimal up to a polylogarithmic factor, as the upper bound $\tilde{O}(n^{1-2/k})$ for $k$-clique enumeration in \cref{cor:clique_listing} matches the $\tilde{\Omega}(n^{1-2/k})$ lower bound~\cite{fischer2018possibilities,IzumiL17}. Previously, such an upper bound was only known to be achievable in the randomized setting~\cite{censor2021tight}. Moreover, for $k = 4$, our algorithm is tight even for the easier $k$-clique \emph{detection} problem, due to the $\tilde{\Omega}(\sqrt{n})$ $4$-clique detection lower bound of~\cite{czumaj2020detecting}.

\cref{thm:main,cor:clique_listing} resolve an open question of Censor-Hillel,\footnote{Open Problem 2.2 of \url{https://arxiv.org/abs/2203.06597v3}.} which asks whether the cost of each instance of expander routing in the triangle enumeration algorithm can be made both \emph{deterministic} and has a \emph{polylogarithmic} round complexity. \Cref{cor:clique_listing} yields
a deterministic triangle enumeration algorithm that is optimal up to a polylogarithmic factor.

\subsection{Previous Results and Key Challenges}
For ease of discussion, in this section, we assume that our input graph has an $O(1)$ maximum degree and is an expander with constant conductance.

\paragraph{Randomized Approach}  We first summarize at a high level the general idea of \cite{GKS17} and explain the difficulty of de-randomization. Roughly speaking, the general idea is to partition the current base graph $X$ into $k=n^{\epsilon}$ parts $X_1, \ldots, X_k$ with roughly equal sizes. For each part $X_i$, by using random walk techniques, they embed a virtual Erd\H{o}s--Renyi graph $G(|X_i|,p)$ onto it for $p=O(\log n/  |X_i|)$, where all the virtual edges correspond to a set of paths $\mathcal{P}$ with $\polylog(n)$ {\it congestion} and {\it dilation} in $X$, where the congestion $c$ is defined to be $c=\max_{e} |\{P \ni e \mid P \in \mathcal{P} \} |$ and the dilation $d$ is defined to be $d= \max_{P \in \mathcal{P}} |P|$. The quantity $c+d$ is known as the quality of $\mathcal{P}$ or the quality of the embedding, as one round of communication in the virtual graph can be simulated within $O(cd)$ rounds in the base graph deterministically, and $\tilde{O}(c+d)$ rounds with randomization  \cite{LMR94, Ghaffari15}. As Erd\H{o}s--Renyi graphs are good expanders, they may recurse on each $X_i$ by viewing the base graph as the virtual graph $G(|X_i|,p))$ to further partition $X_i$ into $k$ parts and embed a $G(n,p)$ on each of them.  The hierarchy goes on for $O(1/\epsilon)$ levels. Since each level only incurs a $\polylog(n)$ blow up on the congestion and dilation. A set of paths of subgraphs in any level with quality $c+d$ corresponds to a set of paths in the original graph of quality $(c+d)\cdot \log^{O(1/\epsilon)} n$.  With such a hierarchy embedding structure, they showed a routing instance can be routed using paths that consist of edges in the virtual graphs across different levels with quality $\log^{O(1/\epsilon)} n$, which translates to paths of quality $(\log^{O(1/\epsilon)} n)^2 = \log^{O(1/\epsilon)} n$ in the original graph.

Now we examine the deterministic routing algorithm of \cite{ChangS20} and address the reasons why it did not obtain the randomized bound and the preprocessing/query tradeoffs.%

\paragraph{Challenge I -- Speed} 
At a high level, the deterministic routing algorithm of \cite{ChangS20} still follows the same recursive framework used in the randomized algorithm of~\cite{GKS17}. While a low-congestion and low-dilation simultaneous embedding of virtual expanders into $X_1, \ldots, X_k$ can be obtained easily by random walks, obtaining such a simultaneous embedding of expanders is much more difficult in the deterministic setting. In \cite{ChangS20}, low-congestion and low-dilation simultaneous embedding of virtual expanders is computed recursively using an approach similar to that of~\cite{chuzhoy2020deterministic} based on the \emph{cut-matching game} of~\cite{KKOV07}. 

We give a brief and informal introduction to how the cut-matching game works. The cut-matching game is a procedure that returns a balanced sparse cut or a low-congestion and low-dilation embedding of a virtual expander. The algorithm works by iteratively finding a sparse cut of the virtual graph and then finding a low-congestion and low-dilation embedding of a large matching between the two parts of the cut. If we cannot obtain a large matching at some stage of the algorithm, then a balanced sparse cut can be obtained. Otherwise, the virtual graph is guaranteed to be an expander. In \cite{ChangS20,chuzhoy2020deterministic}, the implementation of the sparse cut algorithm in the cut-matching game is done recursively with a recursive structure similar to that of~\cite{GKS17} where recursion is applied to multiple smaller instances. 

Due to the recursive nature of the approach discussed above, the deterministic simultaneous embedding of virtual expanders in~\cite{ChangS20} has a much worse guarantee compared to the randomized approach of~\cite{GKS17}: Specifically, within $(n^{O(\epsilon)}+\log^{O(\log (1/\epsilon))} n)$ rounds, the expanders obtained have conductance of $1/(\log^{O(1/\epsilon)} n)$.
As discussed earlier, to build the hierarchical structure needed to solve the routing problem,  one has to repeat the process of simultaneous embedding of virtual expanders recursively, and the depth of recursion is $O(1/\epsilon)$. Since each level incurs a blow-up of $\log^{O(1/\epsilon)} n$ factor on the routing quality, from the bottom to the top, it introduces a $\log^{O(1/\epsilon^2)} n$ blow-up in total, as opposed to $\log^{O(1/\epsilon)} n$ in the randomized construction of~\cite{GKS17}.  By balancing the terms $(n^{O(\epsilon)}+\log^{O(\log (1/\epsilon))} n)$ and $\log^{O(1/\epsilon^2)} n$, it turns out setting $\epsilon = (\log \log n/\log n)^{1/3}$ yields the best possible bound of $2^{O(\log^{2/3}\cdot \log^{1/3}\log n)}$, which is sub-optimal compared to the randomized algorithms of~\cite{GKS17}.

\paragraph{Challenge II -- Preprocessing/Query Tradeoffs} In the randomized routing algorithm of~\cite{GKS17}, it is possible to obtain a preprocessing/query tradeoff, where the preprocessing phase builds the hierarchy of expander embeddings in $(n^{O(\epsilon)}+\log^{O(\log (1/\epsilon))} n)$ rounds. Each routing query in the query phase can be done in $\log^{O(\log (1/\epsilon))} n$ rounds. Very different from the randomized approach, the deterministic routing algorithm of~\cite{ChangS20} still requires  $(n^{O(\epsilon)}+\log^{O(\log (1/\epsilon^2))} n)$ rounds for every routing query, so a tradeoff between preprocessing and query cannot be achieved.

We briefly explain why the disparity occurs. In the randomized setting~\cite{GKS17}, the \emph{same} collection of routing paths constructed in the preprocessing step can be reused for \emph{all} subsequent routing requests that are \emph{oblivious} to the randomness used in the preprocessing step. Such an oblivious assumption can be made without loss of generality by first using random walks to redistribute the messages to be routed. In the deterministic setting~\cite{ChangS20}, the paths for routing the messages are recomputed from scratch for each routing request, as we explain below.

Suppose the current base graph is $X$. Let $X_1 \ldots X_k$ be the children of $X$ in the hierarchy. We classify the tokens needed to be routed $T_1 \ldots T_k$ based on their destinations, where $T_i$ is the set of tokens whose destinations are in $X_i$. The routing task of the current level of recursion is to route all the tokens $T_i$ to $X_i$. Once such a task has been achieved, we can just recurse in each $X_i$.   
The deterministic algorithm of \cite{ChangS20} resolves this task by iterating over all the $O(k^2)$ $X_i$-$X_j$ pairs sequentially. For each $X_i$-$X_j$ pair, they find a set of paths to send the tokens $T_j$ from $X_i$ to $X_j$ with quality $\poly(k) \cdot 2^{O(\sqrt{\log n})}$ by adapting the \emph{maximal paths} algorithm in \cite{GPV93}, which were originally used to compute matching and DFS in \pram. As a result, there is a $\poly(k) = n^{O(\epsilon)}$ dependency on the query complexity, which is not needed in the randomized algorithm of~\cite{GKS17}.

\subsection{Our Approach}\label{sec:our-approach}
We describe how we overcome the above two challenges as follows.  First, to get the bound that matches the randomized algorithm of \cite{GKS17}, we do a one-shot hierarchical decomposition.

\paragraph{One-Shot Hierarchical Decomposition} Instead of applying the deterministic simultaneous expander embedding framework~\cite{ChangS20} as a black box and recursing on each embedded expander to build the embedding hierarchy, we observe that for the algorithm of \cite{ChangS20} to return such an embedding of expanders in one level, the algorithm already builds some kind of a hierarchy of expander embedding during the recursive construction. Therefore, a natural idea for improving the deterministic routing algorithm of~\cite{ChangS20} is to run the simultaneous expander embedding algorithm {\bf only once} in the base level and use the hierarchical decomposition constructed in the algorithm to solve the routing problem in a way similar to that of~\cite{GKS17,ChangS20}. To realize this idea, we need to overcome some technical difficulties. In particular, here each level in the hierarchy not only introduces a loss in the conductance guarantee but also a loss in the number of vertices covered by the expander embedding, as the hierarchical decomposition only embeds expanders on a constant fraction of vertices in each level. One observation of why such an approach is still plausible is that the depth of the hierarchy is $O(1/\epsilon)$, so the expanders at the bottom level consist of $1/2^{O(1/\epsilon)}$ fraction of the vertices. Therefore, it might be possible to find delegates in those bottom-level expanders, which we will refer to as the {\it best nodes},  for every vertex in the original graph in such a way that each best node represents at most $2^{O(1/\epsilon)}$ vertices.  This would incur at most $2^{O(1/\epsilon)}$ blow up on the congestion. Moreover, the edges in the virtual expanders in each level of the hierarchy correspond to paths of quality at most $\polylog(n)$ in the parent level. The total blow up on the quality is at most $(\log^{O(1/\epsilon)} n)^2 = \log^{O(1/\epsilon)} n$. This is in contrast with the algorithm of \cite{ChangS20}, which has a blow-up of $\log^{O(1/\epsilon^2)} n$. 

We define additional tasks and reduce the original problem to these tasks to implement the delegation idea. However, for the ease of illustration in the introduction, let us assume for now a base graph $X$ is a partition into $X_1 \ldots X_k$, where an expander can be embedded into each $X_i$. Also, the hierarchy has been constructed recursively on the expander of each $X_i$. 

\paragraph{A Randomized, Meeting in the Middle Approach} Second, to achieve a preprocessing/query tradeoff, given base graph $X$, we need a routing algorithm that has no polynomial dependencies on $k$ that routes the tokens to the corresponding parts. We first describe a randomized version of our approach and explain how to de-randomize it: Perform lazy random walks simultaneously for all the tokens together until they mix. For tokens destined to $X_i$ (call these tokens $T_i$), they are now roughly equally distributed across different parts. Suppose that we call such a configuration the {\it dispersed configuration} and the desired configuration the {\it final configuration}. To route from the dispersed configuration to the final configuration, we start with the final configuration, transform it into the dispersed configuration by the same method, and reverse the paths. The only problem left now is that the two dispersed configurations can be different, and we still need to match up $T_i$ tokens with $T'_i$ tokens for each $i$ inside each part $X_j$. Here, we can then embed a sorting network into each $X_j$ to sort the tokens so they are aligned to match up (see the Expander Sorting paragraph at the end of the section). 

\paragraph{De-randomization by Pre-embeddings of Shufflers} Now the only issue left is to remove the randomness needed in the process of routing tokens from any configuration to a dispersed configuration. The cut-matching game, introduced by \cite{KRV09}, is a potential deterministic way to achieve a similar effect of random walks. Roughly speaking, the goal of the game is to produce {\it a shuffler}, which consists of matchings of virtual edges $M^1, M^2, \ldots, M^\lambda$ such that the natural random walk on the sequence of matchings converges to a nearly uniform distribution from any initial distribution, where each $M^r$ corresponds to a set of paths of low congestion and dilation (i.e. if $(u,v) \in M^r$ then there is a $u$-$v$ path in the set). The natural random walk defined by  $M^1, M^2, \ldots, M^\lambda$ is a random walk such that for $r = 1,\ldots,\lambda$, if the current vertex $v$ is matched to $u$ then we move to $u$ (through its corresponding path) with probability $1/2$, and stay at $v$ with probability $1/2$. If $v$ is not matched, then it stays at $v$. 

Once we have such a shuffler, we can distribute the tokens deterministically according to the behavior of a lazy random walk. In particular, at each node $u$, consider if the number of $T_i$-tokens that are on $u$ is $x_i$. For $r =1 \ldots,\lambda$, if $u$ is matched to $v$ in $M^r$, we need to send $x_i/2$ $T_i$-token from $u$ to its mate $v$. Assuming the tokens are splittable (to be fractional). In the end, every node would hold a roughly equal amount of $T_i$ tokens due to the mixing property of the shuffler. This would lead to the dispersed configuration. 

\paragraph{Coarse-grained Shufflers}
However, the tokens are not splittable. To this end, instead of building a shuffler on $X$, we build a shuffler on $Y$, where $Y$ is a multi-graph obtained from $X$ by contracting each $X_i$. By doing such a coarse-grained shuffling, the rounding error due to the integrality of the tokens becomes negligible when $|X_i| \gg |Y|$. 

Yet, directly running the cut-matching game on $Y$ will lead to insufficient bandwidth for token distribution. If $X_j$ is matched $X_{j'}$ by the matching player, then we need to send $x_i/ 2$ $T_i$-tokens from $X_j$ to $X_{j'}$ in the simulation of lazy random walk, where $x_i$ is the number of $T_i$-tokens on $X_j$. Since each matched edge only corresponds to one path and it can be the case that $x_i = \omega(1)$, the bandwidth may not be enough.

To resolve this, we implement the cut player on $Y$ and the matching player on $X$ to ensure the matching player finds enough paths. This will lead the algorithm to produce a shuffler consisting of matchings of $X$ along their path embeddings of low congestion and dilation. The matchings of $X$ can be naturally translated to fractional matchings of $Y$ by normalization. We then simulate the token distribution on $Y$ according to these fractional matchings, using the path embeddings in $X$. %

\paragraph{Routing to Shuffler Portals} Once the shufflers are constructed, it will be ready to process queries of routing instances.  Recall that paths that correspond to a matching of the shuffler will be used to transport the tokens. The endpoint of such paths is known as {\it portals}. To route the tokens according to the fractional matchings, the main task is to send them to the corresponding portals so that they can follow the paths to the corresponding parts. For example, suppose there are $x_i$ $T_i$-tokens on $X_j$ for each $i$. If according to the fractional matching, we need to send $x_{i,j'}$ tokens to $X_{j'}$ then we need to route these $x_{i,j'}$ tokens to the portals in $X_i$. The routing tasks stemming from processing a fractional matching now become parallel instances of the routing task on each $X_i$.  The cut-matching games end in $O(\log n)$ iterations. So the problem recurses into $O(\log n)$ of parallel routing instances of the next level. To load balance the tokens over the portals, we again use the expander sorting technique to resolve it without dependency on $\poly(k)$. As a result, a query can be answered without dependency on $\poly(k)$.

\paragraph{Expander Sorting}
One particular subroutine---deterministic expander sorting---serves as a core tool in our routing algorithm. It has been used in, e.g., the aforementioned procedure for routing tokens to shuffler portals as well as other procedures such as re-writing token destinations and solving the problem within leaf components.

The goal of expander sorting is to re-distribute all tokens among the vertices such that, if we collect all the tokens from the vertex with the smallest ID to the vertex with the largest ID, these tokens' pre-defined keys are sorted in non-decreasing order. Su and Vu~\cite{su2019distributed} considered a slightly simpler version of the problem where each vertex holds a unique ID from $[1,n]$ and gave a randomized algorithm for it. Here, the IDs can range from $[1,\poly(n)]$. We gave deterministic algorithms for expander sorting along the way and developed several handy tools based on it. For example, gathering and propagating information with custom grouping keys.

\paragraph{The Equivalence Between Routing and Sorting} As a side result, we showed that expander routing and expander sorting tasks are actually \emph{equivalent} up to a polylogarithmic factor, in the sense that if there is a \congest algorithm $\oldmathcal{A}_{\textsf route}$ that solves the expander routing problem in $T_{\textsf{route}}(n, \phi, L)$ rounds, then an expander sorting instance can be solved within $O(\phi^{-1}\log n) + O(\log n)\cdot T_{\textsf{route}}(n, \phi, L)$ rounds.
Conversely, if there is a \congest algorithm $\oldmathcal{A}_{\textsf{sort}}$ that solves the expander sorting problem in $T_{\textsf{sort}}(n, \phi, L)$ rounds, then an expander routing instance can be solved within $O(1)\cdot T_{\textsf{sort}}(n, \phi, 2L)$ rounds.
We prove the equivalence in \Cref{sec:equivalence}.%

We believe that the equivalence result is of independent interest and can contribute to the study of the complexity of distributed graph problems in expander graphs.  Much like the significance of \emph{network decomposition} in the \local model, expander routing stands out as the only nontrivial technique in the design of distributed graph algorithms on expanders in the \congest model. Akin to the theory of {\sf P}-{\sf SLOCAL}-completeness developed in~\cite{ghaffari2017complexity}, an interesting research direction is to explore the possibility of identifying a wide range of fundamental distributed problems on expanders that are equivalent to expander routing.

\section{Preliminaries}
Let $n$ denote the number of vertices and $\Delta$ be the maximum degree. Throughout the paper, we assume our graph has a constant maximum degree,~i.e., $\Delta = O(1)$. In \Cref{sec:general_graphs}, we will show a reduction from general graphs to constant degree graphs. We state some definitions and some basic properties here.

\setlength\multicolsep{0pt}

\paragraph{Conductance} Consider a graph $G=(V,E)$. Given a vertex set subset $S$, define $\vol(S) = \sum_{v \in S} \deg(v)$. Let $\delta(S) = \{(u,v) \mid u \in S, v \in V\setminus S \}$. The {\it conductance} of a cut $S$ and that of a graph $G$ are defined as follows. 
\begin{align*}
\Phi(S) &=\frac{|\delta(S)|}{\min(\vol(S), \vol(V\setminus S))}
&
\Phi(G) &=\min_{\substack{S \subseteq V\\ S\neq \emptyset \mbox{ and } S \neq V}} \Phi(S)
\end{align*}

\paragraph{Sparsity} The {\it sparsity} of a cut $S$ and that of a graph $G$ are defined as follows.
\begin{align*}
\Psi(S) &=\mathmakebox[0pt][l]{\frac{|\delta(S)|}{\min(|S|, |V\setminus S|)}}\phantom{\frac{|\delta(S)|}{\min(\vol(S), \vol(V\setminus S))}}
&
\Psi(G) &=\mathmakebox[0pt][l]{\min_{\substack{S \subseteq V\\ S\neq \emptyset \mbox{ and } S \neq V}} \Psi(S)}\phantom{\min_{\substack{S \subseteq V\\ S\neq \emptyset \mbox{ and } S \neq V}} \Phi(S)}
\end{align*}
We remark that the sparsity $\Psi(G)$ of a graph $G$ is also commonly known as \emph{edge expansion}.

\paragraph{Diameter} Given a graph $G=(V,E)$. For $u,v \in V$, let $\dist_{G}(u,v)$ denote the distance between $u$ and $v$ in $G$. The diameter $D$ is defined to be $D(G) = \max_{u,v \in V(G)}\dist_{G}(u,v)$. The following upper bound on the diameter can 
be obtained by a standard ball-growing argument:
\begin{fact}
Let $G$ be a graph with conductance $\phi$. The diameter $D(G)$ is upper bounded by $O(\phi^{-1} \log n)$.
\end{fact}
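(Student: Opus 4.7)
The plan is a standard two-sided ball-growing argument driven by the conductance lower bound. Fix an arbitrary source vertex $v_0$ and, for each integer $r \geq 0$, let $S_r = \{u \in V : \dist_G(v_0, u) \leq r\}$ be the radius-$r$ ball around $v_0$. I will show that there is some $r^\star = O(\phi^{-1} \log n)$ such that $\vol(S_{r^\star}) > \vol(V)/2$, and then use this (applied to two endpoints) to bound the diameter.

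First I would establish the one-step growth recursion. Suppose $\vol(S_r) \leq \vol(V)/2$. By the definition of conductance applied to the cut $S_r$, we have $|\delta(S_r)| \geq \phi \cdot \vol(S_r)$. Every edge in $\delta(S_r)$ has one endpoint in $S_r$ and the other in $S_{r+1} \setminus S_r$ (the latter by definition of $S_{r+1}$ as all vertices reachable in one more step). Therefore each such edge contributes at least $1$ to the degree of some vertex newly added at layer $r+1$, giving
\[
\vol(S_{r+1}) \;\geq\; \vol(S_r) + |\delta(S_r)| \;\geq\; (1+\phi)\,\vol(S_r).
\]
Starting from $\vol(S_0) = \deg(v_0) \geq 1$ and iterating as long as we remain below half the total volume, we get $\vol(S_r) \geq (1+\phi)^r$. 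Since $\Delta = O(1)$ we have $\vol(V) = O(n)$, so $(1+\phi)^r \geq \vol(V)/2$ holds once $r \geq C\phi^{-1}\log n$ for a suitable absolute constant $C$, using $\ln(1+\phi) \geq \phi/2$ for $\phi \in (0,1]$. Hence there exists $r^\star \leq C\phi^{-1}\log n$ with $\vol(S_{r^\star}) > \vol(V)/2$.

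Finally, to conclude the diameter bound, I would pick any two vertices $u, v \in V$ and run the same argument from each of them to obtain $\vol(B_{r^\star}(u)) > \vol(V)/2$ and $\vol(B_{r^\star}(v)) > \vol(V)/2$. Two vertex sets whose volumes each exceed $\vol(V)/2$ must share at least one vertex, so $B_{r^\star}(u) \cap B_{r^\star}(v) \neq \emptyset$, which yields $\dist_G(u,v) \leq 2r^\star = O(\phi^{-1}\log n)$. Taking the maximum over $u, v$ gives $D(G) = O(\phi^{-1}\log n)$, as claimed.

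The argument is essentially routine; the only point that requires a small amount of care is the inequality $\vol(S_{r+1}) \geq \vol(S_r) + |\delta(S_r)|$, where one must observe that the cut edges across $\delta(S_r)$ are counted on the ``outside'' endpoints that all lie in $S_{r+1} \setminus S_r$, so no double counting issue arises and the geometric growth recursion is clean. Everything else is just iterating the recursion for $O(\phi^{-1}\log n)$ steps and applying a volume-overlap argument to two balls.
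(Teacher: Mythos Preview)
Your argument is correct and is precisely the standard ball-growing argument the paper alludes to (the paper does not spell out a proof beyond that remark). One small note: you invoke $\Delta = O(1)$ to get $\vol(V)=O(n)$, which is consistent with the paper's standing assumption, but the step is in fact unnecessary since $\vol(V)\le n^2$ already gives $\log\vol(V)=O(\log n)$.
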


\paragraph{Expander Split} The {\it expander split} $G^{\diamond}$ of $G=(V,E)$ is constructed as follows:
\begin{itemize}
\item For each $v \in V$, create an expander graph $X_v$ with $\deg(v)$ vertices with $\Delta(X_v) = \Theta(1)$ and $\Phi(X_v) = \Theta(1)$.

\item For each $v \in V$, fix an arbitrary ranking of the edges incident to $v$. Let $r_v(e)$ denotes the rank of $e$ in $v$. For each edge $e=uv \in E$, add an edge between the $r_{u}(e)$'th vertex of $X_u$ and the $r_v(e)$'th vertex of $X_v$.
\end{itemize}
The expander split will be used to do the reduction from general graphs to constant degree graphs in \Cref{sec:general_graphs}. A key property is that $\Psi(G^{\diamond}) = \Theta(\Phi(G))$. The proof, as well as more properties on expander split, can be found in \cite[Appendix C]{ChangS20}.

\paragraph{Quality of Paths}
Given a set of paths $\oldmathcal{P}$. The quality of $\oldmathcal{P}$, $Q(\oldmathcal{P})$, is defined to be the $\textsf{congestion} + \textsf{dilation}$ of the set of paths. Notice that the smaller this quantity is, the better \emph{quality} we have. Such a notion has been introduced in \cite{HaeuplerWZ21, HRG22}, as there exist randomized algorithms that route along each path simultaneously in $\tilde{O}(Q(\oldmathcal{P}))$ rounds \cite{LMR94, Ghaffari15}. In the deterministic setting, it is straightforward to execute the routing in $\textsf{congestion} \times \textsf{dilation} \leq Q(\oldmathcal{P})^2$ rounds by spending $\textsf{congestion}$ rounds per edge on the paths.

\begin{fact}\label{fact:routing-along-precomputed-paths}
Let $\oldmathcal{P}$ be a set of precomputed routing paths.
Sending one token along every path $P\in\oldmathcal{P}$ simultaneously can be done in deterministic $Q(\oldmathcal{P})^2$ rounds.
\end{fact}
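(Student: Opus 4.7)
The plan is to give an explicit round-robin schedule and bound its length. Write $c$ for the congestion and $d$ for the dilation of $\oldmathcal{P}$, so $Q(\oldmathcal{P}) = c + d$ and $c\cdot d \le \bigl((c+d)/2\bigr)^2 \le Q(\oldmathcal{P})^2$. The target is therefore a schedule of length at most $c\cdot d$.

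First I would divide time into $d$ phases, one per ``step'' along a path. In phase $j \in \{1,\ldots,d\}$ every token whose path has length at least $j$ attempts to traverse its $j$-th edge. Within a single phase, each edge is requested by at most $c$ tokens by the congestion bound. I would then split each phase into $c$ sub-rounds and, on each edge $e$, assign the (at most $c$) tokens requesting $e$ in this phase to distinct sub-rounds via a deterministic tie-break, for instance by the ID of the path in $\oldmathcal{P}$ that the token is following. In every sub-round each edge carries at most one token in each direction, which respects the \congest bandwidth constraint.

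A small but necessary check is that the schedule is \emph{locally} implementable with no coordination rounds: because $\oldmathcal{P}$ is precomputed, both endpoints of $e$ know the same list of paths using $e$ and can compute the same ordering of tokens into sub-rounds independently, so each send is matched by a corresponding receive in the same sub-round. With $d$ phases of $c$ sub-rounds each, the total round complexity is $c\cdot d \le Q(\oldmathcal{P})^2$, as claimed.

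I do not foresee a real obstacle, since the statement is essentially a scheduling lemma and the sentence preceding it already points to the $c\cdot d$ argument. The only subtlety worth mentioning is that tokens traveling on the same edge in opposite directions during the same phase do not need to be serialized, because \congest allows one message in each direction per round; so counting congestion as undirected edge usage and serializing only co-directional conflicts is already enough to achieve the $c\cdot d$ bound.
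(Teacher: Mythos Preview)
Your proposal is correct and matches the paper's approach exactly: the sentence preceding the fact in the paper already says ``it is straightforward to execute the routing in $\textsf{congestion} \times \textsf{dilation} \leq Q(\oldmathcal{P})^2$ rounds by spending $\textsf{congestion}$ rounds per edge on the paths,'' and your phase/sub-round schedule is precisely this argument spelled out. The paper treats the fact as self-evident and offers no further proof, so your expanded justification (local implementability, bidirectional edges) is more detailed than what the authors give but entirely in line with their intent.
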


\paragraph{Embeddings} Given graphs $H_1,H_2$ with $V(H_1) \subseteq V(H_2)$, an embedding of $H_1$ into $H_2$ is a function $f: E(H_1) \to \mathcal{P}(H_2)$ that maps the edges of $H_1$ to $\mathcal{P}(H_2)$, the set of all paths in $H_2$.  The quality of the embedding $Q(f)$ is defined to be the quality of the set of paths $\bigcup_{e \in E(H_1)} f(e)$. As the vertex set of $H_1$ is always a subset of $V(H_2)$, we sometimes specify $H_1$ {\it only} by its edge set. 

For the ease of composition, given an embedding $f$, we tweak it so that it can map paths in $H_1$ to paths $H_2$  by defining $f(e_1, \ldots, e_l) = (f(e_1), \ldots , f(e_l))$ for $(e_1, \ldots, e_l) \in \mathcal{P}(H_1)$. Given an embedding $f$ that embeds $H_1$ onto $H_2$ and an embedding $g$ that embeds $H_2$ onto $H_3$, $(g \circ f)$ is an embedding of $H_1$ into $H_3$. 

 Given embedding $f$ that embeds $H_1$ to $G_1$ and embedding $g$ that embeds $H_2$ to $G_2$ with $V(H_1)\cap V(H_2) =  \emptyset$, the embedding $(f \cup g): E(H_1 \cup H_2) \to \mathcal{P}(G_1 \cup G_2)$ is defined to be $$(f\cup g)(e) = \begin{cases}f(e) & e \in E(H_1) \\  g(e) & e \in E(H_2) \end{cases}$$

\paragraph{Matching Embedding}
The following result, developed in \cite{ChangS20,HHS23}, allows us to embed a matching between $S$ and $T$, where $S$ and $T$ are two disjoint subsets:
\begin{lemma}\label{lem:deterministic_path_embedding}
Consider a graph $G=(V,E)$ with maximum degree $\Delta = \polylog(n)$ and a parameter $0 < \psi < 1$. Given a set of source vertices $S$ and a set of sink vertices $T$ with $|S| \leq |T|$, there is a deterministic algorithm that finds a cut $C$ and an embedding $f_{M}$ of a matching $M$ between $S$ and $T$ saturating $S$ with the following requirement in $2^{O(\sqrt{\log n})} \cdot \poly(1/\psi)$ rounds.
\begin{itemize}[itemsep=0pt]
\item {\textbf{Matching:}} The embedding $f_{M}$ has quality $O(\psi^{-2}) \cdot \polylog(n)$.
\item {\textbf{Cut:}} Let $S'\subseteq S$ and $T' \subseteq T$ be the subsets that are not matched by $M$. If $S' \neq \emptyset$, then $C$ satisfies $S' \subseteq C$, $T' \subseteq V \setminus C$, and $\Psi(C) \leq \psi$; otherwise $C = \emptyset$. 
\end{itemize}
\end{lemma}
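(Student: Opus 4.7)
The plan is to use the cut-matching game framework of Khandekar--Rao--Vazirani, adapted to produce a bipartite matching embedding between $S$ and $T$. We maintain a virtual graph $H$ on vertex set $S\cup T$ that starts empty and grows by one matching per iteration; each matching added to $H$ will correspond to a path-embedding in $G$ of quality $\polylog(n)\cdot\poly(\psi^{-1})$. After $O(\log^2 n)$ iterations we expect either (i) $H$ is a constant-expansion graph, in which case a single saturating matching into $T$ can be extracted from the union of added matchings and pulled back through the embeddings to give $f_M$; or (ii) some intermediate iteration fails to route a large matching in $G$, in which case flow-cut duality yields the desired sparse cut $C$ with $\Psi(C)\le\psi$.

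First I would specify the two players. In each of the $O(\log^2 n)$ iterations, the cut player runs a heat-kernel / multiplicative-weights scheme on the matchings accumulated so far in $H$, producing a bipartition $(A,B)$ of $S\cup T$ of constant-balanced size. The matching player then attempts to route a large matching in $G$ from $A$ to $B$ using short, low-congestion paths (length $\tilde O(\psi^{-1})$). If it matches at least a $(1-1/\log n)$ fraction of $\min(|A|,|B|)$, we append the matching to $H$ together with its path-embedding and continue; otherwise we stop and return the cut witnessed by the failure.

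Second, and this is the crux, I would implement the matching player deterministically by iteratively extracting short vertex-disjoint (or low-congestion) paths in $G$ between the current unmatched portions of $A$ and $B$, in the spirit of the deterministic maximal-paths / short-path routine used in~\cite{ChangS20}. After $\polylog(n)$ such passes either nearly all of $A$ is matched, or the set of vertices reachable within the length bound from the residual sources defines, by a standard level-set / ball-growing argument, a cut of sparsity at most $\psi$, which we return as $C$. The saturating-matching guarantee in case (i) follows from the fact that the final virtual graph $H$ on $S\cup T$ has constant edge expansion, so by Hall's theorem (using $|S|\le|T|$) it contains a matching saturating $S$; composing the matched virtual edges with their embeddings gives $f_M$, and congestion/dilation accounting over the $O(\log^2 n)$ iterations and the per-iteration quality $\tilde O(\psi^{-1})$ yields total quality $O(\psi^{-2})\polylog(n)$.

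The main obstacle I expect is the deterministic matching player itself: running it in $2^{O(\sqrt{\log n})}\cdot\poly(1/\psi)$ rounds while simultaneously guaranteeing per-iteration quality $\polylog(n)\cdot\poly(\psi^{-1})$ requires a recursive structure in which the short-path routine on $G$ invokes smaller instances of the same lemma on subgraphs. Balancing the recursion so that the depth is $O(\sqrt{\log n})$ and each level blows up quality by only $\polylog(n)$ is the delicate part; this is essentially where the $2^{O(\sqrt{\log n})}$ factor appears, and it is the step whose parameters must be tuned most carefully to prevent the $\psi^{-c}$ exponent from inflating beyond $c=2$.
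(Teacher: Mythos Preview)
The paper does not prove this lemma at all: it is stated in the Preliminaries as a black-box result imported from \cite{ChangS20,HHS23} (``The following result, developed in \cite{ChangS20,HHS23}, allows us to embed a matching between $S$ and $T$\ldots''). So there is no ``paper's own proof'' to compare against; the intended route is simply a citation.

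Regarding your sketch itself, there is a conceptual inversion. \Cref{lem:deterministic_path_embedding} is precisely the \emph{matching player} primitive in the cut-matching game: given a cut $(S,T)$, either embed a saturating matching across it with low congestion/dilation, or exhibit a sparse cut. You are proposing to prove the matching player by running an entire cut-matching game on $S\cup T$, which is circular in spirit and also does not yield the object you need. Even granting that $H$ becomes a constant expander on $S\cup T$, Hall's theorem gives you a matching in the \emph{virtual} graph $H$; pulling a single matched edge of $H$ back through the embeddings gives one path, but a saturating $S$--$T$ matching in $H$ may use virtual edges from many iterations and, worse, may require routing along \emph{paths} in $H$ rather than single edges, so the composed congestion does not stay at $O(\psi^{-2})\polylog(n)$. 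The containment $S'\subseteq C$, $T'\subseteq V\setminus C$ also does not fall out of a generic expansion certificate on $H$.

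The actual construction in \cite{ChangS20} (see the discussion in \Cref{sec:hierarchical_construction}, where the paper invokes ``the parallel DFS algorithm of \cite[Theorem~D.11]{ChangS20}'') is a direct iterative short-path routine in $G$: repeatedly find, in parallel, maximal collections of short low-congestion $S'$--$T'$ paths (length $\tilde O(\psi^{-1})$), add them to the matching, and remove the matched endpoints. After $\polylog(n)$ passes either $S$ is saturated, or a ball-growing argument on the residual sources certifies a cut $C$ of sparsity $\le\psi$ with $S'\subseteq C$ and $T'\subseteq V\setminus C$. The $2^{O(\sqrt{\log n})}$ factor enters through the deterministic maximal-paths subroutine (network-decomposition based), not through any cut-matching recursion.
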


\section{The Hierarchical Decomposition}
Consider a constant degree graph $G=(V,E)$.
Chang and Saranurak~\cite{ChangS20} gave an algorithm that either finds a balanced sparse cut $C$ with $\Psi(C) \leq \psi$ and $|C| \geq |V|/4$ or finds a subset of vertices $W\subseteq V$ such that $\Psi(G[W]) \geq \log^{-O(1/\epsilon)} n \cdot \poly(\psi)$ with $|W| \geq (2/3)\cdot |V|$, where $0 < \epsilon <1$ is a parameter that the running time depends on. In the latter case, it also produces a hierarchical decomposition $\T$, whose property we summarize in \Cref{prop:hierarchical}. We set $\psi = \Psi(G)/2$ to force it to go into the latter case, as no cut $C$ with $\Psi(C) \leq \psi /2$ can be found. 

\begin{restatable}{property}{hierarchical}
\label{prop:hierarchical} Each node of $\T$ is a vertex set $X \subseteq V$. The root of the tree is a vertex set $W$ with $|W| \geq (2/3)\cdot |V|$. A node of $T$ can be either {\it good} or {\it bad}.  The number of levels $\ell(\T)$ in the hierarchy is upper bounded by $O(1/\epsilon)$. Moreover:

\begin{enumerate}[leftmargin=*]

\item\label{itm:hierarchical:structure}  Let $k = |V(G)|^{\epsilon}$. If a node is good, then it is either {\it terminal} or {\it internal}. A bad node or a terminal good node has no children. A good internal node $X$ consists of a number of good children $X_1 \ldots X_t$, where $(2/3)\cdot k \leq t \leq k$ and they can be ordered so that $\max_{x\in X_i} \ID(x) \leq \min_{y \in X_{i+1}} \ID(y)$ for $1 \leq i < t$. Moreover, it has the same number of bad children $X'_1 \ldots X'_t$. Let $X^{*}_i = X_{i} \cup X'_{i}$. We have $X = X^{*}_1 \cup \ldots \cup X^{*}_t$. There exists $\tau = \Theta(|X|/k)$ such that for each $i$, 
\begin{align*}
\frac{1}{3} \cdot \frac{|X|}{k} \leq |X^{*}_i| \leq 6\cdot \frac{|X|}{k}& &\mbox{and} & & \frac{2}{3} (\tau - 1) \leq |X^{*}_i| \leq 2 \cdot (\tau + 1) 
\end{align*}

\item \label{itm:hierarchical:embedding} Let $p(X)$ denote the parent node of $X$. If a non-root node $X$ is good then it is also associated with a virtual graph $H_X$ with maximum degree $O(\log n)$ whose vertex set is $X$, and an embedding $f_X$ that embeds $H_X$ to $H_{p(X)}$.  The root $X$ is associated with the virtual graph $H_X = G[X]$ with $f_{X}(e) = e$.  

Suppose $X \in \T$ is a good internal node. Let $X_1 \ldots X_t$ be the good children of $X$. The embedding $\bigcup_{i=1}^{t} f_{X_i}$ that embeds $H_{X_1} \cup \ldots \cup H_{X_t}$ onto $H_{X}$ has quality $\polylog(n) \cdot O(\psi^{-1})$ in $X$ if $X$ is the root, and $\polylog(n)$ otherwise.

In addition, for any good node $X$, $\Psi(H_{X}) = \poly(\psi) \cdot \log^{-O(1/\epsilon)} n$ if $X$ is the root and $\Psi(H_{X}) = \Omega(1/\log^{\Theta(\ell(\T) - \ell(X))} n) = \log^{-O(1/\epsilon)} n$ otherwise, where $\ell(X)$ is the level that $X$ is at in the hierarchy (the root has level 0).

\item \label{itm:hierarchical:matching} Suppose that $X$ is a good internal node. For each $H_{X_i}$, it can be extended to a virtual graph $H^{*}_{i}$ of $X^{*}_i$ by adding a matching $M^{*}_i$ between $X_i$ and $X'_i$ to $H_{X_i}$ such that each vertex in $X'_i$ is matched. This also implies $|X'_i| \leq |X_i|$ and so 
$$|X_1 \cup \ldots \cup X_t| \geq |X|/2$$  Moreover, there exists an embedding $f_{M_{X}}$ that embeds $\bigcup_{i=1}^{t} M^{*}_i$ onto $H_X$ with quality $\polylog(n)$ if $\ell(X) \geq 1$, and quality of $O(\psi^{-1})\cdot \polylog(n)$ if $\ell(X) = 0$.

\end{enumerate}
\end{restatable}
\Cref{prop:hierarchical}(\ref{itm:hierarchical:structure}) says that every part $X^{*}_i$ has roughly the same size, with up to a constant factor difference. \Cref{prop:hierarchical}(\ref{itm:hierarchical:embedding}) describes the embedding inside each $X_i$.  \Cref{prop:hierarchical}(\ref{itm:hierarchical:matching}) describes the embedding between $X_i$ and $X'_i$. See \Cref{fig:hierarchical_decomposition} for an illustration of the decomposition. 

\begin{figure}[t]\centering
\includegraphics[width=\textwidth]{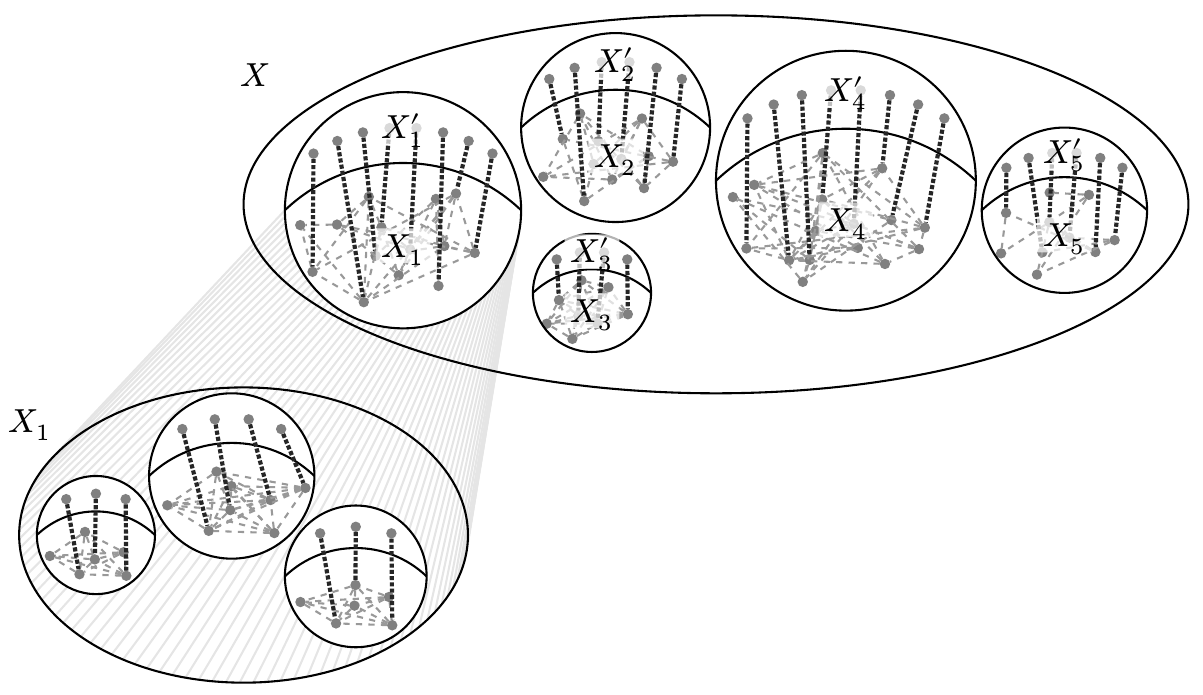}
\caption{An illustration of the hierarchical decomposition. The gray dotted edges denote the expander embedding as described in  \Cref{prop:hierarchical}(\ref{itm:hierarchical:embedding}). For example, the gray dotted edges inside $X_1$ is the virtual graph $H_{X_{1}}$. The base graph of the child node with vertex set $X_1$ is now $H_{X_1}$. The black dotted edges between $X_i$ and $X'_i$ form a matching embedding described in \Cref{prop:hierarchical}(\ref{itm:hierarchical:matching}).}\label{fig:hierarchical_decomposition}
\end{figure}

Some properties listed above may not be explicitly stated in~\cite{ChangS20}.
Thus, for the sake of completeness we will go over the construction of \cite{ChangS20} to verify these properties in \Cref{sec:hierarchical_construction}.
\begin{theorem}[{\cite{ChangS20}}]\label{thm:build-hierarchy}
Let $G$ be a constant degree $\phi$-expander and $k= n^{\epsilon}$ be a parameter. Then, there exists a deterministic \congest algorithm that computes a hierarchical decomposition $\T$ that satisfies \Cref{prop:hierarchical} in $\poly(\phi^{-1}) \cdot (n^{O(\epsilon)} +  \log^{O(1/\epsilon)} n)$ time.
\end{theorem}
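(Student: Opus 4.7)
The plan is to faithfully reconstruct the top-down recursion of~\cite{ChangS20} and then check, level-by-level, that the additional structural guarantees listed in \Cref{prop:hierarchical} are preserved. At a node $X$ we carry out two operations: first, partition $X$ into $t\in[(2/3)k,k]$ ID-ordered chunks $X_1^*,\dots,X_t^*$ of balanced size $\Theta(|X|/k)$; second, inside each $X_i^*$ identify a good core $X_i$ together with a virtual expander $H_{X_i}$ embedded into the current base $H_X$, and a residual bad set $X_i'=X_i^*\setminus X_i$ matched into $X_i$. The good children are then recursed upon, using $H_{X_i}$ as the new base graph; bad children are leaves. Because $|X|$ shrinks by a factor of $k=n^\epsilon$ per level and each recursive call inherits a virtual graph of diameter $\polylog(n)$, the recursion terminates after $O(1/\epsilon)$ levels, giving the bound on $\ell(\T)$.

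Step 1, the ordered partition, is computed by aggregating and prefix-summing IDs on a BFS tree of $H_X$; this runs in $O(\phi^{-1}\log n)$ rounds at the root and in $\polylog(n)$ rounds at internal nodes, since $\Psi(H_X)\ge \log^{-O(1/\epsilon)}n$ there by inductive hypothesis. The choice of $\tau=\Theta(|X|/k)$ and the bounds $\tfrac{1}{3}\tau\le|X_i^*|\le 2\tau$ are arranged by picking cutoffs in the prefix sum, which verifies \Cref{prop:hierarchical}(\ref{itm:hierarchical:structure}). Step 2 plays a cut-matching game inside each $X_i^*$: each iteration uses the cut player of~\cite{KKOV07} on the current union of matchings, and answers with a large matching in $H_X$ produced by \Cref{lem:deterministic_path_embedding} with sparsity $\psi'=\Omega(1/\log^{\Theta(\ell(\T)-\ell(X))}n)$. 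After $\lambda=O(\log n)$ iterations, the union of embedded matchings yields $H_{X_i}$ with $\Psi(H_{X_i})\ge\log^{-O(1/\epsilon)}n$ and maximum degree $O(\log n)$; the vertices on the \emph{small} side of every unmatched iteration form $X_i'$, and since the game halts before any such side exceeds $|X_i^*|/2$ we have $|X_i'|\le|X_i|$, which is precisely what is needed to build the matching $M_i^*$ of \Cref{prop:hierarchical}(\ref{itm:hierarchical:matching}). The embeddings $f_{X_i}$ and $f_{M_X}$ are simply the union of the per-iteration path embeddings returned by \Cref{lem:deterministic_path_embedding}; at an internal node each has quality $\polylog(n)$ because $\psi'^{-1}=\polylog(n)$, while at the root the input parameter $\psi=\Theta(\phi)$ surfaces as the extra $O(\psi^{-1})=O(\phi^{-1})$ factor.

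For the running time, each call to \Cref{lem:deterministic_path_embedding} at the top level costs $2^{O(\sqrt{\log n})}\cdot\poly(\psi^{-1})$ rounds in $G$, and the $O(\log n)$ matching calls for the $k$ children together account for the leading $n^{O(\epsilon)}$ term. At every deeper level the analogous matching subroutine is invoked on the virtual base graph $H_X$; since $|V(H_X)|=O(n/k^{\ell(X)})$ and each virtual round is simulated in $H_{p(X)}$ with $\polylog(n)$ overhead (by composing $f_X$ with $f_{p(X)}$ and applying \Cref{fact:routing-along-precomputed-paths}), telescoping across the $O(1/\epsilon)$ levels gives $\poly(\phi^{-1})\cdot\log^{O(1/\epsilon)}n$. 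Summing the two contributions yields the claimed $\poly(\phi^{-1})\cdot(n^{O(\epsilon)}+\log^{O(1/\epsilon)}n)$.

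The main obstacle is the bookkeeping required to show that the quality and conductance losses compose as $\polylog(n)$ per level rather than worse. Concretely, one must track two invariants simultaneously through the recursion: (i) the sparsity $\Psi(H_X)$ degrades by at most a $\polylog(n)$ factor per level, so that the cut-matching game at the next level can still use $\psi'=1/\polylog(n)$ and preserve the bound in \Cref{prop:hierarchical}(\ref{itm:hierarchical:embedding}); and (ii) each virtual matching produced at depth $\ell(X)$ lifts, via the composition $f_{p(X)}\circ f_{p(p(X))}\circ\cdots$, to a path embedding in $G$ whose quality is controlled solely by the product of per-level qualities, i.e.\ $\log^{O(1/\epsilon)}n$. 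Verifying these invariants is where the body of~\cite{ChangS20} spends its technical effort, and our proof of \Cref{thm:build-hierarchy} in \Cref{sec:hierarchical_construction} essentially consists of re-instantiating their inductive argument and reading off the additional structural guarantees (the existence of $\tau$, the ordering by ID, the matching $M_i^*$ saturating $X_i'$, and the $|X_1\cup\cdots\cup X_t|\ge|X|/2$ bound) directly from the construction.
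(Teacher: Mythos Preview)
Your high-level picture—partition, play a cut–matching game on each piece, recurse on the good cores—has the right shape, but it diverges from the construction in~\cite{ChangS20} in a way that breaks \Cref{prop:hierarchical}(\ref{itm:hierarchical:embedding}). The crucial point is \emph{where the recursion lives}. In~\cite{ChangS20} (and in \Cref{sec:hierarchical_construction}), the hierarchy is the recursion tree of a \emph{single} call to $\mathtt{Det\mhyphen Sparse\mhyphen Cut}$: the cut player at each iteration is itself a recursive call to $\mathtt{Det\mhyphen Sparse\mhyphen Cut}$ on the partial virtual graph $H_i$, with parameter $\widehat{\psi_{cut}}=1/2$. Because $\psi_{cut}$ is a \emph{constant} at every non-root level, the matching player's paths—which are exactly the children-to-parent embedding $f_{X_i}$—have quality $\polylog(n)\cdot O(\psi_{cut}^{-1})=\polylog(n)$, which is the non-root guarantee in \Cref{prop:hierarchical}(\ref{itm:hierarchical:embedding}). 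Your proposal instead treats the cut player as a black box from~\cite{KKOV07} and places the hierarchy recursion \emph{outside} the cut–matching game. But there is no known deterministic \congest implementation of a~\cite{KKOV07}-style cut player on a graph of $n^{1-\epsilon}$ vertices other than the recursive one; and if you do use the recursive one, the sub-hierarchy is already built inside that call, so your separate ``recurse on $H_{X_i}$'' step is either redundant or, taken literally, compounds the quality loss to $\log^{O(1/\epsilon^2)}n$—precisely the bound this paper is trying to avoid (see the discussion of ``One-Shot Hierarchical Decomposition'' in \Cref{sec:our-approach}).

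There is also an internal inconsistency in your quality accounting. You set $\psi'=\Omega(1/\log^{\Theta(\ell(\T)-\ell(X))}n)$ for the matching embedding at node $X$ and then assert ``$\psi'^{-1}=\polylog(n)$'' to conclude a $\polylog(n)$ per-level quality. But $\ell(\T)-\ell(X)$ can be $\Theta(1/\epsilon)$ (e.g.\ at level~$1$), so $\psi'^{-1}=\log^{\Theta(1/\epsilon)}n$, not $\polylog(n)$. Finally, your description of the bad children is off: $X_i'$ is \emph{not} the residual $X_i^*\setminus X_i$ of the $i$-th ID-chunk. The initial partition is into exactly $k$ chunks $V_1,\ldots,V_k$; the good core $X_i=U_i\subseteq V_i$ arises from a successful chunk, while $X_i'=U_i'$ consists of leftover vertices (possibly from \emph{other} chunks or failed parts) that are matched to $U_i$ in a separate merge step after all cut–matching games terminate. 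This is the actual source of $|X_i'|\le|X_i|$ and of the saturating matching $M_i^*$ in \Cref{prop:hierarchical}(\ref{itm:hierarchical:matching}).
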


\begin{definition}
Let $X \in \T$ be a good node whose level is $\ell(X)$. The {\it flatten embedding} $f^{0}_{X}$ is an embedding that embeds $H_X$ to $G$, defined as 
$$f^{0}_{X} = f_{p^{(\ell(X))}(X)} \circ \ldots \circ f_{p^{(2)}(X)} \circ f_{p(X)}\circ f_{X}$$
\end{definition}

\begin{corollary}
For each $X \in \T$, let $\oldmathcal{P}_{X}$ be any collection of paths in $X$. Suppose that the quality of each $\oldmathcal{P}_{X}$ is upper bounded by $Q$. Let $\oldmathcal{P}'= \bigcup_{X \in \T} f^{0}_{X}(\oldmathcal{P}_{X})$ be the flatten mapping of these paths to $G$. We have that $Q(\oldmathcal{P}') =  Q \cdot \poly(\psi^{-1}) \cdot \log^{O(1/\epsilon)} n$.
\end{corollary}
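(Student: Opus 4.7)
The plan hinges on two elementary quality inequalities that follow from the definitions of congestion and dilation. First, if $f\colon H_1 \to H_2$ is an embedding and $\oldmathcal{P}$ is any family of paths in $H_1$, then $Q(f(\oldmathcal{P})) \leq Q(f)\cdot Q(\oldmathcal{P})$: dilations multiply along the embedding, and the congestion at any edge $e' \in E(H_2)$ is bounded by the congestion of $\oldmathcal{P}$ times the number of $e\in E(H_1)$ with $e' \in f(e)$, which is at most $Q(f)$. As a special case (taking $\oldmathcal{P}$ to be the collection of images $\{f(e) : e \in E(H_1)\}$), $Q(g \circ f) \leq Q(f)\cdot Q(g)$.

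First I would bound $Q(f^0_X)$ for a single good node $X$ at level $\ell(X)$. Since $f^0_X$ is a composition of $\ell(X)+1$ maps $f_{p^{(i)}(X)}$ up the tree, and each such $f_Y$ has quality $\polylog(n)$ by Property~\ref{prop:hierarchical}(\ref{itm:hierarchical:embedding})---with the root contributing an additional $O(\psi^{-1})$ factor---inequality (ii) gives $Q(f^0_X) \leq \polylog(n)^{\ell(\T)}\cdot O(\psi^{-1}) = \log^{O(1/\epsilon)} n \cdot O(\psi^{-1})$, using $\ell(\T) = O(1/\epsilon)$.

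Next I would group the nodes of $\T$ by level and handle one level at a time. For each level $\ell$, the good nodes form vertex-disjoint subsets by Property~\ref{prop:hierarchical}(\ref{itm:hierarchical:structure}), so the graphs $\{H_X\}$ share no edges and $Q\bigl(\bigcup_X \oldmathcal{P}_X\bigr) \leq \max_X Q(\oldmathcal{P}_X) \leq Q$. Let $F^{(\ell)} = \bigcup_{\ell(X) = \ell} f^0_X$, embedding $\bigcup_X H_X$ into $G$. This factors as $F^{(\ell)} = F^{(\ell-1)} \circ \phi_\ell$, where $\phi_\ell = \bigcup_{\ell(X) = \ell} f_X$; applying Property~\ref{prop:hierarchical}(\ref{itm:hierarchical:embedding}) to each sibling family inside a common parent (and using vertex-disjointness of the parents at level $\ell - 1$ to take the outer union) yields $Q(\phi_\ell) \leq \polylog(n)$ for $\ell \geq 2$ and $\polylog(n)\cdot O(\psi^{-1})$ for $\ell = 1$. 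Iterating (ii) from $F^{(0)} = \mathrm{id}$ then gives $Q(F^{(\ell)}) \leq \log^{O(1/\epsilon)} n \cdot O(\psi^{-1})$, and (i) delivers $Q\bigl(F^{(\ell)}(\bigcup_X \oldmathcal{P}_X)\bigr) \leq Q \cdot \log^{O(1/\epsilon)} n \cdot O(\psi^{-1})$.

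Finally $\oldmathcal{P}'$ is the union of these images across the $O(1/\epsilon)$ levels: dilation under union is the maximum (stays bounded), while congestion is additive, costing an extra $O(1/\epsilon)$ factor which is absorbed into $\log^{O(1/\epsilon)} n$. This yields $Q(\oldmathcal{P}') = Q\cdot \poly(\psi^{-1})\cdot \log^{O(1/\epsilon)} n$ as claimed. The main subtlety is the ``combined embedding'' step: one must invoke the union-of-siblings version of Property~\ref{prop:hierarchical}(\ref{itm:hierarchical:embedding}) rather than summing edge-by-edge over siblings, since the number of siblings is $t = n^{\Theta(\epsilon)}$ and a naive union bound would inflate the congestion by this polynomial factor and ruin the final bound.
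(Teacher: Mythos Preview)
Your proposal is correct and follows essentially the same approach as the paper: group nodes by level, define the per-level combined embedding $\phi_\ell=\bigcup_{\ell(X)=\ell}f_X$ (the paper's $f^\ell$), compose these to bound the flattened quality at each level, and then sum over the $O(1/\epsilon)$ levels. Your explicit articulation of the two quality inequalities and of the ``union-of-siblings'' subtlety is a welcome clarification, though the initial paragraph bounding $Q(f^0_X)$ for a single node is superfluous once you have the level-wise argument.
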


\begin{proof}
Let $\T_i = \{X \in \T \mid \ell(X) = i \} $. Define $f^{i} = \bigcup_{X\in \T_i} f_{X}$ to be the union of embedding from level-$i$ nodes to level-$(i-1)$ nodes. By \Cref{prop:hierarchical}(\ref{itm:hierarchical:embedding}), $Q(f^{i}) = \polylog n$ if $i > 1$ and $Q(f^{i}) = \poly(\psi^{-1}) \cdot \polylog n$ otherwise. Since $\bigcup_{X \in \T_i} f^{0}_{X}(P_{X}) = (f^{1} \circ \ldots \circ f^{i})(\bigcup_{X\in \T_i} \oldmathcal{P}_X)$, we have $Q(\bigcup_{X \in \T_i} f^{0}_{X}(\oldmathcal{P}_{X})) = Q \cdot O(\psi^{-1}) \cdot \log^{O(i)} n$. Summing this over each $i=1,\ldots, O(1/\epsilon)$, we conclude that the quality of $\oldmathcal{P}'$ is at most $Q \cdot O(\psi^{-1}) \cdot \log^{O(1/\epsilon)} n$.
\end{proof}

\paragraph{Embedding a Matching To Cover the Whole Graph}
We note that the root $W \in \T$ does not cover all the vertices in $V$. Using \Cref{lem:deterministic_path_embedding}, we can pre-embed a matching between vertices of $V\setminus W$ and $W$ with good quality so that tokens can be routed to the hierarchy easily.

\begin{lemma}\label{lemma:root-level-matching}
Let $W \in \T$ be the root of the hierarchical decomposition. There exists a \congest algorithm that finds an embedding $f_{M_{root}}$ of a matching $M_{root}$ between $V\setminus W$ and $W$ that saturates $V \setminus W$ with $Q(f_{M_{root}})=\psi^{-2}\cdot \log^{O(1)} n$ and the runtime is $2^{O(\sqrt{\log n})} \cdot \poly(\psi^{-1})$.
\end{lemma}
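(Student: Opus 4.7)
The plan is to invoke \Cref{lem:deterministic_path_embedding} directly on $G$ with the source set $S := V \setminus W$ and the sink set $T := W$, using the same parameter $\psi = \Psi(G)/2$ that is fixed at the start of \Cref{sec:hierarchical_construction}. The base graph $G$ has constant maximum degree, so it fits within the $\polylog(n)$ degree requirement of the lemma, and by \Cref{prop:hierarchical} the root satisfies $|W| \geq (2/3)|V|$, so $|S| = |V \setminus W| \leq |V|/3 \leq |W|/2 \leq |T|$, verifying the size precondition.

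Next, I would feed $(S,T,\psi)$ into \Cref{lem:deterministic_path_embedding}. Within $2^{O(\sqrt{\log n})} \cdot \poly(\psi^{-1})$ rounds the algorithm returns either (a) an embedding $f_M$ of a matching $M$ between $S$ and $T$ that saturates $S$ with quality $O(\psi^{-2}) \cdot \polylog(n)$, or (b) a non-empty cut $C \subseteq V$ containing the unmatched portion $S' \subseteq S$ and satisfying $\Psi(C) \leq \psi$. The heart of the argument is to rule out outcome (b): by definition of sparsity, every non-trivial vertex subset has sparsity at least $\Psi(G) = 2\psi > \psi$, so no cut of $G$ can witness $\Psi(C) \leq \psi$. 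Consequently the algorithm must take branch (a), and setting $M_{\text{root}} := M$, $f_{M_{\text{root}}} := f_M$ immediately gives the desired matching saturating $V \setminus W$ with $Q(f_{M_{\text{root}}}) = O(\psi^{-2}) \cdot \polylog(n) = \psi^{-2} \cdot \log^{O(1)} n$, while the round complexity is inherited verbatim from the lemma.

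There is essentially no obstacle beyond lining up the parameters; the only subtlety worth a sanity check is that the cut condition in \Cref{lem:deterministic_path_embedding} refers to sparsity measured in $G$ itself (rather than in the virtual graph $H_W$ or any sub-instance), which is exactly what can be compared against $\Psi(G) = 2\psi$. This is why we must apply the lemma at the top level in the original graph $G$, not inside $H_W$, where the conductance/sparsity could be weaker by factors of $\log^{O(1/\epsilon)} n$ according to \Cref{prop:hierarchical}(\ref{itm:hierarchical:embedding}).
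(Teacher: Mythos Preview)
Your proposal is correct and follows the same approach as the paper: invoke \Cref{lem:deterministic_path_embedding} on $G$ with the smaller side $V\setminus W$ as sources, the larger side $W$ as sinks, and parameter $\psi=\Psi(G)/2$, then observe that the sparse-cut branch is impossible because no cut of $G$ can have sparsity below $\Psi(G)=2\psi$. In fact your write-up is more careful than the paper's, which contains a typo swapping $S$ and $T$ (it writes $S=W$, $T=V\setminus W$ and then claims $|S|<|T|$); your assignment $S=V\setminus W$, $T=W$ is the one that actually satisfies $|S|\le|T|$ and yields a matching saturating $V\setminus W$.
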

\begin{proof} Note that $|W| \geq (2/3)|V|$ by \Cref{prop:hierarchical}. We set $S=W$ and $T=V \setminus W$ and so $|S| < |T|$. We then apply \Cref{lem:deterministic_path_embedding} with $\psi= \Psi(G)/2$ so that it returns a matching embedding with the desired quality. \end{proof}

\paragraph{Leaf Trimming}
We will trim the leaves of $\T$ so that every leaf node contains at least $k^4 = O(n^{4\epsilon})$ vertices. This can be done level by level from the last level. Each level takes $O(D' + k^4)\cdot \log^{O(1/\epsilon)} n $ rounds, where $D'=\log^{O(1/\epsilon)} n$ is a diameter upper bound of $H_{X}$ of each $X \in \T$ in that level.

\begin{definition} Given $X \in \T$, define $X_{best} \subseteq X$ to be the union of the good leaf nodes in the subtree rooted at $X$. 
\end{definition}

\begin{definition}
Define $\rho_{best} = \max_{X \in \T} |X|/|X_{best}|$. 
\end{definition}

Note that with \Cref{prop:hierarchical}, we have $\rho_{best}=2^{O(1/\epsilon)}$.

\section{Reducing to Internal Routing Tasks}

We first consider the core setting of the expander routing problem, where the input graph $G$ is a constant degree expander of sparsity $\psi$.
The main task described below summarizes the routing task on $G$:

\begin{definition}[\textbf{Task 1}]
Let $G$ be a constant degree $\psi$-expander, where each vertex of $G$ has a unique destination ID in $\{1, 2, \ldots, n^{O(1)}\}$.
Let $L$ be a parameter that depicts the maximum load.
Suppose that each node in $G$ holds at most $L$ tokens, and each node is the destination of at most $L$ tokens. The goal is to route the tokens to their destinations.
\end{definition}

However, as the leaves (i.e.,~the best nodes) of our hierarchical decomposition do not cover the whole graph, it would be difficult to solve \textbf{Task 1} directly. Instead, we consider a routing problem where all the destinations of the tokens are on the best nodes, specified by their ranks. In \Cref{sec:reductions}, we show how to reduce {\bf Task 1} to the following task by delegating it to the best nodes and having them do expander sorting:

\begin{definition}[\textbf{Task 2}]\label{task2} Let $X$ be a good node in the hierarchical decomposition $\T$ of the input constant degree $\psi$-expander $G$.
Let $L$ be a parameter.
Suppose that 
each node holds for at most $L$ tokens.
Each token $z$ has a \emph{destination marker} $i_z$
and there are at most $L\rho_{best}$ tokens for each destination marker $i_z$.
The goal of the task is to route all tokens with destination marker $i_z$ to the $i_z$-th smallest vertex among $X_{best}$.
\end{definition}

Note that as {\bf Task 2} will be solved recursively, we defined the task on every component $X$ of the hierarchy $\T$. We will now focus on solving {\bf Task 2} by using the ideas discussed in the introduction. In the following, We identify the key task for solving {\bf Task 2} recursively. %

Let $X\in\T$ be an internal component and let $X_1^*, X_2^*, \ldots, X_t^*$ be the parts of $X$ derived from \Cref{thm:build-hierarchy}.
We note that with broadcasts, it is possible for every vertex $v\in X$ obtaining the number of best vertices within all its parts during preprocessing in $O((k+D(H_{X}))\cdot Q(\bigcup_{X' \in \oldmathcal{T}}f^{0}( H_{X'}))) = \poly(\psi^{-1}, k, \log^{1/\epsilon}n)$ rounds for all $X\in\oldmathcal{T}$ in parallel. Furthermore, 
by \Cref{prop:hierarchical}(\ref{itm:hierarchical:structure}), the IDs of the vertices in $X_{best}$ are partitioned in the sorted order.
This allows the algorithm to rewrite the destination marks at the beginning of handling a query on $X$:
For any token $z$ with destination mark $i_z$,
the algorithm computes two values $(j_z, i'_z)$, where $j_z\in \{1, 2, \ldots, t\}$ is the index of the part containing the $i_z$-th smallest best vertex, and $i'_z = i_z - \sum_{j < j_z} |X_{best}\cap X_j^*|$ is the next-level destination mark.

Therefore, to solve \textbf{Task 2} on $X$, it suffices to first route all tokens $z$ to any vertex in the part $X_{j_z}^*$.
Finally, for any part $X_j^*=X_j\cup X'_j$, through \Cref{prop:hierarchical}(\ref{itm:hierarchical:matching}) we are able to route all tokens $X_{j}^*$ to $X_j$ such that a next-level \textbf{Task 2} can be called.
We summarize this task as follows.

\begin{definition}[\textbf{Task 3}]
Let $X\in\T$ be a non-leaf good node and $X_1^*, X_2^*, \ldots, X_t^*$ be its parts.
Each vertex holds at most $L$ tokens for some parameter $L$. 
For each token $z$ there is a \emph{part mark} $j_z$.
Suppose that for each $j\in \{1, 2, \ldots, t\}$ there are at most $L\cdot |X_j^*|$ tokens having the same part mark $j$.
The task is accomplished whenever every token $z$ with a part mark $j_z$ is located at a vertex in $X_{j_z}^*$ and each vertex holds at most $2L$ tokens.
\end{definition}

In the following sections, we introduce tools and aim to give algorithms for {\bf Task 3}.

\section{Core Tools: Shuffler and Expander Sorting}
\label{sec:cut-matching}

As mentioned in \Cref{sec:our-approach}, \textbf{Task 3} is solved by routing all tokens into a dispersed configuration.
The tokens are routed through a shuffler.
In \Cref{sec:routing-infrastructure}, we describe an algorithm that constructs such a shuffler.
Our algorithm implements R\"acke, Shah, and T\"aubig's cut-matching game \cite{RST14} but with a twist in order to be constructed efficiently.
The efficiency comes from the fact that we already have the precomputed hierarchical decomposition $\T$.
To process queries using the constructed shuffler, it is necessary to route the tokens to specified \emph{shuffler portals}.
This can be implemented by expander sorting procedures.
In \Cref{sec:expander-sort} we introduce the expander sorting and convenient procedures that can be applied to solving \textbf{Task 3}.

\subsection{Shuffler}\label{sec:routing-infrastructure}

In this subsection, we define shufflers and the algorithm for constructing them during the preprocessing time. Consider a good node $X \in \T$ with $|X| \geq n^{4\epsilon}$. Let $X^{*}_1 \ldots X^{*}_t$ be a partition of $X$, where $X^{*}_i = X_i \cup X'_i$ as defined in \Cref{prop:hierarchical}.

\paragraph{Cut-Matching Game and Shuffler}
To achieve this, we run a variant of cut-matching game~\cite{KRV09}. The cut-matching game consists of a cut player and a matching player, and they alternatively pick a cut and add a matching to an initially empty graph.  In our variant, the cut-matching game is ``played'' on the \emph{cluster graph} $Y$, which is the multigraph by contracting each of the $t$ parts of $X$. In each iteration $q$ of the cut-matching game, the cut player first obtains a cut on $Y$, which implies a cut on $X$.

Then, the matching player finds an embedding $f_{M_{X}^q}$ of a virtual matching $M_X^q$ on $X$, which we will show how to transform to a natural \emph{fractional matching} on $Y$. The sequence of all computed matchings and embeddings $\mathcal{M}_X := ((M_X^1, f_{M_{X}^{1}}), (M_X^2, f_{M_{X}^{2}}), \ldots, ({M}_X^\lambda, f_{M_{X}^{\lambda}}))$ is then called the \emph{shuffler}. Now, we formally define the aforementioned terms. %

\begin{definition}\label{def:cluster-graph}
Let $Y$ denote the cluster graph obtained by contracting each $X^{*}_i$ in $H_{X}$. The set of vertices $V(Y)=\{v_1, v_2, \ldots, v_t\}$ has exactly $|Y|=t$ vertices where $v_i$ corresponds to the vertex contracted from $X^*_i$.
Given a subset $S \subseteq V(Y)$, let $S_{X}$ denote the corresponding vertex set $\cup_{i: v_i\in S} X_i^*$ in $X$.
\end{definition}

A \emph{fractional matching} $M=\{x_{uv}\}$ of a graph $Y$ is a mapping that maps each unordered pair $\{u, v\}\in \binom{V(Y)}{2}$ to a real number $x_{uv}\in[0, 1]$ such that for all $u\in V(Y)$, the fractional degree is at most one: $\sum_v x_{uv} \le 1$. Given a matching $M_{X}$ in $X$, the corresponding {\it natural fractional matching} $M = \{x_{uv}\}_{(u,v) \in \binom{V(Y)}{2}}$ is $Y$ of defined to be:
$$x_{uv} = \frac{|\{(a,b) \in M_{X} \mid a \in X^{*}_{u}, b \in X^{*}_{v} \}|}{n'} \mbox{, where $n' = 6|X|/k$}.$$

\begin{definition}
Given any fractional matching $M = \{x_{uv}\}$ on $Y$, we define a $t\times t$ matrix $R_M$ with
\[R_{M}[i,j] := \begin{cases} \frac12 + \frac12 \cdot (1 - \sum_{k \neq i} x_{v_i v_k}) & \mbox{if $i = j$,} \\
\frac12 \cdot  x_{v_i v_j} & \mbox{if $i \neq j $.} 
\end{cases}\]
Let $(M^1, \ldots, M^i)$ be a sequence of fractional matchings.
If the context is clear, we omit the sequence and denote by $R_i$  the product of matrices $R_{M^{i}} \cdots R_{M^{2}}R_{M^1}$.
For any vertex $y\in V(Y)$, let $R_i[y]$ be the row vector in $R_i$ that corresponds to the vertex $y$. For $a, b\in V(Y)$, the $b$-th entry of $R_i[a]$ can be interpreted as the probability of a random walk that starts from $b$ and ends up at $a$. It is straightforward to verify that all entries of $R_i[y]$ adds up to $1$ for all $y\in V(Y)$.
\end{definition}

Let $\bm{1}$ be the all-one vector, and for brevity, we denote $\bm{\frac{1}{|Y|}} := \frac{1}{|Y|}\bm{1}$. The following definition sets up a potential function for the cut-matching game.
Let $\|\cdot\|$ to be the standard 2-norm function for a vector.

\begin{definition}[\cite{KRV09}]
Let $Y$ be a cluster graph defined in \Cref{def:cluster-graph} and let $(M^1, \ldots, M^i, \ldots)$ be a sequence of fractional matchings on $Y$.
We define the potential function $\Pi(i) := \sum_{y \in Y(V)} \|R_{i}[y]  - \bm{\frac{1}{|Y|}} \|^2$. 
\end{definition}

\begin{definition}[Shuffler]\label{def:shuffler} 
Given $X\in\T$, a \emph{shuffler} of $X$ consists of a sequence of $\lambda$  matching embeddings $\mathcal{M}_X:=((M_X^1, f_{M_{X}^{1}}), (M_X^2, f_{M_{X}^{2}}), \ldots, ({M}_X^\lambda, f_{M_{X}^{\lambda}}))$ on $X$ such that if $(M_1, \ldots, M^{\lambda})$ is the corresponding fractional matching of $(M_X^1, \ldots, M_{X}^{\lambda})$ in $Y$, the random walk induced by it nearly mixes, as characterized by the following bound on the potential function:
$$\sum_{y \in V(Y)} \left\lVert R_{i}[y]  - \bm{\frac{1}{|Y|}} \right \rVert^2 \leq \frac{1}{9n^3}$$

In addition, for each $i$, $1\le i\le \lambda$, the embedding $f_{M_{X}^{i}}$ has quality $\log^{O(1/\epsilon)} n$ in $X$ for non-root $X$, and $\poly(\psi(G)^{-1})\log^{O(1/\epsilon)} n$ if $X$ is the root. The quality of the shuffler, which essentially has the same order of magnitude as the quality of each embedding $f_{M_{X}^{i}}$, is defined to be $$Q(\mathcal{M}_{X}) := Q\left(\bigcup_{i=1}^{\lambda}f_{M_{X}^{i}}(M_{X}^{i})\right).$$
\end{definition}

We prove the following lemma in \Cref{sec:the-modified-cut-matching-game}.
\begin{lemma}\label{lem:disperse}
There exists an \congest algorithm such that, given a good node $X\in\T$, computes a shuffler of $X$ in $\poly(\psi^{-1}, k, \log^{1/\epsilon} n)$ rounds.
Moreover, the shuffler has $\lambda=O(\log n)$ matching embeddings with quality $Q(\mathcal{M}_X)=\poly(\psi^{-1}, \log^{1/\epsilon} n)$.
\end{lemma}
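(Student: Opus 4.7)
I would realize the shuffler construction as an instance of the cut-matching game of R\"acke, Shah, and T\"aubig~\cite{RST14}, suitably adapted to our two-graph setting: the cut player operates on the small cluster graph $Y$ (with $|Y|=t \le k$), while the matching player returns an embedded virtual matching in $H_X$. Initialize $R_0 = I$ and, for $i=1,\ldots,\lambda$: (a) the cut player selects a bipartition $(S,T)$ of $V(Y)$ with $|S_X| \le |T_X|$ (swap if needed); (b) the matching player returns a virtual matching $M_X^i$ with embedding $f_{M_X^i}$ saturating $S_X$; (c) the algorithm forms the natural fractional matching $M^i$ on $Y$ and updates $R_i = R_{M^i} R_{i-1}$. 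Halt once $\Pi(i) \le 1/(9n^3)$.

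For the cut player, observe that $R_{i-1}$ has only $t^2 \le k^2 = n^{2\epsilon}$ entries. I would aggregate it at a leader vertex of $H_X$ (whose diameter is $\log^{O(1/\epsilon)} n$, simulated in $G$ via the flatten embedding with overhead $Q(f_X^0)^2 = \poly(\psi^{-1}) \log^{O(1/\epsilon)} n$). The leader locally runs the deterministic spectral cut selection of~\cite{RST14}: compute the direction $r \perp \mathbf{1}$ maximizing the potential contribution, sort rows of $R_{i-1}$ by projection onto $r$, and broadcast $(S,T)$. For the matching player, I would apply \Cref{lem:deterministic_path_embedding} to $H_X$ with source $S_X$, sink $T_X$, and parameter $\psi' = \Psi(H_X)/2$ (which is $\log^{-O(1/\epsilon)} n$ for non-root $X$, and $\poly(\psi)\log^{-O(1/\epsilon)} n$ for the root). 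Since $\Psi(H_X) > \psi'$, the lemma cannot output a sparse cut separating $S_X$ from $T_X$, and must return an $S_X$-saturating matching $M_X^i$ with embedding quality $O((\psi')^{-2})\polylog(n)$, matching the bound in \Cref{def:shuffler}.

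For the analysis, the natural fractional matching $M^i$ puts mass $\sum_{v_v \in T} x_{uv} = |X_u^*|/n' = \Omega(1)$ on every $v_u \in S$ (using $|X_u^*|=\Theta(|X|/k)$ from \Cref{prop:hierarchical}(\ref{itm:hierarchical:structure})). Combined with a spectrally chosen cut on $Y$, a fractional-matching version of the RST potential argument yields a constant-factor multiplicative decrease of $\Pi(i)$ in each iteration, so $\lambda = O(\log(\Pi(0)\cdot 9 n^3)) = O(\log n)$ suffices since $\Pi(0) \le |Y|-1 \le n^\epsilon$. The per-iteration cost is dominated by \Cref{lem:deterministic_path_embedding}, which takes $2^{O(\sqrt{\log n})}\poly((\psi')^{-1})$ rounds in $H_X$ and hence $\poly(\psi^{-1}, k, \log^{1/\epsilon} n)$ rounds in $G$ after simulation, since $2^{O(\sqrt{\log n})} \le n^\epsilon = k$ in the relevant parameter regime $\epsilon \ge \sqrt{\log\log n/\log n}$. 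The main obstacle I anticipate is adapting the RST potential-decrease argument from integral matchings on $Y$ to the fractional matchings $M^i$ induced here: one must show that the $\Omega(1)$ fractional mass leaving each $v_u \in S$ is enough to guarantee a constant-factor potential drop, carefully using the near-uniform cluster sizes from \Cref{prop:hierarchical}(\ref{itm:hierarchical:structure}) to rule out pathological concentrations of the matching on a small subset of $T$, and dealing with the slight slack from $M^i$ having row sums in $[1/18, 1]$ rather than exactly $1$.
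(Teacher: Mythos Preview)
Your proposal follows essentially the same route as the paper: cut player on the cluster graph $Y$, matching player on $H_X$ via \Cref{lem:deterministic_path_embedding} with parameter $\Psi(H_X)/2$, and an RST-style potential argument to bound $\lambda=O(\log n)$. Two points of comparison are worth noting.

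First, for the cut player the paper does not compute a spectral direction. It instead proves \emph{existence} of disjoint $S,S'\subseteq V(Y)$ satisfying the potential property (\Cref{prop:cut}) via the random-projection argument of~\cite{KRV09,RST14}, and then has every vertex learn the full topology of $Y$ (only $\poly(k)$ information) and locally \emph{enumerate} all disjoint pairs $(S,S')$, agreeing on the lexicographically smallest pair satisfying the property. This sidesteps the derandomization of the projection step entirely; your spectral-direction idea would also work but is more delicate to state precisely. Also, $(S,S')$ need not be a bipartition of $V(Y)$: RST gives $|S|\le |Y|/8$ and $|S'|\ge |Y|/2$.

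Second, the ``main obstacle'' you flag---the fractional-matching potential drop---is resolved in the paper not by tracking row sums of $M^i$ directly, but by a decomposition trick. Since $M_X^i$ saturates $S_X$ and every part $X_u^*$ with $v_u\in S$ has at least $n'/18$ vertices, the vertex-level matching on $X$ can be partitioned into at least $n'/18$ complete maps $\sigma_j:S\to S'$. The standard computation gives $\Pi(i-1)-\Pi(i)\ge \sum_{a\in S}\sum_{b\in S'}\tfrac{m_{ab}}{2}\|R_{i-1}[a]-R_{i-1}[b]\|^2$, which is then lower-bounded by $\tfrac{1}{2n'}\sum_j\sum_{a\in S}\|R_{i-1}[a]-R_{i-1}[\sigma_j(a)]\|^2 \ge \tfrac{1}{2n'}\cdot\tfrac{n'}{18}\cdot\tfrac{1}{720}\Pi(i-1)$ by the cut player's guarantee applied to each $\sigma_j$. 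This is exactly the ``near-uniform cluster sizes'' fact you anticipated, used in a clean combinatorial way rather than through fractional row-sum bounds.
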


\subsection{Distributed Expander Sorting}\label{sec:expander-sort}

In this subsection, we introduce several primitives that are recursively dependent on the internal routing tasks (i.e., \textbf{Task 2} and \textbf{Task 3}). Perhaps, the most interesting side-product result we obtain is a deterministic sorting algorithm on an expander graph, described as follows.

\begin{restatable}[Deterministic Expander Sort]{theorem}{deterministicexpandersort}
\label{thm:expander-sort}
Let $X^* = X\cup X'$ be a virtual graph such that $X$ is a good node in the hierarchical decomposition $\T$ of a $\psi$-sparsity expander and there is an embedded $X'$-matching $f_M$ from $X'$ to $X$ with a flattened quality $Q(f^0_M)=\poly(\psi^{-1}, \log^{1/\epsilon}n)$.
Suppose that each node holds at most $L$ tokens, and each token $z$ has a (not necessarily unique) key $k_z$. Then, there exists a \congest algorithm such that, when the algorithm stops, for any two tokens $x$ and $y$ on two different vertices $u$ and $v$ with $\mathit{ID}(u) < \mathit{ID}(v)$, we have $k_x \le k_y$. Moreover, each vertex holds at most $L$ tokens.
The preprocessing time satisfies the following recurrence relations:
\begin{align*}
T_{\rm sort}^{\textsf{pre}}(|X^*|) &= 2Q(f^0_{M_{root}})^2 + T_{\rm sort}^{\textsf{pre}}(|X|)\\
T_{\rm{sort}}^{\textsf{pre}}(|X|) &= \begin{cases}
T_2^{\textsf{pre}}(|X|) + O(\log n)\cdot T_2(|X|, 1) + \poly(\psi^{-1}, k, \log^{1/\epsilon}n) & \text{if $X$ is non-leaf,}\\
\poly(\psi^{-1}, k, \log^{1/\epsilon}n)  & \text{if $X$ is a leaf.}
\end{cases}
\end{align*}
The query time satisfies the following recurrence relations:
\begin{align*}
    T_{\rm{sort}}(|X^*|, L) &= 2Q(f^0_{M_{root}})^2 + T_{\rm{sort}}^{\textsf{pre}}(|X|)\\
    T_{\rm sort}(|X|, L) &= \begin{cases} 
T_3(|X|, L) + L \rho_{best}\cdot  Q(\oldmathcal{I}_{\textsf{AKS}})^2 + 
L \cdot Q(f^0_{M_X})^2 + 
T_{\rm{sort}}(6|X|/k, L)  \hspace*{0.1cm} \text{if $X$ is non-leaf,}\\
L \cdot \poly(\psi^{-1}, \log^{1/\epsilon} n) 
\hspace*{4.365cm}\text{if $X$ is a leaf component.}
\end{cases}
\end{align*}
\end{restatable}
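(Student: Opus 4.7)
The plan is to establish both recurrences by structural induction over $\T$. The outer $X^*$ setup reduces to sorting on $X$ via the matching $f_M$, and the sort on $X$ is solved by the hierarchical strategy when $X$ is internal, or directly via a sorting-network embedding when $X$ is a trimmed leaf.

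\textbf{Outer reduction and leaf base case.} For tokens distributed on $X^* = X\cup X'$, I would first push every token on $X'$ to its unique partner in $X$ along $f_M$; each $X$-vertex then holds at most $2L$ tokens, tagged with their original source. After sorting on $X$ with respect to the \emph{combined} ID order over $X\cup X'$ --- so that each sorted output slot identifies a specific $X$- or $X'$-vertex --- we use $f_M$ in reverse to deliver the tokens destined for $X'$. Each traversal along $f_M$ costs $Q(f^0_M)^2$ rounds by \Cref{fact:routing-along-precomputed-paths}, giving the $2Q(f^0_{M_{root}})^2$ term. When $X$ is a leaf, \Cref{prop:hierarchical}(\ref{itm:hierarchical:embedding}) gives $H_X$ sparsity $\log^{-O(1/\epsilon)}n$ and polylogarithmic diameter, and $|X|\ge k^4$ after trimming; we then directly embed an AKS-style sorting network into $H_X$ and process all $L|X|$ tokens in $L\cdot\poly(\psi^{-1},\log^{1/\epsilon}n)$ rounds.

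\textbf{Inductive step on a non-leaf $X$.} By \Cref{prop:hierarchical}(\ref{itm:hierarchical:structure}) the parts $X_1^*,\ldots,X_t^*$ are ordered so that every ID in $X_j^*$ precedes every ID in $X_{j+1}^*$. Consequently, sorting on $X$ reduces to: (a) assigning each token $z$ a part mark $j_z\in\{1,\ldots,t\}$ whose ID range contains $z$'s rank-based destination; (b) calling Task 3 with these part marks to route each token into $X_{j_z}^*$ at cost $T_3(|X|, L)$; (c) recursively sorting within each $X_j^* = X_j \cup X_j'$ in parallel. The matching $M_j^*$ from \Cref{prop:hierarchical}(\ref{itm:hierarchical:matching}) provides exactly the $X^*$-style embedding required to invoke the theorem inductively on $X_j^*$, yielding a single $T_{\rm sort}(6|X|/k, L)$ term taken in parallel over all parts. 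The $L\cdot Q(f^0_{M_X})^2$ term accounts for an internal use of $f_{M_X}$ to ferry tokens between $X_j$ and $X_j'$ during step (b)/(c), and the $L\rho_{best}\cdot Q(\oldmathcal{I}_{\textsf{AKS}})^2$ term covers a local AKS sort inside each best-node cluster, where up to $L\rho_{best}$ tokens can accumulate per best vertex because of delegation from non-best vertices.

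\textbf{Preprocessing and main obstacle.} During preprocessing I would install the Task 2 infrastructure and then run $O(\log n)$ unit-load Task 2 queries to propagate per-level structural metadata (sizes $|X_j^*|$, cumulative prefix sums $\sum_{j'<j}|X_{j'}^*|$, and each vertex's global ID rank) to every vertex, accounting for the $T_2^{\textsf{pre}}(|X|) + O(\log n)\cdot T_2(|X|, 1)$ contribution. The chief difficulty is step (a) at query time: the part-mark assignment must satisfy Task 3's load constraint --- at most $L\cdot|X_j^*|$ tokens with mark $j$ --- \emph{exactly}. I would break ties among equal keys consistently by source-vertex ID, then combine the precomputed structural metadata with aggregation passes along the hierarchy (whose costs are folded into the AKS and matching-embedding terms) so that every token deterministically learns its global rank and from it its part mark. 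The main technical hurdle is verifying that the resulting loads really satisfy Task 3's hypotheses, and that the recursive call on each $X_j^*$ inherits a per-vertex load bound of $L$ (rather than an inflated one) so that the induction closes without hidden polynomial blowups in $k$.
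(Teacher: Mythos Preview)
Your recursion scheme has a circularity that the paper's proof sidesteps. In step (a) you need each token to learn its global rank among all tokens on $X$ in order to choose the correct part mark $j_z$; but global ranking is exactly the token-ranking problem, which the paper establishes (\Cref{thm:token-ranking}) reduces \emph{to} expander sorting on $X$. Your proposed workaround of ``aggregation passes along the hierarchy'' does not break this circle without already having a sorting or ranking primitive on $X$ available, and nothing in the stated recurrence budgets for such a pass at the top level.

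The paper's argument avoids this by routing to parts according to the \emph{vertex's own ID}, not the token's key rank. Concretely: at query time, each vertex $v$ sets its part mark to the index $j_v$ of the part containing $v$ in the ID order (trivially known after preprocessing), then Task~3 plus the recursive sort within each $X_j^*$ together move tokens into $X_{best}$ in an \emph{ID-order-preserving} way---this is the ``order-preserving all-to-best route'' and accounts for the $T_3(|X|,L)$, $L\cdot Q(f^0_{M_X})^2$, and $T_{\rm sort}(6|X|/k,L)$ terms. The actual sorting by key is then a single \emph{global} AKS simulation over all of $X_{best}$ (not a local sort per best-node cluster as you write); that is the $L\rho_{best}\cdot Q(\oldmathcal{I}_{\textsf{AKS}})^2$ term. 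Reversing the all-to-best route delivers the sorted tokens back to all of $X$. Correspondingly, the $O(\log n)\cdot T_2(|X|,1)$ preprocessing term is not metadata propagation: it is the cost of embedding the $O(\log n)$ matching layers of the AKS network onto $X_{best}$ via Task~2.
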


Solving the recurrence relation requires solving the recurrence relations for \textbf{Task 2} and \textbf{Task 3}, which we defer to \Cref{sec:solving-recurrence-relation}.

\paragraph{Applications}
The distributed expander sort can be used for the following useful primitives, including token ranking, local propagation, local serialization, and local aggregation.
The term \emph{local} here refers to the flexibility of setting an arbitrary \emph{grouping key}, such that the described tasks are performed on each group of tokens independently but simultaneously.
We state the results here and provide detailed proofs in \Cref{sec:distributed-expander-sorting-appendix}.

\begin{restatable}[Token Ranking]{theorem}{tokenranking}\label{thm:token-ranking}
In $O(T_{\rm sort}(|X^*|, L))$ rounds,
each token receives a rank $r_z$ which equals the number of distinct keys that are strictly less than $k_z$.
\end{restatable}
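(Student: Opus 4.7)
The plan is to reduce token ranking to one call of the deterministic expander sort from \Cref{thm:expander-sort} followed by a constant number of further sort/routing calls of the same asymptotic cost, giving an overall bound of $O(T_{\rm sort}(|X^*|, L))$. First, I would invoke expander sort with the keys $k_z$ as the sort key. After this, reading tokens in increasing vertex-ID order, and in local order inside each vertex, yields a globally non-decreasing key sequence with each vertex still holding at most $L$ tokens.

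Second, I would compute a \emph{group-leader marker} $m_z\in\{0,1\}$ for every token: $m_z=1$ iff $k_z$ differs from the key of the token immediately preceding $z$ in the sorted order, with the globally first token declared a group leader by convention. Within a vertex this check is purely local, so the only cross-vertex information needed is the largest key held by the predecessor non-empty vertex of each non-empty vertex. I would obtain this by an auxiliary routine costing $O(1)$ sort calls: each non-empty vertex $v$ emits a sentinel carrying $(\ID(v), k^{\max}_v)$, these sentinels are sorted by $\ID$ so that they lie consecutively, a one-position shift in the sorted order is performed, and each shifted sentinel is then routed back to its successor vertex by one more sort call keyed by the successor's ID.

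With markers in place, the rank of token $z$ at sorted position $p_z$ equals $\bigl(\sum_{w:\, p_w\le p_z} m_w\bigr) - 1$, so the remaining task is a distributed prefix sum of the markers along the sorted order. I would carry it out in three stages: (a) each vertex aggregates its own marker total $\sigma_v$; (b) compute, for every non-empty vertex $v$, the cross-vertex prefix $\Sigma_v = \sum_{u:\,\ID(u) < \ID(v)} \sigma_u$; (c) each vertex uses $\Sigma_v$ as an offset and runs a local prefix over its own markers, subtracting $1$ at the end to produce the individual per-token ranks.

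The main obstacle is step (b), a cross-vertex prefix scan over at most $|X^*|$ values. I would implement it via the classical scan-by-sort reduction inside the hierarchical decomposition: sort the $(\ID(v), \sigma_v)$ pairs so that consecutive pairs in ID order occupy consecutive sorted positions, pack them evenly across the best vertices $X_{best}$ by another sort call with computed destination marks, compute the running sum within $X_{best}$ using a pipelined pass along its embedded expander of diameter $\log^{O(1/\epsilon)} n$, and deliver each prefix value back to its originating vertex by one more sort call. Each sub-step costs $O(T_{\rm sort}(|X^*|, L))$, so the total round complexity matches the claim. The delicate part of this plan is ensuring that the intermediate sentinel tokens stay load-balanced at every stage, which is handled by distributing them uniformly across $X_{best}$ rather than concentrating them on any particular vertex.
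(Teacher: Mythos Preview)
Your overall strategy---sort once, mark the first occurrence of each key, then turn the marker sequence into ranks---is a reasonable route and genuinely different from the paper's. The paper never computes an explicit prefix sum. Instead it performs \emph{deduplication inside} the comparison-based sort: each token carries a unique tag, and whenever two equal keys are compared the larger-tag token is flagged as a duplicate. A ``chain of comparisons'' argument shows exactly one survivor per key. A second sort on the survivors alone (padded to exactly $L$ tokens per vertex) then places the survivor with the $j$-th smallest key at global position $j$, so its rank is read off directly as $j-1$; finally the ranks are pushed back to the duplicates by replaying the first sort in reverse. No cross-vertex scan is ever needed.

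The gap in your plan is step~(b), the cross-vertex prefix sum $\Sigma_v=\sum_{u:\ID(u)<\ID(v)}\sigma_u$. A ``pipelined pass along the embedded expander of diameter $\log^{O(1/\epsilon)} n$'' does not compute an ordered prefix sum: a tree-based up/down sweep gives prefixes in the tree's in-order, which has no relation to the vertex-ID order you need, and a linear scan in ID order costs $\Theta(|X_{best}|)$ rounds regardless of diameter. Packing the sentinels onto $X_{best}$ by a sort call does not help, because you have merely moved the ordered-scan problem onto a set of the same size. Simulating a PRAM scan via repeated routing would cost $O(\log n)$ sort calls, i.e.\ $O(\log n)\cdot T_{\rm sort}$, not $O(T_{\rm sort})$. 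So as written the bound is not established.

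Your approach is easily repaired, and the repair essentially converges to the paper's idea: once you have marked group leaders, observe that the leaders have pairwise distinct keys, so you are back in the ``ideal case.'' Discard the prefix-sum plan; instead pad each vertex to exactly $L$ tokens using $+\infty$ keys and invoke \Cref{thm:expander-sort} on the leaders. After this second sort the leader landing on the vertex of rank $i$ at local position $\ell$ has rank $(i-1)L+\ell-1$. Then propagate each leader's rank to the non-leaders sharing its key by reversing the first sort (exactly the mechanism of \Cref{lemma:local-propagation}). This keeps the total at $O(1)$ invocations of \Cref{thm:expander-sort}, matching the claimed $O(T_{\rm sort}(|X^*|,L))$ bound. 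Compared with the paper, you pay a couple of extra sort calls to obtain predecessor keys for leader marking, whereas the paper folds deduplication into the first sort for free; both are $O(T_{\rm sort})$.
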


\begin{restatable}[Local Propagation]{lemma}{localpropagation}\label{lemma:local-propagation}
Suppose that each token has a key $k_z$, a unique tag $u_z$, and a variable $v_z$.
In $O(T_{\rm sort}(|X^*|, L))$ rounds, each token's variable is rewritten as $v_{z^*}$ where $z^*=\mathrm{argmin}_x\{u_x \ |\ k_x=k_z\}$.
\end{restatable}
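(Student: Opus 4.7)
The plan is to reduce local propagation to one initial invocation of deterministic expander sort together with $O(\log |X^{*}|)$ lighter sorts implementing a segmented pointer-jumping broadcast on the resulting sorted layout.

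First, I would apply \Cref{thm:expander-sort} with the compound key $(k_z, u_z)$ to sort all tokens. After this sort, tokens sharing any key $k$ occupy a contiguous range of positions in the global order induced by vertex ID together with any fixed within-vertex ordering, and the leader $z^{*}=\operatorname{argmin}_{x}\{u_{x}\mid k_{x}=k_{z}\}$ of each such range sits at the first position of the range. Each token then decides whether it is a leader by comparing its key with the key of its immediate predecessor in the sorted layout; the comparison is local when the predecessor resides on the same vertex, and the boundary case (a token that is first on its vertex) is handled by one additional invocation of \Cref{thm:expander-sort} that co-locates each such boundary token with the last token of the preceding nonempty vertex.

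Next, I would propagate each leader's value $v_{z^{*}}$ to every follower in its key-range by $O(\log |X^{*}|)$ rounds of segmented pointer jumping. Initially every token carries a tentative value $v_{z}$ and is flagged as locked exactly at leaders. In round $i$, each still-unlocked token queries the token sitting $2^{i-1}$ positions earlier in the sorted layout; if that token shares the same key, the querier overwrites its tentative value with the queried value, otherwise it locks. All queries in a given round share a common offset, so the round is executed by a constant number of invocations of \Cref{thm:expander-sort}: each querier spawns a request tagged with its target position, one sort co-locates the requests with their answers, and one more sort ships the answers back to the queriers. After $O(\log |X^{*}|)$ such rounds every token has locked with value $v_{z^{*}}$, which is the desired output.

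The hardest part will be arguing that the overall cost is $O(T_{\mathrm{sort}}(|X^{*}|, L))$ and not $O(\log n)\cdot T_{\mathrm{sort}}(|X^{*}|, L)$. Unrolling the recurrence of \Cref{thm:expander-sort} shows that $T_{\mathrm{sort}}(|X^{*}|, L)$ already carries a factor of $\log^{O(1/\epsilon)} n$ from the polylogarithmic quality of all embeddings in the hierarchy; multiplying by an extra $O(\log n)$ factor for the pointer-jumping rounds merely bumps the exponent of $\log n$ by one and is absorbed into this $\log^{O(1/\epsilon)} n$. One must also verify that every pointer-jumping round is a valid sort instance with load parameter $O(L)$, which follows because after the initial sort each vertex carries $O(L)$ tokens and thus spawns $O(L)$ requests per round.
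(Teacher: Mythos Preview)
Your pointer-jumping scheme is sound, but it is a genuinely different route from the paper's and it does not quite establish the stated $O(T_{\rm sort}(|X^*|,L))$ bound.

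The paper's proof is a two-line argument that exploits the fact that the expander sort of \Cref{thm:expander-sort} is \emph{comparison-based}, being implemented by an AKS sorting network. Just before the lemma, the paper isolates a ``chain of comparisons'' property (\Cref{sorting-chain}): for every token $z$ there is a chronologically ordered chain of comparisons inside the sort connecting $z$ to the minimizer $z^*$ of its key class. Local propagation is then achieved by running the sort once and then \emph{reversing} it; at every comparison between two equal-key tokens during the reversal, the larger-tag token overwrites its tag and variable with those of the smaller-tag token. The chain property guarantees that $v_{z^*}$ reaches every token in its class. The cost is one forward sort plus one reversed replay, i.e.\ a true constant multiple of $T_{\rm sort}(|X^*|,L)$.

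Your approach, by contrast, treats the sort as a black box and pays $\Theta(\log n)$ invocations. The absorption argument you give---that the extra $\log n$ vanishes into the $\log^{O(1/\epsilon)} n$ already hidden in $T_{\rm sort}$---does not prove the lemma as stated: $O(T_{\rm sort})$ asserts a constant-factor bound relative to a fixed function, and $(\log n)\cdot T_{\rm sort}$ is not $O(T_{\rm sort})$. It is true that the looser bound would be harmless for every downstream recurrence in the paper (indeed, the paper makes exactly this concession in \Cref{sec:equivalence} when the sort is \emph{not} assumed comparison-based, incurring the same extra $O(\log n)$), but you have proved a weaker statement than the one in the lemma. To match the claimed bound you need the white-box reversal trick, not pointer jumping.
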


\begin{restatable}[Local Serialization]{corollary}{localserialization}\label{lemma:local-serialization}
In $O(T_{\rm sort}(|X^*|, L))$ rounds, each token receives a distinct value $\mathit{SID}_z \in \{0, 1, \ldots, \mathit{Count}(k_z)-1\}$ among all tokens with the same key. Here $\mathit{Count}(k_z)$ refers to the number of tokens with key $k_z$.
\end{restatable}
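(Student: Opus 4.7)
The plan is to derive $\mathit{SID}_z$ from \Cref{thm:token-ranking} and \Cref{lemma:local-propagation} by ranking on a lexicographic composite key. First, I would give each token a globally unique tag $u_z := (\mathrm{ID}(v_z), i_z)$, where $v_z$ is the vertex currently holding $z$ and $i_z \in \{0, 1, \ldots, L-1\}$ is its local index at $v_z$; these tags are pairwise distinct because vertex IDs are. Then I would form the composite key $k'_z := (k_z, u_z)$ under the lexicographic order; since $u_z$ is unique, $k'_z$ is also unique across all tokens.

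Next, I would invoke \Cref{thm:token-ranking} with the composite key $k'_z$. Each token receives a rank $r_z$ equal to the number of distinct composite keys strictly less than $k'_z$. Since the $k'_x$ are pairwise distinct, $r_z$ is exactly the $0$-indexed position of $z$ in the global sorted order by $(k_z, u_z)$; set $P_z := r_z$. By the lexicographic structure of $k'_z$, all tokens sharing a fixed $k_z$ value occupy a contiguous block of positions in this sorted order, and the smallest position in each block belongs to the token with the smallest $u_z$ inside its key class. I would then invoke \Cref{lemma:local-propagation} with key $k_z$, tag $u_z$, and variable $P_z$; after execution, each token's variable holds $P_{z^*}$, where $z^* = \mathrm{argmin}_x\{u_x \mid k_x = k_z\}$, which is precisely the starting position of $z$'s block. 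Finally, set $\mathit{SID}_z := P_z - P_{z^*}$ by purely local arithmetic. Within each key class of size $\mathrm{Count}(k_z)$, the values $P_z$ are exactly $P_{z^*}, P_{z^*}+1, \ldots, P_{z^*}+\mathrm{Count}(k_z)-1$, so $\mathit{SID}_z$ takes the distinct values $\{0, 1, \ldots, \mathrm{Count}(k_z)-1\}$ as required.

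The round complexity is dominated by one call each to \Cref{thm:token-ranking} and \Cref{lemma:local-propagation}, both costing $O(T_{\rm sort}(|X^*|, L))$ rounds, which matches the claimed bound. There is no serious obstacle; the only subtle point is that \Cref{thm:token-ranking} is applied to the unique composite key $k'_z$ rather than the original $k_z$, so that ``number of distinct keys strictly smaller'' collapses to a precise integer position, and that the tags $u_z$ supplied to \Cref{lemma:local-propagation} are genuinely unique as its statement requires.
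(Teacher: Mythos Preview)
Your proposal is correct and takes essentially the same approach as the paper: rank once on the composite key $(k_z,u_z)$ to get a global position, propagate the minimum position within each key class, and subtract. The paper phrases the second step as a second invocation of token ranking ``but when applying the propagation all tokens with the same key obtain the smallest rank among them,'' which is exactly your call to \Cref{lemma:local-propagation}.
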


\begin{restatable}[Local Aggregation]{corollary}{localaggregation}\label{lemma:local-aggregation}
In $O(T_{\rm sort}(|X^*|, L))$ rounds, each token $z$ learns the value $\mathit{Count}(k_z)$.
\end{restatable}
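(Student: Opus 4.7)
I plan to reduce Local Aggregation to the already-established Local Serialization and Local Propagation primitives; each runs in $O(T_{\rm sort}(|X^*|, L))$ rounds, and composing them yields the claimed bound.

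First I would invoke Local Serialization (\Cref{lemma:local-serialization}) to equip every token $z$ with a distinct value $\mathit{SID}_z \in \{0, 1, \ldots, \mathit{Count}(k_z) - 1\}$ within its key-group. By construction, $\max_{y:\, k_y = k_z} \mathit{SID}_y + 1 = \mathit{Count}(k_z)$, so the remaining task is to compute this per-group maximum and broadcast it to every token sharing the key.

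Next I would invoke Local Propagation (\Cref{lemma:local-propagation}) with variable $v_z := \mathit{SID}_z + 1$ and tag $u_z$ equal to a globally unique quantity that is lexicographically decreasing in $\mathit{SID}_z$; concretely, take $u_z := (-\mathit{SID}_z, \mathit{gid}_z)$ under lexicographic ordering, where $\mathit{gid}_z$ concatenates the host vertex's $\mathit{ID}$ with the position of $z$ in that vertex's local token list. Because $\mathit{SID}$ values are distinct within each key-group, $\mathrm{argmin}_x \{u_x \mid k_x = k_z\}$ coincides with the token in the group having the largest $\mathit{SID}$. After propagation every token $z$ thus holds $v_{z^*} = \mathit{Count}(k_z)$, which is the desired output.

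I do not anticipate any substantive technical obstacle, since the heavy lifting (routing, sorting, and propagation) is fully encapsulated in the previously developed primitives. The only point requiring care is the tag design: each $u_z$ must be globally unique to satisfy the hypothesis of Local Propagation while simultaneously ordering the tokens in each key-group so that argmin-based propagation realizes a per-group maximum, and the lexicographic construction above meets both constraints simultaneously.
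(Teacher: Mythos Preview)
Your proposal is correct and follows essentially the same approach as the paper: both reduce Local Aggregation to a constant number of the already-established ranking/serialization and propagation primitives, yielding the $O(T_{\rm sort}(|X^*|,L))$ bound. The only cosmetic difference is that the paper invokes two token rankings directly (the second with negated tags) so that the smallest-tag token learns the count before propagating, whereas you go through Local Serialization as a black box and then use a lexicographic tag $(-\mathit{SID}_z,\mathit{gid}_z)$ so that the argmin in Local Propagation selects the group-wise maximum $\mathit{SID}$; the underlying mechanics and cost are the same.
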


\section{Solving Task 2 and Task 3 using Shufflers}
\label{sec:solving-task2-querytime}

In this section, we aim to describe our algorithms for solving \textbf{Task 2} and \textbf{Task 3}.

\paragraph{Algorithm for Task 2} 
The algorithm follows immediately by \Cref{lemma:leaf-components} and \Cref{thm:task3}: the algorithm routes the tokens to the corresponding parts, sends the tokens to vertices along the matching toward the good node, and then recurse on the good nodes.

\paragraph{Algorithm for Task 3}
To solve Task 3, we use another meet-in-the-middle idea.
Suppose that we are able to disperse all tokens with the same part mark
as even as possible.
Then, the algorithm may apply the same procedure on the instance where $2L$ dummy tokens are created with part mark $j$ on each vertex at $X_j^*$. Once these dummy tokens are dispersed and meet the real token of the same part mark,
a desired routing is found and each dummy token brings \emph{at most one} real tokens to the goal.

\begin{definition}\label{def:dispersed-configuration}
Let $X\in\T$ be a non-leaf good node with $t$ parts $X_1^*, X_2^*, \ldots, X_t^*$.
We say that a configuration is a \emph{dispersed configuration}, if for all $i, j\in\{1, 2, \ldots, t\}$, the number of tokens on vertices of $X_i^*$ having part mark $j_z=j$, denoted as $|T_{i, j}|$, satisfies 
\[0.9\frac{N_j}{t} - 0.1\frac{|X|}{t^2} \le |T_{i, j}| \le 1.1\frac{N_j}{t} + 0.1\frac{|X|}{t^2}, \]
where $N_j$ is the total number of tokens whose part mark equals $j$.
We say that a configuration is a \emph{final configuration}, if every token with part mark $j_z=j$ is located on a vertex in $X_j^*$.
\end{definition}

Notice that the condition of \textbf{Task 3} requires that the number of real tokens with any part mark $j$ is at most $L\cdot |X_j^*|$.
The total number of dummy tokens generated from part $j$ is $2L\cdot |X_j^*|$.
With the above definition \Cref{def:dispersed-configuration},
if both real tokens and dummy tokens are routed to a dispersed configuration, we will show that on each part $X_i^*$ and for each part mark $j$ the total number of dummy tokens is guaranteed to be outnumbered than the number of real tokens of the same part mark.

Suppose that we have already applied the above idea where the real tokens and the dummy tokens are routed into dispersed configurations.
There is a caveat: these tokens may be located at different vertices within the same part $X_i^*$ and they do not meet each other.
Therefore, to complete the route, we will have to match the real tokens and the dummy tokens of the same part mark within each $X_i^*$. 

In \Cref{sec:arbitrary-conf-to-dispersed-conf} and \Cref{sec:tokens_to_portal} we show how to transform an arbitrary configuration to a dispersed configuration recursively. In \Cref{sec:merge_dispersed}, we show how to merge two potentially different dispersed configurations. In \Cref{sec:task2-leaf-components}, we show how to solve a leaf case of \textbf{Task 2}. We finish the section with time complexity analysis in \Cref{sec:round_analysis}.

\subsection{Arbitrary Configuration to the Dispersed Configuration}\label{sec:arbitrary-conf-to-dispersed-conf}
Let $\mathcal{M}_X := ((M_X^1, f_{M_{X}^{1}}), (M_X^2, f_{M_{X}^{2}}), \ldots, ({M}_X^\lambda, f_{M_{X}^{\lambda}}))$ be the shuffler of $X$, where $\lambda = O(\log n)$. Let $(M^{1}, \ldots, M^{\lambda})$ be the sequence of corresponding natural fraction matching in $Y$ to the sequence of matching $(M_X^1, M_X^2, \ldots, M_{X}^{\lambda})$.
Recall from \Cref{def:shuffler} that the random walk $R_{M^{\lambda}} \ldots R_{M^{1}}$ converges to nearly uniform distribution from any initial distribution.
We will distribute the tokens according to the fraction matching iteration by iteration. That is, in iteration $q$, we will distribute the tokens according to $F_{M^{q}}$.  

Consider a fractional matching $M$ in an iteration $s$. Let $T_{i,l}$ be the $T_{l}$ tokens at $X^{*}_i$. In each iteration, the goal is the following: For each $i,j,l$, we send $\lfloor (m_{ij}/2)|T_{i,l}| \rfloor$  tokens in $T_{i,l}$ from $X^{*}_i$ to $X^{*}_j$. To achieve this, we will need to first route the $\lfloor (m_{ij}/2)|T_{i,l}| \rfloor$ tokens to the {\it portals} $P_{i,j}$, where $P_{i,j} \subseteq X^{*}_i$ is defined as $P_{i,j} = \{x \in X^{*}_i \mid \mbox{$(x,y) \in M^{q}_X$ for some $y \in X^{*}_j$} \}$.

In \Cref{sec:tokens_to_portal}, we describe how such a task of routing the tokens to the portals can be done recursively. Once the tokens are routed to the portals, they can follow the path embedding corresponding to the virtual matching edge to arrive in $X^{*}_j$. The following lemma 
shows that a dispersed configuration is achieved after doing the token distribution according to $(M^{1}, \ldots, M^{\lambda})$.

\begin{restatable}{lemma}{tokendistribution}
Let $(M^{1}, \ldots, M^{\lambda})$ be the sequence of natural fractional matching in $Y$ that corresponds to the sequence of matching in $\mathcal{M}_{X}$. For $q=1\ldots \lambda$, suppose that during iteration $q$, we send $\lfloor (m_{ij}/2)|T^{q-1}_{i,l}| \rfloor$ tokens in $T^{q-1}_{i,l}$ from $X^{*}_i$ to $X^{*}_j$ for each $1 \leq i,j,l \leq t$, where $T^{q-1}_{i,l}$ is set of tokens in $X^{*}_i$ destined to part $l$ at the end of iteration $q-1$. A dispersed configuration is achieved at the end of the procedure. %
\end{restatable}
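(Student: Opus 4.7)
The plan is to reduce the analysis to a linear recurrence on the per-destination count vector $\tilde{T}^q_{\cdot,l}$, decompose its deviation from the uniform vector $(N_l/t)\bm{1}$ into a mixing term and a flooring term, and show that both terms fit inside the slack allowed by \Cref{def:dispersed-configuration}.

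First, I would write the floored procedure in affine form. Set $s^q_{ij} := R_{M^q}[i,j]\,\tilde{T}^{q-1}_{i,l} - \lfloor R_{M^q}[i,j]\,\tilde{T}^{q-1}_{i,l}\rfloor \in [0,1)$ for the mass lost to flooring on edge $(i,j)$ at iteration $q$. Using the identity $R_{M^q}[i,i] = 1 - \sum_{j\ne i} R_{M^q}[i,j]$ and the symmetry $R_{M^q}[i,j] = R_{M^q}[j,i]$ (matchings are symmetric), direct bookkeeping gives
\[
\tilde{T}^q_{\cdot,l} \;=\; R_{M^q}\,\tilde{T}^{q-1}_{\cdot,l} \;+\; \eta^q, \qquad \eta^q_i \;=\; \sum_{j\ne i} s^q_{ij} \;-\; \sum_{j\ne i} s^q_{ji},
\]
with $\|\eta^q\|_\infty < 2t$. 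Iterating yields $\tilde{T}^\lambda_{\cdot,l} = R_\lambda\,T^0_{\cdot,l} + \sum_{q=1}^\lambda R_{M^\lambda}\cdots R_{M^{q+1}}\,\eta^q$.

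Next, I would bound the two error contributions separately. For the mixing term, the shuffler guarantee of \Cref{def:shuffler} gives $\sum_y \|R_\lambda[y] - \bm{1}/t\|^2 \le 1/(9n^3)$, hence $|R_\lambda[i,j] - 1/t| \le 1/(3n^{3/2})$ for every entry. Since $\sum_j T^0_{j,l} = N_l$, this implies $\|R_\lambda T^0_{\cdot,l} - (N_l/t)\bm{1}\|_\infty \le N_l/(3n^{3/2})$, which is at most $0.1\,N_l/t$ because $t = O(n^\epsilon)$. For the flooring term, each $R_{M^q}$ is doubly stochastic (row sums equal $1$ by construction, column sums equal $1$ by symmetry), hence $\ell_\infty$-non-expansive; therefore $\|R_{M^\lambda}\cdots R_{M^{q+1}}\,\eta^q\|_\infty \le \|\eta^q\|_\infty < 2t$, and summing over $q$ gives a cumulative flooring deviation of at most $2\lambda t$. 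With $\lambda = O(\log n)$, $t \le k = n^\epsilon$, and the leaf-trimming guarantee $|X| \ge k^4$, this is at most $0.1\,|X|/t^2$. Adding the two bounds matches \Cref{def:dispersed-configuration} on both sides.

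I expect the main obstacle to be the first step: writing the floored update in a clean affine form with an entrywise perturbation $\eta^q$ whose $\ell_\infty$ bound is \emph{independent} of the current token counts. Once that is in place, the rest is standard linear algebra: one combines the shuffler's $\ell_2$ mixing guarantee with the $\ell_\infty$ non-expansiveness of doubly stochastic matrices, and invokes the leaf-trimming lower bound $|X|\ge k^4$ to absorb the linearly accumulated flooring error into the $|X|/t^2$ slack.
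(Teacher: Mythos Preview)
Your proposal is correct and follows essentially the same approach as the paper. The paper chains the per-step inequality $R_{M^q}\tilde{T}^{q-1}_{\cdot,l} - t\bm{1} \le \tilde{T}^q_{\cdot,l} \le R_{M^q}\tilde{T}^{q-1}_{\cdot,l} + t\bm{1}$ directly and uses that $R_{M^q}$ fixes the constant vector to accumulate the error as $\lambda t$; you write the same recurrence as an exact affine update with perturbation $\eta^q$ and invoke $\ell_\infty$-non-expansiveness of row-stochastic matrices instead---the two formulations are equivalent, and your separation into ``mixing term'' versus ``flooring term'' matches the paper's final chain of inequalities line by line.
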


\begin{proof}
Fix an iteration $q$ where $1\le q \le \lambda$. We claim that:
$$ R_{M^q} \cdot (|T^{q-1}_{1,l}|, \ldots, |T^{q-1}_{t,l}|)^\intercal - (t,\ldots,t)^\intercal \leq (|T^q_{1,l}|, \ldots, |T^q_{t,l}| )^\intercal \leq R_{M^q} \cdot (|T^{q-1}_{1,l}|, \ldots, |T^{q-1}_{t,l}|)^\intercal + (t,\ldots,t)^\intercal$$

This is because:
\begin{align*} |T^{q}_{i,l}| &= \left(|T^{q-1}_{i,l}|  - \sum_{\substack{j=1\\j\neq i }}^{t} \lfloor(m_{j,i}/2) |T^{q-1}_{i,l}| \rfloor \right) + \sum_{\substack{j=1\\j\neq i }}^{t} \lfloor (m_{j,i}/2) \cdot |T^{q-1}_{j,l}|  \rfloor \\
&\leq t +  \left(|T^{q-1}_{i,l}|  - \sum_{\substack{j=1\\j\neq i }}^{t} (m_{j,i}/2) |T^{q-1}_{i,l}|  \right) + \sum_{\substack{j=1\\j\neq i }}^{t}  (m_{j,i}/2) \cdot |T^{q-1}_{j,l}| \\
&= t + R_{M^q}[i]\cdot (|T^{q-1}_{1,l}|, \ldots, |T^{q-1}_{t,l}|)^\intercal  \end{align*}

Similarly: 
\begin{align*} |T^{q}_{i,l}| &\geq (-t) +  \left(|T^{q-1}_{i,l}|  - \sum_{\substack{j=1\\j\neq i }}^{t} (m_{j,i}/2) |T^{q-1}_{i,l}|  \right) + \sum_{\substack{j=1\\j\neq i }}^{t}  (m_{j,i}/2) \cdot |T^{q-1}_{j,l}| \\
&= (-t) + R_{M^q}[i]\cdot (|T^{q-1}_{1,l}|, \ldots, |T^{q-1}_{t,l}|)^\intercal  \end{align*}

Let $\lambda = O(\log n)$ be the last iteration. We have:
\begin{align*}
(|T^{\lambda}_{1,l}|,\ldots, |T^{\lambda}_{t,l}|)^{\intercal} &\leq R_{M^{\lambda}} \cdot (|T^{\lambda-1}_{1,l}|,\ldots, |T^{\lambda-1}_{t,l}|)^{\intercal}  + (t,\ldots, t)^{\intercal} \\
&\leq  R_{M^{\lambda}} \cdot ( R_{M^{\lambda-1}}\cdot(|T^{\lambda-2}_{1,l}|,\ldots, |T^{\lambda-2}_{t,l}|)^{\intercal}  + (t,\ldots, t)^{\intercal} ) +  (t,\ldots, t)^{\intercal} \\
&= R_{M^{\lambda}} \cdot  R_{M^{\lambda-1}}\cdot (|T^{\lambda-2}_{1,l}|,\ldots, |T^{\lambda-2}_{t,l}|)^{\intercal} + 2(t,\ldots,t)^{\intercal} \\
&...\\
&=R_{M^{\lambda}} \cdot R_{M^{\lambda-1}} \cdot \ldots \cdot  R_{M^{1}}(|T^{0}_{1,l}|,\ldots, |T^{0}_{t,l}|)^{\intercal} + \lambda \cdot (t,\ldots, t)^{\intercal} \\
&\leq \left(\left(\frac{1}{t}+ \frac{1}{n^{1.5}}\right)\cdot |T_{l}|  + \lambda t, \ldots, \left(\frac{1}{t} + \frac{1}{n^{1.5}}\right)\cdot |T_{l}| + \lambda t \right)^{\intercal} \\
&= \left(\frac{N_{l}}{t} + \frac{N_{l}}{n^{1.5}} + \lambda t, \ldots, \frac{N_{l}}{t} + \frac{N_{l}}{n^{1.5}} + \lambda t \right)^{\intercal} \\
&= \left(1.1\frac{N_{l}}{t}  + \lambda t, \ldots, \frac{1.1N_{l}}{t} +  \lambda t \right)^{\intercal}\\
&= \left(1.1\frac{N_{l}}{t}  + 0.1\frac{|X|}{t^2}, \ldots, 1.1\frac{N_{l}}{t} +  0.1\frac{|X|}{t^2} \right)^{\intercal} \\
& \hspace{50mm} 0.1\frac{|X|}{t^2} \geq 0.1\frac{n^{4\epsilon}}{n^{2\epsilon}} \geq 0.1n^{2\epsilon} \geq n^{\epsilon} \cdot O(\log n) \geq t\lambda  
\end{align*}
Similarly, we have:
\begin{align*}
(|T^{\lambda}_{1,l}|,\ldots, |T^{\lambda}_{t,l}|)^{\intercal} &\geq \left(0.9\frac{N_{l}}{t}  - 0.1\frac{|X|}{t^2}, \ldots, 0.9\frac{N_{l}}{t} -  0.1\frac{|X|}{t^2} \right)^{\intercal}
\end{align*}
Therefore, a dispersed configuration is achieved after iteration $\lambda$.
\end{proof}

The following corollary is useful to control the number of tokens in each part, which can be useful for deriving the final recurrence. Due to its similar flavor, we state it here:

\begin{corollary}\label{lemma:number-of-max-tokens-after-matching}
Let $N_{\rm{max}}$ be the maximum number of tokens within any part at the beginning of the routing.
For any $q$ such that $1\le q\le \lambda$, the total number of tokens within the part $X_i^*$ after sending the tokens along fraction matchings $M^1, M^2, \ldots, M^q$ is at most $N_{\rm{max}} + t^2q$.
\end{corollary}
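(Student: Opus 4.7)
The plan is to reuse the per-iteration rounding inequality already derived in the proof of the previous lemma, namely
\[|T_{i,l}^q| \;\le\; R_{M^q}[i] \cdot (|T_{1,l}^{q-1}|, \ldots, |T_{t,l}^{q-1}|)^{\intercal} + t,\]
and to sum it across the destination index $l \in \{1,\ldots,t\}$. Writing $N_i^q := \sum_{l=1}^{t} |T_{i,l}^q|$ for the total number of tokens residing in part $X_i^*$ after iteration $q$, linearity of the left factor $R_{M^q}[i]$ gives
\[N_i^q \;\le\; R_{M^q}[i] \cdot (N_1^{q-1}, \ldots, N_t^{q-1})^{\intercal} + t^2,\]
where the additive $t$ slack from the single-$l$ inequality gets multiplied by the $t$ possible part marks.

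Next I would observe that $R_{M^q}$ is row-stochastic with nonnegative entries: the row sums equal $1$ (as noted in the paper right after the definition of $R_i$), and each entry is nonnegative because $x_{uv}\in[0,1]$ and $\sum_{k\neq i} x_{v_i v_k} \le 1$ for any fractional matching on $Y$. Consequently $R_{M^q}[i]$ defines a convex combination, so
\[R_{M^q}[i] \cdot (N_1^{q-1}, \ldots, N_t^{q-1})^{\intercal} \;\le\; \max_{j} N_j^{q-1}.\]
Substituting this into the previous display yields the one-step recursion $\max_i N_i^q \le \max_j N_j^{q-1} + t^2$.

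Finally, a straightforward induction on $q$, with base case $\max_i N_i^0 \le N_{\max}$, gives $\max_i N_i^q \le N_{\max} + q\,t^2$, which is exactly the claimed bound. I do not expect any real obstacle here: the argument is essentially a bookkeeping refinement of the preceding lemma, and the only step warranting attention is that the additive rounding slack grows from $t$ (for a fixed part mark) to $t^2$ once we aggregate over all $t$ possible destination marks simultaneously.
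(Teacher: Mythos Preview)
Your proposal is correct and follows essentially the same approach as the paper: both establish the one-step recursion $\max_i N_i^{q} \le \max_j N_j^{q-1} + t^2$ and then induct on $q$. The only cosmetic difference is that you reuse the per-$l$ inequality from the preceding lemma and sum over $l$ (exploiting row-stochasticity of $R_{M^q}$), whereas the paper re-derives the bound directly from the token-flow identity; both routes yield the same recursion and the same $t^2$ rounding slack.
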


\begin{proof}
This can be done by an induction on $q$.
Let $N_i^q$ be the number of tokens within the part $X_i^*$ after iteration $q$ and let $N_{\rm{max}}^q = \max_i\{N_i^q\}$.
Then, for all $i$ we have $N_i^0\le N_{\rm{max}}$.

We notice that each $M^q$ is a fraction matching.
If the tokens are fractional as well then the total amount of tokens does not change.
However, due to the fact that tokens are integral, we have to carefully upper bound the total number of tokens:
\begin{align*}
N_i^q &\le \bigg( N_i^{q-1}
- \underbrace{\sum_{j=1}^t\sum_{l=1}^t \left\lfloor \frac12 m_{i, j}|T_{i, l}^{q-1}|\right\rfloor}_{\text{tokens sent away}}
\bigg)
 + \underbrace{\sum_{j=1}^t \sum_{l=1}^t \left\lfloor \frac12 m_{j, i}  |T_{j, l}^{q-1}| \right\rfloor}_{\text{tokens received}}\\
 &\le \left(\frac12 N_i^{q-1} + t^2\right) + \frac12 N_{\rm{max}}^{q-1}\\
 &\le N_{\rm{max}}^{q-1} + t^2\ .\qedhere
\end{align*} 
\end{proof}

\subsection{Routing the Tokens to the Portals}\label{sec:tokens_to_portal}

Consider a particular part $X_i^*$ and a particular fraction matching $M^q$.
In this subsection, we describe a subroutine that routes all tokens on $X_i^*$ to designated portals $P_{i, j}$.
The goal of this subroutine is that for all $l$ and for all tokens with the same specific part mark $l$, there will be exactly $\lfloor (m_{i,j}/2)|T_{i, l}|\rfloor$ tokens being routed to vertices in $P_{i, j}$. Moreover, the tokens routed to $P_{i, j}$ should be load-balanced.

\paragraph{Tie-Breaking The Tokens via Serialization}
We would need two tie-breaking operations here: first, for any $l$, each token with the part mark $l$ learns the portal group index $j$ (or no-op). This can be done by invoking (1) a local aggregation procedure (\Cref{lemma:local-aggregation}) that obtains the value $|T_{i, l}|$ and (2)  a local serialization procedure (\Cref{lemma:local-serialization}) that gives each token a serial number in $\{0, 1, \ldots, |T_{i, l}|-1\}$.
With the serial number and the total count, we can now assign locally the portal index $j$ for each token.

To enforce the load-balancedness requirement, the second tie-breaking must be made such that each node in $P_{i, j}$ receives roughly the same number of tokens.
This can be done by applying two additional serialization steps.
The first local serialization procedure, using the portal index $j$ as the key, assigns each token $z$ that goes to the same portal group $P_{i, j}$ a serial number $\mathit{SID}_z$.
Using the size $|P_{i, j}|$ that is preprocessed and stored at each vertex in $X_i^*$, each token $z$ obtains an index $\chi(z) := \mathit{SID}_z\bmod |P_{i, j}|$.
The second local serialization procedure can actually be preprocessed --- it assigns each portal vertex in $P_{i, j}$ a serial number.

We remark that all local aggregation and local serialization procedures described above
are actually running over the virtual graph of the corresponding part $X_i^*$.
This avoids cyclic dependency of invoking token ranking,  expander sorting, and Task 3.

Now, the problem of routing the tokens to the portals reduces to the following ``Task 2 style'' task.
In this task, each token has a portal group index $j$ and an index $\chi(z)$. The goal is to route all tokens on $X_i^*$ to the specified destination: the $\chi(z)$-th vertex within $P_{i, j}$.

\paragraph{Meet-In-The-Middle Again}
The above task can be solved using the meet-in-the-middle trick and running expander sorting (\Cref{thm:expander-sort}) twice within $X_i^*$.
In the first expander sorting, we assign for each token a key $k_z := (j, \chi(z))$ and sort the tokens within $X_i^*$.
In the second expander sorting, for each portal vertex of $P_{i, j}$ with a serial index $s$,
we create a certain number, say $\sigma_{j, s}$, of dummy tokens all with the same key $k_z := (j, s)$. The number of dummy tokens for $(j, s)$ will be exactly the same as the number of actual tokens that will be sent to this vertex.
We also add dummy tokens such that every vertex reaches the same maximum load of $L$ tokens.
These two expander sortings should now give a perfect match between the actual tokens and the dummy tokens. Thus, by reverting the routes of dummy tokens, each dummy token brings one actual token to the desired destination.

Obtaining the value $\sigma_{j, s}$ is again can be done by running a local aggregation (\Cref{lemma:local-aggregation}) over the instance where besides actual tokens, each portal vertex also creates one dummy token of the same key. After running the local aggregation, this token learns the total count $\sigma_{j, s}+1$ and goes back to the actual portal vertex.

\subsection{Merging Two Dispersed Configurations}\label{sec:merge_dispersed}
In this subsection, we will describe how to merge two dispersed configurations. First, we show that on each part $X_i^*$ and for each part mark $j$ the total number of dummy tokens is at least the number of real tokens of the same part mark.
\begin{restatable}{lemma}{dummyrealtoken}Let $T'_{i,j}$ be the $2L\cdot |X_j^*|$ dummy $T_i$-tokens who have part mark $j$ but are in part $i$ in any given dispersed configuration. Let $T_{i,j}$ be the real $T_i$ tokens who have part mark $j$ but are in part $i$ in any given dispersed configuration. For any $i,j$, we have $|T_{i,j}| \leq |T'_{i,j}|$.
\end{restatable}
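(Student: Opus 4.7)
The plan is to apply the dispersed configuration bounds from \Cref{def:dispersed-configuration} separately to the real tokens and to the dummy tokens, and then show that the worst-case upper bound for the real count $|T_{i,j}|$ is dominated by the worst-case lower bound for the dummy count $|T'_{i,j}|$. The key leverage comes from the $2\times$ multiplicative gap baked into the construction (each vertex of $X_j^*$ gets $2L$ dummy tokens with part mark $j$, whereas by the Task 3 precondition the total number of real tokens with part mark $j$ is at most $L\cdot |X_j^*|$).

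More concretely, let $N_j$ be the total number of real tokens with part mark $j$ and $N'_j$ the total number of dummy tokens with part mark $j$. By the Task 3 hypothesis $N_j \le L\cdot |X_j^*|$, and by construction $N'_j = 2L\cdot |X_j^*|$. Applying the bounds in \Cref{def:dispersed-configuration} to each set yields
\[
|T_{i,j}| \le 1.1\,\frac{N_j}{t} + 0.1\,\frac{|X|}{t^2}
\le 1.1\,\frac{L\,|X_j^*|}{t} + 0.1\,\frac{|X|}{t^2},
\]
\[
|T'_{i,j}| \ge 0.9\,\frac{N'_j}{t} - 0.1\,\frac{|X|}{t^2}
= 1.8\,\frac{L\,|X_j^*|}{t} - 0.1\,\frac{|X|}{t^2}.
\]
Subtracting, it suffices to show $0.7\,L\,|X_j^*|/t \ge 0.2\,|X|/t^2$, equivalently $L\,|X_j^*| \ge \tfrac{2}{7}\cdot |X|/t$.

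To verify this inequality I would invoke \Cref{prop:hierarchical}(\ref{itm:hierarchical:structure}): we have $|X_j^*| \ge |X|/(3k)$ and $t \ge (2/3)k$, so $|X_j^*|\ge \tfrac{2}{9}\cdot |X|/t$. Thus $L\,|X_j^*| \ge \tfrac{2L}{9}\cdot |X|/t$, which is at least $\tfrac{2}{7}\cdot |X|/t$ whenever $L\ge 9/7$, i.e.~whenever $L\ge 2$. The regime $L=1$ is harmless since we may always pad by inflating $L$ (or equivalently strengthening the constants in \Cref{def:dispersed-configuration}); alternatively one checks the same calculation using the slightly tighter bound $|X_j^*|\ge (2/3)(\tau-1)$ from \Cref{prop:hierarchical}(\ref{itm:hierarchical:structure}).

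The main conceptual obstacle is that the additive rounding slack $0.1\,|X|/t^2$ in the dispersed configuration bound does not automatically shrink with $L$, so one has to make sure that the multiplicative gap in $N'_j$ vs.~$N_j$ translates into a per-part gap of order $L\,|X_j^*|/t$ that comfortably swallows this absolute error. This is exactly where the Property~\ref{prop:hierarchical}(\ref{itm:hierarchical:structure}) guarantee that every part has size $\Theta(|X|/k)$ and $t=\Theta(k)$ is essential: it forces $|X_j^*|/t = \Theta(|X|/t^2)$, so the leading terms are on the same scale as the rounding error, and the $2\times$ multiplicative buffer between real and dummy counts is precisely what creates the required slack.
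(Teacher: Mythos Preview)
Your approach is essentially identical to the paper's: both apply the dispersed-configuration bounds to the real and dummy token counts and then compare the resulting upper bound $1.1\,L|X_j^*|/t + 0.1\,|X|/t^2$ against the lower bound $1.8\,L|X_j^*|/t - 0.1\,|X|/t^2$, using the size estimates $|X_j^*|=\Theta(|X|/k)$ and $t=\Theta(k)$ from \Cref{prop:hierarchical}(\ref{itm:hierarchical:structure}). Your observation that the constants are a hair too tight at $L=1$ is accurate---indeed the paper's own chain has the same slack issue at the step converting $|X_j^*|$ back to $|X|/t^2$ (it needs $t\ge (3/2)k$ but only $t\ge (2/3)k$ is available)---and your proposed fix of padding to $L\ge 2$ is a clean and valid resolution.
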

\begin{proof}
\begin{align*}
|T_{i, j}| &\le 1.1 \frac{N_j}{t} + 0.1 \frac{|X|}{t^2} \tag{by \Cref{def:dispersed-configuration}}\\
&\le 1.1 \frac{L|X_j^*|}{t} + 0.1 \frac{|X|}{t^2}  \tag{condition of \textbf{Task 3}}\\
&\le 1.1 \frac{L|X_j^*|}{t} + 0.15 \frac{1}{t}\cdot \frac{|X|}{k} \tag{$t\ge (2/3)k$}\\
& \le 1.1 \frac{L|X_j^*|}{t} + 0.45 \frac{|X_j^*|}{t} \tag{$|X|/k \le 3|X_j^*|$}\\
&\le 1.75 \frac{L|X_j^*|}{t} - 0.2 \frac{|X_j^*|}{t} \tag{$L\ge 1$}\\
&\le 1.75 \frac{L|X_j^*|}{t} - 0.2 \frac{1}{t} \cdot \frac{|X|}{3k}
\tag{$|X_j^*| \ge |X|/3k$}\\
&\le 1.75 \frac{L|X_j^*|}{t} - 0.2 \frac{1}{t} \cdot \frac{|X|}{2t} \tag{$k\le (3/2)t$}\\
&\le 0.9 \frac{2L|X_j^*|}{t} - 0.1 \frac{|X|}{t^2}\ . \tag{number of dummy tokens}
\end{align*}
\end{proof}

Fix a part $X_i^*$. For clarity, we denote $\mathsf{T_{R}}$ as the set of real tokens and $\sf T_{D}$ as dummy tokens.
First of all, a local aggregation (\Cref{lemma:local-aggregation}) is invoked on $\sf T_{R}\cup T_D$ such that all dummy tokens learn $N_j := \mathit{Count}(j)-2L|X_j^*|$, the number of real tokens of part mark $j$.
Then, two local serializations (\Cref{lemma:local-serialization}) are applied to each of $\sf T_R$ and $\sf T_D$ separately.
Now, each token can set up a new key: for real token $z\in {\sf{T_R}}$ with part mark $j_z$ and local serial number $\mathit{SID}_z$, the key is $(j_z, 2\mathit{SID}_z+1)$. For a dummy token $z'\in {\sf{T_D}}$ with part mark $j_{z'}$ and local serial number $\mathit{SID}_{z'}$ the key is $(j_{z'}, 2\mathit{SID}_{z'}+2)$. We emphasize that all dummy tokens with serial number being at least $N_j$ will now be removed from $\sf T_D$ and will not participate in the final expander sorting.

Finally, an expander sort (\Cref{thm:expander-sort}) is invoked.
We can tweak the sorting algorithm to ensure that there will be an even number of tokens staying on every vertex after the sorting.
Now, each real token $z$ with the key $(j, 2\mathit{SID}_z-1)$ meets the dummy token with the key $(j, 2\mathit{SID}_z)$ at the same vertex, and \textbf{Task 3} can now be accomplished by each dummy token bringing a real token back to its starting point, which is some vertex on $X_j^*$.

After describing the recursive steps, we discuss how we solve the leaf case in \Cref{sec:task2-leaf-components} and finish the analysis on the round complexity in \Cref{sec:round_analysis}.

\subsection{Leaf Case for Task 2}\label{sec:task2-leaf-components}
In this section, we prove \Cref{lemma:leaf-components}. That is, solving \textbf{Task 2} whenever $|X|=O(n^{4\epsilon})$. On such a leaf component, it is affordable for each vertex $v\in X$ gathering the entire topology of $X$ in $O(k^8) + D(H_{X})\cdot Q(f^{0}_{H_{X}})^2 = \poly(\psi^{-1}, k, \log^{1/\epsilon}n)$ rounds during preprocessing. However, the routing task during the query time is still non-trivial.
Fortunately, we are able to apply distributed expander sorting here, and the algorithm is summarized as the following lemma.

\begin{lemma}[Leaf Components]\label{lemma:leaf-components}
Let $X\in\T$ be a leaf component.
Suppose that each vertex holds at most $L$ tokens.
Each token has a destination marker $i_z$, there are at most $L\rho_{best}$ tokens having the same destination marker.
Then, in $\poly(\psi^{-1}, k, \log^{1/\epsilon} n)$ preprocessing time and $L\cdot \poly(\psi^{-1}, \log^{1/\epsilon} n)$ routing time, each token $z$ is routed to the vertex with $i_z$-th smallest ID in $X_{best}$.
\end{lemma}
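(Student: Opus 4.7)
The plan is to exploit the small size of a leaf component: $|X|=O(k^4)$, and since $X$ is a good leaf we have $X_{best}=X$, so the destination of token $z$ is simply the vertex of rank $i_z$ when $X$ is ordered by $\ID$. Hence the whole query reduces to sorting tokens by the key $i_z$, which I would execute via a precomputed AKS-style sorting-network embedding inside $H_X$.

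During preprocessing I would broadcast the adjacency of $H_X$ together with the flattened paths $f^{0}_X$ to every vertex of $X$ via pipelined flooding over $H_X$. Because $H_X$ has diameter $\log^{O(1/\epsilon)} n$ and $|E(H_X)|=O(|X|\log n)=\poly(k,\log n)$, this costs $\poly(k,\log^{1/\epsilon} n)$ $H_X$-rounds, and each such round is simulated in $G$ using $Q(f^{0}_X)^2=\poly(\psi^{-1},\log^{1/\epsilon} n)$ rounds via \Cref{fact:routing-along-precomputed-paths}; the total preprocessing is then $\poly(\psi^{-1},k,\log^{1/\epsilon} n)$. Each vertex then locally and identically computes an AKS sorting network embedded into $H_X$ layer by layer, invoking \Cref{lem:deterministic_path_embedding} on $H_X$ for each matching layer; per-layer quality is $\poly(\psi(H_X)^{-1},\log n)=\poly(\psi^{-1},\log^{1/\epsilon} n)$.

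At query time I would apply a meet-in-the-middle sort, mirroring the scheme used for \textbf{Task 3}. At each vertex of rank $r$ I inject enough dummy tokens with key $r$ to absorb every real token with $i_z=r$, so that each key's multiplicity is matched exactly; I then run the precomputed AKS network on the union of real and dummy tokens using key $k_z=i_z$, which places each real token and its designated dummy together on a common vertex. Finally I reverse the dummy tokens' routes and piggy-back each real token back to the vertex where its dummy originated, namely the vertex of rank $i_z$. Since the network has $O(\log n)$ matching layers each of polylog quality, a single sort-slot costs $\poly(\psi^{-1},\log^{1/\epsilon} n)$ rounds, and parallelising across the $O(L)$ tokens per vertex yields the claimed $L\cdot\poly(\psi^{-1},\log^{1/\epsilon} n)$ query bound.

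The main obstacle I anticipate is avoiding circularity with \Cref{thm:expander-sort}, whose leaf base case of the sorting recurrence is precisely this lemma; I side-step this by performing the leaf-level sort directly through the AKS embedding rather than re-entering \Cref{thm:expander-sort}. A secondary issue is load accounting: the $\rho_{best}=2^{O(1/\epsilon)}$ blow-up from per-destination multiplicity sits inside the $\poly(\log^{1/\epsilon} n)$ overhead, and the $L$ factor from tokens-per-vertex appears linearly in the query time, matching the bound claimed in the lemma.
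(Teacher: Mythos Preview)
Your proposal is essentially the paper's own proof: gather the full topology of $H_X$ during preprocessing, locally compute an AKS sorting network $\oldmathcal{I}_{\textsf{AKS}}$ of quality $\poly(\psi^{-1},\log^{1/\epsilon} n)$, and at query time use a meet-in-the-middle scheme where dummy tokens are generated at each destination and then the sort pairs each real token with a dummy, whose reversed route delivers the real token home.

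Two small points. First, you gloss over how a vertex of rank $r$ learns \emph{exactly} how many real tokens have $i_z=r$; the topology of $H_X$ does not tell you the token distribution. The paper handles this with an extra pass: each vertex emits one dummy with key equal to its rank, runs local aggregation (which at a leaf is just another AKS sort) to learn $\mathit{Count}(i_v)$, and only then generates $\mathit{Count}(i_v)$ dummies with interleaved even keys against the real tokens' odd keys $(i_z,2\mathit{SID}_z-1)$. This costs only a constant number of additional AKS passes, so your bound is unaffected, but the step is needed for the pairing to be exact. Second, your circularity concern is slightly misplaced: the leaf base case of the $T_{\rm sort}$ recurrence in \Cref{thm:expander-sort} is not this lemma but a direct AKS sort on $X=X_{best}$; the present lemma is the leaf case of \textbf{Task 2}. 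Your resolution (use the AKS embedding directly) is exactly what both base cases do, so there is indeed no cycle.
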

\begin{proof}%
~
\paragraph{Preprocessing}
First of all, the algorithm collects the entire virtual graph $H_X$ in $\poly(\psi^{-1}, k, \log^{1/\epsilon}n)$ time.
Once the topology of the virtual graph is obtained, in zero round the algorithm locally computes an AKS sorting network $\oldmathcal{I}_{\textsf{AKS}}$ over all vertices in $X$.
The quality of the sorting network $Q(\oldmathcal{I}_{\textsf{AKS}})$ can be set to $\poly(\psi^{-1}, \log^{1/\epsilon} n)$ as the existence is guaranteed from \cite{su2019distributed}.
We remark that there is no need to compute an all-to-best route since $X=X_{best}$ for leaf nodes.

\paragraph{Query}
Upon receiving the tokens, a meet-in-the-middle trick is applied by
setting up three sorting instances for the sorting network $\oldmathcal{I}_{\textsf{AKS}}$:
\begin{itemize}[itemsep=0pt]
    \item 
In the first pass, we invoke a local serialization (\Cref{lemma:local-serialization}) with each token's key $k_z$ being the same as the destination mark $i_z$.
\item 
In the second pass, each vertex $v\in X$ generates a dummy token with the key being its rank $i_v$ in $X$. Then, via the local aggregation (\Cref{lemma:local-aggregation}) the vertex $v$ learns the number of tokens that sets $i_v$ as its destination mark.
\item In the third and the final pass, each vertex $v\in X$ generates $\mathit{Count}(i_v)$ dummy tokens.
The $j$-th dummy token has a key being a pair $(i_v, 2 j)$.
Each query token $z$ has a destination mark $i_z$, a local serial number $\mathit{SID}_z$.
The algorithm sets each token's key to be $(i_z, 2\mathit{SID}_z-1)$.
Before running applying the precomputed sorting network $\oldmathcal{I}_{\textsf{AKS}}$, each vertex in $X$ generates up to $2 L$ extra tokens with key $\infty$ --- ensuring that each vertex in $X$ holds the same amount and has an even amount of tokens.
\item After the sorting, each token with key $(i_z, 2\mathit{SID}_z - 1)$ can be paired up with a corresponding dummy token with key $(i_v, 2 j)$. Thus, by tracing back the route of dummy token, the token $z$ can now be taken to $v$ where $i_z=i_v$, completing the task.
\end{itemize}
\paragraph{Round Complexity Analysis}
Preprocessing takes $\poly(\psi^{-1}, k, \log^{1/\epsilon} n)$.
For the query, invoking sorting tasks via the precomputed sorting network with maximum load being $2L$ takes $O(2L \log |X|) \cdot Q(\oldmathcal{I}_{\textsf{AKS}})^2 = L\cdot \poly(\psi^{-1}, \log^{1/\epsilon} n)$ rounds, as desired.
\end{proof}

\subsection{The Analysis of Round Complexity}\label{sec:round_analysis}

We are now ready to derive the desired round complexity of our routing algorithm.

\subsubsection{Deriving Recurrence Relation}

Combining the cut-matching approach and the algorithm that routes the tokens to the portals, we are now ready to analyze the round complexity.

Recall that the shuffler $\mathcal{M}_X$ contains embeddings of all virtual matchings $(M_X^1, f_{M_{X}^{1}})$, $(M_X^2, f_{M_{X}^{2}})$,$ \ldots$, $({M}_X^\lambda, f_{M_{X}^{\lambda}})$.
In each iteration $q\in \{1, 2, \ldots, \lambda\}$, the algorithm first routes the tokens to the portals.
Then, the tokens are sent along the precomputed embedded matching $f^0_{M_X^q}$.
We note that although all portals within any part $X_i^*$ are sending away roughly the same amount of tokens,
it does not mean that the number of received tokens are evenly distributed among $X_i^*$ --- they pile up at the portals.
The fact that more than $L$ tokens are piled at a vertex
affects all subsequent calls to the expander sorting procedures.
For simplicity, despite available, we do not manually apply another deterministic load-balancing algorithm such as Ghosh et al.~\cite{GhoshLMMPRRTZ99}.
The following lemma bounds the maximum load of a vertex after iteration $q$.

\begin{lemma}\label{lemma:maxload-portals}
After iteration $q$, any vertex in $X$ holds at most $L + 18qL + q^2/k = O(L \log n)$ tokens. A more detailed upper bound here is $19\lambda L \le 19\cdot (36\cdot 720\cdot 4 \cdot \ln n)\cdot L$.
\end{lemma}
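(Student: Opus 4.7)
I would prove the lemma by induction on the iteration index $q$. Let $L^q$ denote the maximum number of tokens held by any vertex in $X$ at the end of iteration $q$. The base case $L^0 \le L$ holds by the precondition of \textbf{Task 3}.

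For the inductive step, fix a vertex $u \in X_j^*$ and analyze what happens during iteration $q$, which splits into (i) a route-to-portals step within each part (implemented via expander sorts as in \Cref{sec:tokens_to_portal}), and (ii) a matching step in which each portal exchanges tokens with its partner along the precomputed embedding $f_{M_X^q}$. Because \Cref{thm:expander-sort} preserves the per-vertex load bound, the maximum load after part (i) is still at most $L^{q-1}$.

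For part (ii), $u$ has at most one partner $v \in X_i^*$ under $M_X^q$. The number of tokens that $u$ receives from $v$ is at most
\[
\left\lceil \frac{(m_{ij}/2)\cdot N^{q-1}_i}{|P_{i,j}|} \right\rceil \;\le\; \frac{N^{q-1}_i}{2n'} + 1,
\]
using $m_{ij} = |P_{i,j}|/n'$ with $n' = 6|X|/k$, from the definition of the natural fractional matching. By \Cref{lemma:number-of-max-tokens-after-matching} and the observation $N_{\max} \le L \cdot |X_i^*| \le L \cdot (6|X|/k) = L n'$, we have $N^{q-1}_i \le L n' + t^2(q-1)$, so $u$ receives at most $L/2 + t^2(q-1)/(2n') + 1$ tokens. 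Combining the two parts yields the recurrence
\[
L^q \;\le\; L^{q-1} + \frac{L}{2} + \frac{t^2(q-1)}{2 n'} + 1,
\]
which unrolls to $L^q \le L + qL/2 + t^2 q^2 / (4 n') + q$. Because leaf trimming guarantees $|X| \ge k^4$ and $t \le k$, we have $t^2/n' = t^2 k / (6|X|) \le 1/(6k)$, producing the $q^2/k$ term. The remaining slack between $qL/2$ and $18qL$ accounts for the multiple sort invocations per iteration (including those inside the merging step in \Cref{sec:merge_dispersed}) and for the up-to-$2L$ dummy tokens per vertex that the meet-in-the-middle trick in Section 6 adds to $L$. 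Since $q \le \lambda = O(\log n)$, this yields $L^q = O(L \log n)$, and a careful accounting of the constants gives the stated $19\lambda L$ bound.

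\textbf{Main obstacle.} The central technical challenge is verifying that every expander-sort invocation inside iteration $q$---the two sorts within route-to-portals, the additional sort in the merging-of-dispersed-configurations step, and the local aggregations used to compute portal loads---does not inflate the maximum load beyond $L^{q-1}$ plus the contribution from the matching step. This hinges on the load-preservation guarantee of \Cref{thm:expander-sort} and on the fact that all dummy tokens introduced by the meet-in-the-middle trick are paired off with real tokens and removed before the next iteration begins; otherwise those dummies would compound across iterations and destroy the linear-in-$q$ bound.
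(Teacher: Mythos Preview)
Your approach is essentially the same as the paper's: induct on $q$, observe that the route-to-portals step does not increase the per-vertex maximum, and bound the number of tokens a portal \emph{receives} from its unique matching partner using \Cref{lemma:number-of-max-tokens-after-matching}. The paper's proof is terser but follows exactly this outline.

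One correction to your write-up: your explanation of the gap between your $qL/2$ and the paper's $18qL$ is wrong. The slack has nothing to do with ``multiple sort invocations'' or dummy tokens from the meet-in-the-middle step. The paper simply uses a looser estimate: instead of your $N_i^{q-1}/(2n')$, it bounds the received tokens by $N_{\max}^{q-1}/|X_i^*|$ and then uses $N_{\max}\le L\cdot\max_j|X_j^*|\le Ln'$ together with $|X_i^*|\ge n'/18$ from \Cref{prop:hierarchical}(\ref{itm:hierarchical:structure}) to get the constant $18$. Your bound is tighter and already implies the stated lemma; the expander sorts preserve load (per \Cref{thm:expander-sort}), so no extra additive term is needed for them. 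You should drop that paragraph and simply note that $qL/2 + q + q^2/(24k) \le 18qL + q^2/k$, which gives the lemma.
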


\begin{proof}
Since the number of portals in $P_{i, j}^{q}$ are proportional to $m_{i,j}=m_{j, i}$, the number of tokens arrived at a portal $v\in P_{i, j}$ will be at most 
\begin{align*}
N_{\rm{max}}^{q-1} / |X_i^*| &\le  (N_{\rm{max}} + t^2q) / |X_i^*| & \text{(by \Cref{lemma:number-of-max-tokens-after-matching})}\\
&\le 18L + t^2q/|X_i^*| & \text{(by \Cref{prop:hierarchical}(\ref{itm:hierarchical:structure}))}\\
&\le 18L + q/k\ . & \text{(since $|X_i^*|\ge k^3\ge t^2k$)}
\end{align*}
The proof follows by applying an induction on $q$.
\end{proof}

\begin{lemma}\label{lemma:dispersed-configuration}
Let $X\in\T$ be a good node in the hierarchical decomposition.
Suppose that initially each vertex holds at most $L$ tokens.
Each token $z$ has a part mark $j_z$.
Assume that the shuffler $\mathcal{M}_X$ is already computed.
Then, there exists a \congest algorithm that routes all the tokens into a dispersed configuration within 
\[
O(\log n)\cdot T_{\rm sort}(6|X|/k, O(L\log n)) + O(L)\cdot (Q(\mathcal{M}_X)\cdot Q(f^0_{H_X})))^2  
\]
rounds.
\end{lemma}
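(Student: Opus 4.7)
The plan is to process the $\lambda=O(\log n)$ matchings of the precomputed shuffler $\mathcal{M}_X$ one by one. In iteration $q$, I would carry out two sub-tasks. First, within each part $X_i^*$, route the tokens to the designated portals so that for every choice of $j$ and every part mark $l$, exactly $\lfloor (m_{ij}/2)\,|T^{q-1}_{i,l}|\rfloor$ tokens with mark $l$ end up on $P_{i,j}$ in a load-balanced way; this is exactly what \Cref{sec:tokens_to_portal} describes, and it is realized by a constant number of expander-sort invocations internal to each $X_i^*$. Second, push each token from $X_i^*$ to $X_j^*$ by following the embedding $f^0_{M_X^q}$ (the flattened version of $f_{M_X^q}$ in $G$). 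Correctness—that the final configuration is dispersed—is already given by the preceding \emph{tokendistribution} lemma, so only the round count remains.

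For the first term in the bound, I would appeal to \Cref{lemma:maxload-portals}: before iteration $q$ every vertex holds at most $O(L\log n)$ tokens, so the portal-routing subroutine of \Cref{sec:tokens_to_portal} is solving a sorting/Task-2-style problem on a graph of size $|X_i^*|\le 6|X|/k$ with load parameter $O(L\log n)$. Since the parts of $X$ are vertex-disjoint and their internal hierarchical embeddings use disjoint edges, the expander-sort calls for all $X_i^*$ run in parallel at the cost of a single $T_{\rm sort}(6|X|/k,\,O(L\log n))$. Summing this over the $\lambda=O(\log n)$ iterations yields the first term $O(\log n)\cdot T_{\rm sort}(6|X|/k,\,O(L\log n))$.

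For the second term I would argue as follows. The collection $\bigcup_{q=1}^{\lambda} f_{M_X^q}(M_X^q)$ in $X$ has quality $Q(\mathcal{M}_X)$ by \Cref{def:shuffler}, and flattening this collection to $G$ through the hierarchical embedding multiplies the quality by at most $Q(f^0_{H_X})$, producing a single collection $\oldmathcal{P}^\star$ of routing paths in $G$ of quality at most $Q(\mathcal{M}_X)\cdot Q(f^0_{H_X})$. Each token that crosses between parts follows a path of $\oldmathcal{P}^\star$, and across all iterations each original path in $\oldmathcal{P}^\star$ carries $O(L)$ tokens (\Cref{lemma:maxload-portals} bounds the load on a matching edge per iteration, and summing over iterations is absorbed into the quality $Q(\mathcal{M}_X)$ of the shuffler). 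Applying \Cref{fact:routing-along-precomputed-paths} to this bundle of routing paths, with $O(L)$ tokens per path, costs $O(L)\cdot (Q(\mathcal{M}_X)\cdot Q(f^0_{H_X}))^2$ rounds, giving the second term. Adding the two terms concludes the bound.

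The main obstacle I anticipate is the bookkeeping for the second term: it is tempting—and wrong—to pay a separate $O(L\log n)\cdot Q(f^0_{M_X^q})^2$ per iteration, because this would multiply by an extra $\log n$. The correct accounting must view the $\lambda$ matching embeddings together as the single path bundle whose combined congestion+dilation is $Q(\mathcal{M}_X)$; only then do the per-iteration loads of $O(L\log n)$ collapse back to an $O(L)$ factor through the union. A second subtlety is that \Cref{lemma:maxload-portals} is needed in two roles at once—to feed the load parameter into $T_{\rm sort}(\cdot,\,O(L\log n))$ and to justify the $O(L)$ tokens-per-path estimate—so I would set up the inductive load bound before invoking either.
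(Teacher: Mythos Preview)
Your overall plan is the same as the paper's: iterate over the $\lambda=O(\log n)$ matchings of the shuffler; in each iteration, use a constant number of expander sorts inside each part (with load $O(L\log n)$ by \Cref{lemma:maxload-portals}) to route tokens to portals, then push along the embedded matching; correctness is the preceding token-distribution lemma. The first term is argued exactly as the paper does.

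Where you diverge from the paper---and where there is a genuine gap---is the bookkeeping for the second term. Your ``bundle all $\lambda$ matchings into a single path collection $\oldmathcal{P}^\star$ and apply \Cref{fact:routing-along-precomputed-paths} once'' argument does not work: the iterations are \emph{sequential}, since the tokens to be shipped in iteration $q$ are only determined after iteration $q-1$ completes, so you cannot route along all of $\oldmathcal{P}^\star$ in one shot. Consequently the claim that ``summing over iterations is absorbed into the quality $Q(\mathcal{M}_X)$'' is unjustified. The paper does precisely what you call ``tempting---and wrong'': it bounds the cost \emph{per iteration} by $O(L)\cdot (Q(\mathcal{M}_X)\cdot Q(f^0_{H_X}))^2$, using that each matching edge carries at most $(18+o(1))L$ tokens (this comes from \Cref{lemma:number-of-max-tokens-after-matching}, not \Cref{lemma:maxload-portals}). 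Strictly summing over $\lambda$ iterations would introduce an extra $O(\log n)$ factor on the second term; the paper does not write it, but this is harmless because $Q(\mathcal{M}_X)=\poly(\psi^{-1},\log^{1/\epsilon} n)$ already, so an extra $\log n$ is absorbed into the stated asymptotics. In short: drop the bundling argument, bound per iteration as the paper does, and note that the missing $\log n$ is swallowed by the polylogarithmic slack in $Q(\mathcal{M}_X)$.

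One minor point: the parts $X_i^*$ do \emph{not} have disjoint flattened embeddings in $G$; parallelism is justified instead by the bounded quality of the simultaneous embedding $\bigcup_i f_{X_i}$ in \Cref{prop:hierarchical}(\ref{itm:hierarchical:embedding}).
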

\begin{proof}%
First of all, the shuffler $\mathcal{M}_X$ contains $\lambda=O(\log n)$ matchings.
At the beginning of each iteration,
the algorithm routes the tokens to the assigned portals. By \Cref{lemma:number-of-max-tokens-after-matching}, each of the destination portals will be in charge of sending at most  $(18+o(1))L$ tokens. However, by \Cref{lemma:maxload-portals}, the initial configuration allows $O(L\log n)$ tokens on any vertex.
Thus, each routing to portal procedures described in \Cref{sec:tokens_to_portal} takes $O(1)\cdot T_{\rm{sort}}(6|X|/k, O(L\log n))$ rounds, where $6|X|/k$ is the maximum possible size of a part within $X$.

In each iteration, at most $(18+o(1))L$ tokens are sent along the shuffler $\mathcal{M}_X$, a straightforward implementation by \Cref{fact:routing-along-precomputed-paths} accomplishes this step in $O(L)\cdot (Q(\mathcal{M}_X)\cdot Q(f^0_{H_X})))^2$ rounds, where $f^0_{H_X}$ is the flattened embedding of $H_X$.
Therefore, routing all tokens into some dispersed configuration can be done within
\[
O(\log n)\cdot T_{\rm{sort}}(6|X|/k, O(L\log n)) + O(L)\cdot (Q(\mathcal{M}_X)\cdot Q(f^0_{H_X})))^2
\]
rounds.
\end{proof}

\begin{theorem}\label{thm:task2-and-task3-recurrence-relation}
Let $X\in\T$ be a component in the hierarchical decomposition.
Suppose that $X$ and its all children have been precomputed with shufflers, there exist \congest algorithms for \textbf{Task 2} and \textbf{Task 3} such that:
\begin{align*}
T_2(|X|, L) &= \begin{cases}
 T_3(|X|, L) + O(L)\cdot Q(f^0_{M_{X}})^2 + T_2(6|X|/k, 4L) &\text{if $X$ is non-leaf,}\\
 L\poly(\psi^{-1}, \log^{1/\epsilon}n) &\text{if $X$ is a leaf component.}
 \end{cases}\\
T_3(|X|, L) &=O(\log n)\cdot T_{\rm sort}(6|X|/k, O(L\log n)) + O(L) \cdot (Q(\mathcal{M}_X)\cdot Q(f^0_{H_X})))^2
\end{align*}
Here $f^0_{M_{X_i}}$ is the flattened embedding from a part $X_i^*$ to the associated good node $X_i$ as defined in \Cref{prop:hierarchical}(\ref{itm:hierarchical:matching}), and $\mathcal{M}_X$ is the shuffler for the good node $X$ defined in \Cref{def:shuffler}.
\end{theorem}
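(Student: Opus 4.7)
The plan is to establish the two recurrences by assembling the subroutines already developed in Sections 5 and 6 and bookkeeping the load inflation at each step. I treat the two recurrences separately.

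For the non-leaf case of $T_2(|X|,L)$, I would follow the three-phase algorithm outlined at the start of Section 5: (i) locally rewrite each token's destination mark $i_z$ into a pair $(j_z,i'_z)$, where $j_z$ is the part index and $i'_z$ the next-level destination mark, using the precomputed counts of best-vertices per part; (ii) call \textbf{Task 3} on $X$ with the part marks $j_z$, moving every token into its correct part $X_{j_z}^*=X_{j_z}\cup X'_{j_z}$ at cost $T_3(|X|,L)$, after which each vertex holds at most $2L$ tokens; (iii) use the matching embedding $f_{M_X}$ from \Cref{prop:hierarchical}(\ref{itm:hierarchical:matching}) to push every token sitting in $X'_{j}$ across the matching into its mate in $X_{j}$, which by \Cref{fact:routing-along-precomputed-paths} costs $O(L)\cdot Q(f^0_{M_X})^2$ rounds; and finally (iv) recurse on each good child $X_j$ in parallel. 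Phase (iii) can increase the load on a vertex in $X_j$ from $2L$ (its own tokens) by at most the $2L$ tokens that its matched partner in $X'_j$ held, yielding a new maximum load of $4L$; the number of tokens per destination marker at the next level is still at most $L\rho_{best}\le 4L\rho_{best}$, so the preconditions of \textbf{Task 2} are preserved and the recursive call costs $T_2(6|X|/k,4L)$ by \Cref{prop:hierarchical}(\ref{itm:hierarchical:structure}). The leaf case of $T_2$ is an immediate application of \Cref{lemma:leaf-components}.

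For the $T_3$ recurrence, I would run the meet-in-the-middle scheme of Sections 6.1--6.3. For each vertex $v\in X_j^*$, generate $2L$ dummy tokens with part mark $j$; by the counting inequality in \Cref{sec:merge_dispersed}, the number of dummies with any fixed part mark $j$ strictly dominates the number of real tokens with that mark on every part, so a perfect pairing inside any dispersed configuration exists. Invoke \Cref{lemma:dispersed-configuration} twice (once on the real tokens and once on the dummies, which can be interleaved into a single execution since both follow the same shuffler $\mathcal{M}_X$) to drive both sets into dispersed configurations; this is exactly the $O(\log n)\cdot T_{\rm sort}(6|X|/k,O(L\log n)) + O(L)\cdot (Q(\mathcal{M}_X)\cdot Q(f^0_{H_X}))^2$ term. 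Inside each part $X_i^*$ run the local-aggregation / local-serialization pipeline to re-key real and dummy tokens as $(j_z,2\mathrm{SID}_z-1)$ and $(j_z,2\mathrm{SID}_z)$ respectively, then invoke an expander sort (\Cref{thm:expander-sort}) on $X_i^*$ which is of size at most $6|X|/k$; this step costs another $O(T_{\rm sort}(6|X|/k,O(L\log n)))$ rounds and is absorbed into the existing term. The paired dummy token then carries its real partner back to the dummy's origin, which lies in $X_{j_z}^*$, finishing Task 3; the reverse traversal costs no more than the forward routing and so is again absorbed.

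The main obstacle is load bookkeeping: I need to verify that the $O(L\log n)$ load parameter fed to $T_{\rm sort}$ in the $T_3$ formula really does cover every intermediate configuration. The worst case arises because each iteration of the shuffler concentrates tokens on portal vertices; \Cref{lemma:maxload-portals} gives the bound $19\lambda L = O(L\log n)$, which is exactly what I need to feed into each portal-routing sub-call and into the final within-part expander sort. A secondary, milder obstacle is confirming that in the non-leaf $T_2$ recursion the induction hypothesis can absorb the load blow-up from $L$ to $4L$; because \Cref{lemma:leaf-components} and the $T_3$ formula are linear in $L$, this merely contributes constant factors that do not change the recurrence's asymptotic form.
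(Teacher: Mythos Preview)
Your proposal is correct and follows essentially the same approach as the paper: the $T_2$ recurrence is obtained by invoking \textbf{Task 3}, pushing tokens across the part-matching $f^0_{M_X}$ (doubling the load from $2L$ to $4L$), and recursing; the $T_3$ recurrence is obtained via the meet-in-the-middle scheme with $2L$ dummy tokens, two invocations of \Cref{lemma:dispersed-configuration}, and a final within-part expander sort, with \Cref{lemma:maxload-portals} supplying the $O(L\log n)$ load bound. The paper's own proof is organized identically and cites the same lemmas for the same purposes.
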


\begin{proof}%
We first analyze the recurrence relation $T_3(|X|, L)$ for solving \textbf{Task 3}. The meet-in-the-middle trick involves two \Cref{lemma:dispersed-configuration}: one for the real tokens and another for dummy tokens. After the real tokens and the dummy tokens reach a dispersed configuration, the maximum load of any vertex is now $O(L\log n)$. Hence, with another
$O(T_{\rm sort}(6|X|/k, O(L\log n)))$
rounds, all real tokens meet the dummy tokens. By following the dummy tokens, \textbf{Task 3} is completed in time proportional to the round complexity of \Cref{lemma:dispersed-configuration}:
\[
T_3(|X|, L) = O(\log n)\cdot T_{\rm{sort}}(6|X|/k, O(L\log n)) + O(L)\cdot (Q(\mathcal{M}_X)\cdot Q(f^0_{H_X})))^2.
\]

We now analyze the non-leaf case of the recurrence relation $T_2(|X|, L)$ for solving \textbf{Task 2}.
First of all, a \textbf{Task 3} is involved so it takes $T_3(|X|, L)$ rounds. After solving \textbf{Task 3}, each vertex holds at most $2L$ tokens.
Recall that from \Cref{prop:hierarchical}(\ref{itm:hierarchical:matching}) not every vertex in   $X_i^*=X_i\cup X_i'$ is in the next-level good node.
The tokens on vertices in $X_i'$ have to be sent to the corresponding good node $X_i$
via the precomputed flattened embedded matching $f^0_{M_X}$, taking at most $2L\cdot Q(f^0_{M_X})$ rounds. 
After sending these tokens, each vertex on $X_i$ holds at most $4L$ tokens. Thus, in another $T_2(6|X|/k, 4L)$ rounds we can solve \textbf{Task 2} and the recurrence relation is
\[
T_2(|X|, L) = T_3(|X|, L) + O(L)\cdot Q(f^0_{M_X}) + T_2(6|X|/k, 4L)
\]
as desired. The leaf case of the recurrence relation in $T_2(|X|, L)$ directly follows from \Cref{lemma:leaf-components}, which is proved in \Cref{sec:task2-leaf-components}.
\end{proof}

\subsubsection{Analyzing Round Complexity: Solving Recurrence Relations}\label{sec:solving-recurrence-relation}

\paragraph{Query Time}
We now solve the recurrence relation and obtain the desired round compliexity for $T_2(X, L)$, $T_3(X, L)$, and $T_{\rm{sort}}(X, L)$.
By \Cref{thm:expander-sort} and \Cref{thm:task2-and-task3-recurrence-relation}, we have the following:
\begin{align*}
T_{\rm sort}(|X|, L) &= \begin{cases} 
T_3(|X|, L) + L \rho_{best}\cdot  Q(\oldmathcal{I}_{\textsf{AKS}})^2 + 
L \cdot Q(f^0_{M_X})^2 + 
T_{\rm{sort}}(6|X|/k, L) \\
\hspace*{9cm}\text{if $X$ is non-leaf,}\\
L \cdot \poly(\psi^{-1}, \log^{1/\epsilon} n) 
\hspace*{5.365cm}\text{if $X$ is a leaf component.}
\end{cases}\\
T_2(|X|, L) &= \begin{cases}
 T_3(|X|, L) + O(L)\cdot Q(f^0_{M_{X}})^2 + T_2(6|X|/k, 4L) &\text{if $X$ is non-leaf,}\\
 L\cdot \poly(\psi^{-1}, \log^{1/\epsilon}n) &\text{if $X$ is a leaf component.}
 \end{cases}\\
T_3(|X|, L) &=O(\log n)\cdot T_{\mathrm{sort}}(6|X|/k, O(L\log n)) + O(L) \cdot (Q(\mathcal{M}_X)\cdot Q(f^0_{H_X})))^2
\end{align*}

We first simplify the recurrence relation by upper bounding all non-recursing terms with $L\cdot g(\psi^{-1}, \log^{1/\epsilon} n)$ where $g(\cdot)$ is some fixed polynomial. By substituting the $T_3(|X|, L)$ term in the recurrence relation of $T_{\rm{sort}}$ and merging similar the terms, we obtain
\begin{align*}
T_{\rm{sort}}(|X|, L) \le (c_1\log n) \cdot T_{\rm{sort}}(6|X|/k, c_2L\log n) + 3L\cdot g(\psi^{-1}, \log^{1/\epsilon} n),
\end{align*}
where $c_1$ (\cite{AKS83}) and $c_2$ (\Cref{lemma:maxload-portals}) are some absolute constants. Thus, we obtain a geometric series:
\begin{align*}
T_{\rm{sort}}(|X|, L)
&\le \sum_{d=0}^{\lfloor\log_{k/6}|X|\rfloor}   (c_1c_2\log^2 n)^d \cdot 3L \cdot g(\psi^{-1}, \log^{1/\epsilon} n)\\
&\le (c_1c_2\log^2 n)^{c_3/\epsilon} \cdot 3L \cdot g(\psi^{-1}, \log^{1/\epsilon} n) \tag{$c_3 \le \epsilon (1+\log_{k/6}|X|) = 1+o(1)$}\\
&= L\cdot \poly(\psi^{-1}, \log^{1/\epsilon} n)\\
\intertext{The recursion of $T_{\rm{sort}}$ also implies that:}
T_3(|X|, L) &\le T_{\rm{sort}}(|X|, L) = L\cdot \poly(\psi^{-1}, \log^{1/\epsilon} n)\\
\intertext{
as well. Finally, by substituting $T_3(|X|, L)$ in the recursion of $T_2$ and merging the similar terms, we obtain:
}
T_2(|X|, L) &\le T_2(6|X|/k, 4L) +  (c_1c_2\log^2 n)^{c_3/\epsilon}\cdot 3L \cdot g(\psi^{-1}, \log^{1/\epsilon} n)\\
&\le 4^{c_3/\epsilon} \cdot (c_1c_2\log^2 n)^{c_3/\epsilon}\cdot 3L \cdot g(\psi^{-1}, \log^{1/\epsilon} n)\\
&= L\cdot \poly(\psi^{-1}, \log^{1/\epsilon} n) \tag{*}\label{eqT2}
\end{align*}

\paragraph{Preprocessing Time}
\begin{itemize}[itemsep=0pt]
\item 
For \textbf{Task 2}, the preprocessing relies on \textbf{Task 3} and the preprocessing for building the hierarchy. Hence we do not need any extra preprocessing steps for \textbf{Task 2} and thus: 
\begin{equation}\label{eqT2pre} T_2^{\sf{pre}}(|X|) = T_3^{\sf{pre}}(|X|). \end{equation}
\item 
For \textbf{Task 3}, the preprocessing step includes building a shuffler $\mathcal{M}_X$, which can be done in $\poly(\psi^{-1}, k, \log^{1/\epsilon} n)$ time by \Cref{lem:cut-player-potential}. The rest part of the preprocessing relies on expander sorting.
Hence, we have the following:
\begin{equation}\label{eqT3pre} 
T_3^{\sf{pre}}(|X|) = T_{\rm{sort}}^{\sf{pre}}(6|X|/k) + \poly(\psi^{-1}, k, \log^{1/\epsilon} n).
\end{equation}
\item 
For expander sorting, the preprocessing step is stated in \Cref{thm:expander-sort}:
\begin{equation}\label{eqTsortpre}
T_{\rm{sort}}^{\sf{pre}}(|X|) = \begin{cases}
T_2^{\sf{pre}}(|X|) + O(\log n)\cdot T_2(|X|, 1) + \poly(\psi^{-1}, k, \log^{1/\epsilon}n) & \text{if $X$ is non-leaf,}\\
\poly(\psi^{-1}, k, \log^{1/\epsilon}n)  & \text{if $X$ is a leaf.}
\end{cases}
\end{equation}
By applying \Cref{eqT2} with $L=1$, we obtain the recurrence relation for $T_{\rm{sort}}^{\sf{pre}}(|X|)$:
\begin{align*}
T_{\rm{sort}}^{\sf{pre}}(|X|) &= T_{\rm{sort}}^{\sf{pre}}(6|X|/k) + \poly(\psi^{-1}, k, \log^{1/\epsilon} n),
\end{align*}
which again solves to a polynomial in $\psi^{-1}$, $k$, and $\log^{1/\epsilon} n$.
\end{itemize}

Finally, we conclude the section with the following theorem statements for solving Task 2 and Task 3.

\begin{theorem}\label{thm:task2}
There exists a determinisitic \congest algorithm solving \textbf{Task 2} in preprocessing time $T_2^{\sf{pre}}(|X|) = \poly(\psi^{-1}, k, \log^{1/\epsilon}n)$ and query time $T_2(|X|, L) = L\cdot \poly(\psi^{-1}, \log^{1/\epsilon} n)$.
\end{theorem}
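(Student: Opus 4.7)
The plan is to assemble the ingredients already developed in the preceding sections rather than introduce new machinery. The algorithm itself is the one described at the start of Section~\ref{sec:solving-task2-querytime}: on a non-leaf good node $X$, first invoke the Task~3 routine to move every token $z$ to some vertex of its designated part $X_{j_z}^*$, then use the precomputed matching embedding $f_{M_X}$ from Property~\ref{prop:hierarchical}(\ref{itm:hierarchical:matching}) to push the tokens sitting in $X_i' = X_i^* \setminus X_i$ over to $X_i$, and finally recurse with a Task~2 call on each good child $X_i$ in parallel. For a leaf, invoke Lemma~\ref{lemma:leaf-components}.

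Next I would justify correctness. By the rewriting of destination markers discussed before Task~3 (which needs only broadcasts inside each $H_X$, folded into preprocessing), one translates a destination mark $i_z$ into a pair $(j_z, i'_z)$. The Task~3 step guarantees $z$ lands inside $X_{j_z}^*$ with at most $2L$ tokens per vertex; the matching-embedding step then concentrates all tokens of a part onto $X_{j_z}$ (at most $4L$ per vertex, since $|X_i'|\le |X_i|$ from Property~\ref{prop:hierarchical}(\ref{itm:hierarchical:matching})); and the recursive call on $X_{j_z}$ with next-level marker $i'_z$ finishes the job.

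For the query time, I would directly invoke the recurrence of Theorem~\ref{thm:task2-and-task3-recurrence-relation}, namely
\[
T_2(|X|,L) \;=\; T_3(|X|,L) \;+\; O(L)\cdot Q(f^0_{M_X})^2 \;+\; T_2(6|X|/k,\,4L)
\]
in the non-leaf case and the leaf bound from Lemma~\ref{lemma:leaf-components} otherwise. The bound $T_3(|X|,L)=L\cdot\poly(\psi^{-1},\log^{1/\epsilon}n)$ has already been derived in Section~\ref{sec:solving-recurrence-relation} from the $T_{\rm sort}$ recurrence, and $Q(f^0_{M_X})=\poly(\psi^{-1},\log^{1/\epsilon}n)$ follows by flattening Property~\ref{prop:hierarchical}(\ref{itm:hierarchical:matching}) through the hierarchy of depth $O(1/\epsilon)$. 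Unrolling the recursion across the $O(1/\epsilon)$ levels of $\T$ blows $L$ up by a factor $4^{O(1/\epsilon)}=\poly\log n$ and adds geometrically many copies of a $\poly(\psi^{-1},\log^{1/\epsilon}n)$ term, so the whole sum remains $L\cdot\poly(\psi^{-1},\log^{1/\epsilon}n)$, exactly as in display~\eqref{eqT2}.

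For the preprocessing, the algorithm needs (i) the hierarchical decomposition itself, (ii) a shuffler $\mathcal{M}_X$ at every good node (via Lemma~\ref{lem:disperse}), (iii) the ancillary data for expander sort at every node, and (iv) the locally gathered topology at each leaf. Equations~\eqref{eqT2pre}, \eqref{eqT3pre}, \eqref{eqTsortpre} already show $T_2^{\sf{pre}}(|X|)=T_3^{\sf{pre}}(|X|)=T_{\rm sort}^{\sf{pre}}(6|X|/k)+\poly(\psi^{-1},k,\log^{1/\epsilon}n)$, which unwinds to $\poly(\psi^{-1},k,\log^{1/\epsilon}n)$ because $T_2(|X|,1)$ has just been bounded by $\poly(\psi^{-1},\log^{1/\epsilon}n)$. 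The only subtle point, and the step I expect to spell out most carefully, is verifying that every auxiliary call inside the recursion (routing tokens to portals, matching real with dummy tokens, invoking local aggregation/serialization) really reduces to a Task~2/Task~3/expander-sort instance on a strictly smaller part of size at most $6|X|/k$, so that the recursion is well-founded and no cyclic dependency sneaks in between the shuffler construction on $X$ and the sorting primitives used to process its query.
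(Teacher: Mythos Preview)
Your proposal is correct and follows essentially the same route as the paper: the paper's proof of this theorem is literally the one-line ``directly follows from \eqref{eqT2pre}, \eqref{eqT3pre}, and \eqref{eqTsortpre}'', relying on the recurrence of Theorem~\ref{thm:task2-and-task3-recurrence-relation} and the unrolling done in display~\eqref{eqT2}, which is exactly what you cite. Your added concern about well-foundedness of the recursion is legitimate and is handled in the paper by the remark in Section~\ref{sec:tokens_to_portal} that all aggregation/serialization calls run on the parts $X_i^*$ rather than on $X$.

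One small slip: the equality $4^{O(1/\epsilon)}=\polylog n$ is not true in general (for $\epsilon$ close to $\sqrt{\log\log n/\log n}$ it is $2^{\Theta(\sqrt{\log n/\log\log n})}$). What is true, and what the paper uses, is $4^{O(1/\epsilon)}\le (\log n)^{O(1/\epsilon)}$, so the factor is absorbed into $\poly(\log^{1/\epsilon} n)$; your final bound is unaffected, but the intermediate equality should be replaced by this inequality.
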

\begin{proof}
The statement directly follows from \eqref{eqT2pre}, \eqref{eqT3pre}, and \eqref{eqTsortpre}.
\end{proof}

\begin{theorem}\label{thm:task3}
There exists a deterministic \congest algorithm solving \textbf{Task 3} in preprocessing time $T_3^{\sf{pre}}(|X|) = \poly(\psi^{-1}, k, \log^{1/\epsilon}n)$ and
query time $T_3(|X|, L) = L\cdot \poly(\psi^{-1}, \log^{1/\epsilon} n)$.
\end{theorem}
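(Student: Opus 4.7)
The plan is to derive the statement directly from the recurrence relations established in \Cref{thm:task2-and-task3-recurrence-relation} together with the round complexity of expander sort. Recall that \textbf{Task 3} on $X$ is solved by the meet-in-the-middle scheme of \Cref{sec:merge_dispersed}: the real tokens and the dummy tokens are each independently routed into a dispersed configuration using the precomputed shuffler $\mathcal{M}_X$, after which a single expander sort inside $X$ pairs them up and the reversed routes of the dummy tokens deliver the real tokens to the right part.

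For the query bound, I would start from the recurrence
\[
T_3(|X|, L) = O(\log n)\cdot T_{\rm{sort}}(6|X|/k, O(L\log n)) + O(L)\cdot (Q(\mathcal{M}_X)\cdot Q(f^0_{H_X}))^2
\]
of \Cref{thm:task2-and-task3-recurrence-relation}, and substitute the already-solved bound $T_{\rm{sort}}(|X|, L) = L \cdot \poly(\psi^{-1}, \log^{1/\epsilon} n)$ from the analysis in \Cref{sec:solving-recurrence-relation}. The additive term is $L \cdot \poly(\psi^{-1}, \log^{1/\epsilon}n)$ by \Cref{lem:disperse} and the quality bound on the flattened embedding of $H_X$. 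Together these give $T_3(|X|, L) = L \cdot \poly(\psi^{-1}, \log^{1/\epsilon} n)$, as claimed.

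For the preprocessing bound, I would invoke equation \eqref{eqT3pre}, namely
\[
T_3^{\sf{pre}}(|X|) = T_{\rm{sort}}^{\sf{pre}}(6|X|/k) + \poly(\psi^{-1}, k, \log^{1/\epsilon} n),
\]
where the additive term accounts for constructing the shuffler $\mathcal{M}_X$ via \Cref{lem:disperse}. Since the preprocessing recurrence for $T_{\rm{sort}}^{\sf{pre}}$ was shown in \Cref{sec:solving-recurrence-relation} to solve to $\poly(\psi^{-1}, k, \log^{1/\epsilon} n)$, unrolling the above gives the same bound for $T_3^{\sf{pre}}(|X|)$.

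The only real subtlety, which I would spell out carefully, is to make sure that the dummy-token instance feeds into \Cref{lemma:dispersed-configuration} with the correct load and total count parameters so that \Cref{def:dispersed-configuration} is applicable, and that the final comparison used in \Cref{sec:merge_dispersed} ensures dummies outnumber reals on every part (this is exactly the content of the lemma bounding $|T_{i,j}| \le |T'_{i,j}|$). Once those guarantees are in place, the statement is just arithmetic on the recurrences, so no new ideas are required beyond what is already developed in \Cref{sec:solving-task2-querytime}.
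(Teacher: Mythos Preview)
Your proposal is correct and follows essentially the same approach as the paper: the paper's own proof simply cites \eqref{eqT3pre} and \eqref{eqTsortpre}, relying on the recurrence analysis already carried out in \Cref{sec:solving-recurrence-relation} to obtain both the preprocessing and query bounds. Your write-up is more explicit (and the extra paragraph about the dummy-token subtlety is redundant here since that argument is already packaged into \Cref{thm:task2-and-task3-recurrence-relation}), but the logical content is the same.
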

\begin{proof}
The statement directly follows from \eqref{eqT3pre}, and \eqref{eqTsortpre}.
\end{proof}

\begin{theorem}\label{thm:expander_sorting_time}
There exists a deterministic \congest algorithm solving expander sorting in preprocessing time $T_{\rm sort}^{\sf{pre}}(|X|) = \poly(\psi^{-1}, k, \log^{1/\epsilon}n)$ and
query time $T_{\rm sort}(|X|, L) = L\cdot \poly(\psi^{-1}, \log^{1/\epsilon} n)$.
\end{theorem}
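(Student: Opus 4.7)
The plan is to observe that both bounds follow by substituting the results of Theorem~\ref{thm:task2} into the recurrence relations for $T_{\rm sort}^{\sf{pre}}$ and $T_{\rm sort}$ that were already set up in the preceding analysis; no new algorithmic idea is required, only the bookkeeping to close the mutual recursion between expander sorting and Task~2.

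For the preprocessing bound, I would start from recurrence~\eqref{eqTsortpre}, which in the non-leaf case reads
\[
T_{\rm sort}^{\sf{pre}}(|X|) = T_2^{\sf{pre}}(|X|) + O(\log n)\cdot T_2(|X|, 1) + \poly(\psi^{-1}, k, \log^{1/\epsilon}n).
\]
Plugging in $T_2^{\sf{pre}}(|X|) = \poly(\psi^{-1}, k, \log^{1/\epsilon}n)$ and $T_2(|X|, 1) = \poly(\psi^{-1}, \log^{1/\epsilon}n)$ from Theorem~\ref{thm:task2}, every term on the right-hand side is absorbed into $\poly(\psi^{-1}, k, \log^{1/\epsilon}n)$. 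The recursion itself then shrinks the instance by a factor $k/6 = \Theta(n^{\epsilon})$ per level and has depth $O(1/\epsilon)$, so a geometric-series argument (identical to the one solving $T_{\rm sort}^{\sf{pre}}$ in Section~\ref{sec:solving-recurrence-relation}) gives $T_{\rm sort}^{\sf{pre}}(|X|) = \poly(\psi^{-1}, k, \log^{1/\epsilon}n)$; the leaf case gives the same bound directly from Theorem~\ref{thm:expander-sort}.

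For the query bound, I would reuse the chain of substitutions already performed in Section~\ref{sec:solving-recurrence-relation}: substituting the bound on $T_3(|X|, L)$ into the recurrence for $T_{\rm sort}(|X|, L)$ and merging similar terms yields
\[
T_{\rm sort}(|X|, L) \le (c_1 \log n)\cdot T_{\rm sort}(6|X|/k,\, c_2 L\log n) + 3L\cdot g(\psi^{-1}, \log^{1/\epsilon}n)
\]
for absolute constants $c_1, c_2$ and a fixed polynomial $g$. Unrolling this $O(1/\epsilon)$ times, both the geometric factor $(c_1 c_2 \log^2 n)^{O(1/\epsilon)}$ and the blow-up in the load parameter $(c_2 \log n)^{O(1/\epsilon)} \cdot L$ are absorbed into $L\cdot \poly(\psi^{-1}, \log^{1/\epsilon}n)$.

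There is essentially no hard step here: the theorem is the summary of the recurrence analysis already conducted, and the only thing worth double-checking is that the $O(1/\epsilon)$-depth unrolling indeed keeps the $L\log n$ load amplification controlled, i.e., $(c_2\log n)^{O(1/\epsilon)} = \log^{O(1/\epsilon)} n$. This is immediate since $O(1/\epsilon)$ levels multiply at most $\log^{O(1/\epsilon)} n$ factors, which fits inside the claimed $\poly(\psi^{-1}, \log^{1/\epsilon}n)$ bound.
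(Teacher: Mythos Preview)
Your proposal is correct and matches the paper's approach: the paper's own proof is the one-liner ``directly follows from~\eqref{eqTsortpre},'' meaning that the theorem is just a summary of the joint recurrence analysis carried out in Section~\ref{sec:solving-recurrence-relation}, which is exactly what you reconstruct. One cosmetic point: citing Theorem~\ref{thm:task2} for the value of $T_2^{\sf pre}(|X|)$ is formally circular, since that theorem's own proof appeals to~\eqref{eqTsortpre}, whose solution is precisely what you are establishing here; the cleaner phrasing (and what the paper does) is to substitute~\eqref{eqT2pre} and~\eqref{eqT3pre} directly so that all three preprocessing recurrences collapse simultaneously into the single self-recurrence $T_{\rm sort}^{\sf pre}(|X|) = T_{\rm sort}^{\sf pre}(6|X|/k) + \poly(\psi^{-1}, k, \log^{1/\epsilon} n)$.
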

\begin{proof}
The statement directly follows from \eqref{eqTsortpre}.
\end{proof}

\bibliographystyle{alpha}
\bibliography{references}
\clearpage
\appendix
\section[The Construction of Hierarchical Decomposition from CS20]{The Construction of Hierarchical Decomposition from \cite{ChangS20}}\label{sec:hierarchical_construction}

In this section, we show how the construction of \cite{ChangS20} leads to \Cref{prop:hierarchical}, which we restate again here:%

\hierarchical*

In \cite{ChangS20}, they developed a deterministic \congest model algorithm for finding an induced subgraph of high conductance or finding a sparse cut. In particular, they can solve the following problem: 
\begin{definition}
Let $G = (V,E)$ be a graph of maximum degree $\Delta$. Let $0 < \psi_{cut} < 1$ and $0 < \phi_{emb} < 1$ be any parameters. The task
$$(\psi_{cut}, \psi_{emb}, \beta_{cut}, \beta_{leftover})\mathtt{\mhyphen{}Det\mhyphen{}Sparse\mhyphen{}Cut}$$
\end{definition}
asks for two subsets $W \subseteq V$ and $C \subseteq V$ meeting the following conditions:

\paragraph{Expander} The induced subgraph $G[W]$ has $\Psi(G[W]) \geq \psi_{emb}$.

\paragraph{Cut} The cut satisfies $0 \leq |C| \leq |V|/2$ and $\Psi(C) \leq \psi_{cut}$.

\paragraph{Balance} Either one of the following is met:
\begin{multicols}{2}
\begin{itemize}
\item $|C|\geq \beta_{cut} \cdot |V|$ and $W = \emptyset$.
\columnbreak
\item $|V \setminus (C \cup W)| \leq \beta_{leftover} \cdot |V|$.
\end{itemize}
\end{multicols}
They gave an algorithm to solve the above problem, summarized as follows.
\begin{theorem}[\cite{ChangS20}, Theorem 4.2]\label{thm:CS20algo}
Let $G = (V,E)$ be a bounded-degree graph, and let $0 < \psi_{cut} < 1$ be any parameter. The task $(\psi_{cut}, \psi_{emb}=\poly(\psi_{cut}) \cdot 2^{-O(\epsilon^{-1}\cdot \log \log n)}, \beta_{cut}=1/3, \beta_{leftover} = 1/12)\mathtt{\mhyphen{}Det\mhyphen{}Sparse\mhyphen{}Cut}$ can be solved deterministically in time:
$$\poly(D, \psi_{cut}^{-1}, \log n ) + \poly(\psi_{cut}^{-1}) \cdot 2^{O(\epsilon \log n + \epsilon^{-1} \log \log n)}$$
\end{theorem}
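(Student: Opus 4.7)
The plan is to implement a deterministic version of the Khandekar--Rao--Vazirani cut-matching game in the \congest model, with a recursive invocation of the algorithm itself serving as the matching player. I maintain a virtual multigraph $H$ on vertex set $V$, initially empty, and proceed in $\Lambda = O(\log n)$ iterations. In each iteration, a cut player produces a balanced bipartition $(A,B)$ of the active vertices of $H$, and a matching player either embeds a large matching between $A$ and $B$ into $G$ (which is then added to $H$) or produces evidence that, combined with the previously accumulated matchings, exposes a sparse cut of $G$. At the end, either the active set $W$ satisfies $\Psi(G[W]) \geq \psi_{emb}$ because $H|_W$ has become a $\Theta(1)$-expander embedded into $G$ with quality at most $\polylog n / \psi_{emb}$, or we extract $C$ with $\Psi(C) \leq \psi_{cut}$ and $|C| \geq |V|/3$, matching the two-case output of the task.

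For the cut player, rather than the randomized projections of KRV, I would follow the potential-based analysis: the bipartition is obtained by computing, via $O(\log n)$ power iterations on $H$, an approximate second-eigenvector direction and sweeping for a balanced cut. Since $H$'s edges correspond to precomputed short paths in $G$, each power iteration reduces to local aggregation on $G$ and runs in $\poly(D, \psi_{cut}^{-1}, \log n)$ rounds. The standard KRV/RST potential-decrease argument bounds $\Lambda = O(\log n)$. For the matching player, the plan is the recursive ingredient responsible for the $n^{O(\epsilon)} + 2^{O(\epsilon^{-1}\log\log n)}$ shape of the runtime: I would partition $V$ into $k = n^{\epsilon}$ pieces, apply Det-Sparse-Cut recursively inside each piece to embed sub-expanders, and then route paths from $A$ to $B$ along these sub-expanders using a deterministic maximal-paths subroutine in the style of \cite{GPV93} to extract the matching and its embedding. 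The recursion depth is $O(1/\epsilon)$; each level costs $\poly(\psi_{cut}^{-1}) \cdot n^{O(\epsilon)}$ and blows up the conductance and embedding quality by a $\polylog n$ factor, giving both the stated runtime and the stated $\psi_{emb} = \poly(\psi_{cut}) \cdot 2^{-O(\epsilon^{-1}\log\log n)}$.

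The main obstacle will be the accounting across the recursion and across the $\Lambda = O(\log n)$ cut-matching iterations: to guarantee that $H|_W$ being a $\Theta(1)$-expander actually certifies $\Psi(G[W]) \geq \psi_{emb}$, I have to ensure that the union of the $\Lambda$ matching embeddings into $G$ retains total quality at most $\polylog n / \psi_{emb}$, which pins down how much conductance each recursion level is allowed to lose. Simultaneously, the leftover bookkeeping must be controlled so that $\beta_{leftover} \le 1/12$ holds at every level: each failed matching contributes an unmatched subset that is absorbed into either $C$ or the inactive region, and I would argue by induction that after $\Lambda$ iterations the inactive region has size at most $|V|/12$ whenever no $(1/3)$-balanced sparse cut is discovered. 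Once these two invariants are maintained, the two output cases of the task follow directly from whether the final active set is empty (giving $C$ with $|C|\ge |V|/3$) or nonempty (giving $W$ with $\Psi(G[W]) \geq \psi_{emb}$ and $|V \setminus (C\cup W)| \le |V|/12$).
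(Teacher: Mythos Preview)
This theorem is quoted from \cite{ChangS20} and is not proved in the present paper; the appendix only walks through the construction of \cite{ChangS20} in order to verify \Cref{prop:hierarchical}. Comparing your plan to that construction, the recursion sits in the wrong component of the cut--matching game. In \cite{ChangS20} one first partitions $V$ into $k=n^{\epsilon}$ pieces $V_1,\ldots,V_k$ and runs a \emph{separate} cut--matching game inside each $V_i$; the recursive call to $\mathtt{Det\mhyphen Sparse\mhyphen Cut}$ is made by the \emph{cut player}, which must find a sparse cut of the virtual graph $H_i$ currently being built on $V_i$ (a graph on $\approx n/k$ vertices, so the instance shrinks). The matching player is non-recursive: it is just the deterministic parallel-DFS / maximal-paths routine on $G$. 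You instead keep one global cut--matching game on $V$, implement the cut player spectrally, and push the recursion into the matching player. But recursively building vertex-disjoint sub-expanders on a partition of $V$ does not by itself yield a low-congestion $A$--$B$ matching embedding: the required paths must cross between pieces, and if the maximal-paths routine on $G$ is what actually finds them, the recursion contributes nothing; if the sub-expanders are what you route on, you never leave a piece.

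Separately, your cut player is not deterministic as written. ``$O(\log n)$ power iterations on $H$'' to approximate a second eigenvector needs a starting vector with nonzero overlap with that eigenvector, which you cannot guarantee without randomness. The point of the \cite{KKOV07}/\cite{ChangS20} cut player is precisely to replace this spectral step by a recursive combinatorial sparse-cut computation on $H_i$; that recursion is what produces the $O(1/\epsilon)$ depth and the $\polylog n$ loss per level, and hence both the $2^{O(\epsilon^{-1}\log\log n)}$ term in the running time and the $\psi_{emb}=\poly(\psi_{cut})\cdot 2^{-O(\epsilon^{-1}\log\log n)}$ degradation. Your accounting attributes those losses to the matching side, which does not match the actual mechanism.
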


If we run the above algorithm with $\psi_{cut} = \Psi(G)/2$, then $C$ must be an empty set. As a result, it will find a subset of vertices $W$, with $|W| \geq |V|/12$. Moreover $G[W] \geq \psi_{emb}$. A closer look into the construction of their algorithm shows that to certify the conductance of $W$, it actually returns an embedding $H_{W}$ on $G[W]$ with congestion and dilation of at most $\psi_{emb}$.

Moreover, due to the recursive nature, it actually produces a hierarchy of embeddings. We summarize their construction for $(\psi_{cut}, \psi_{emb}=\poly(\psi_{cut}) \cdot 2^{-O(\epsilon^{-1}\cdot \log \log n)}, \beta_{cut}=1/3, \beta_{\textit{leftover}} = 1/12)\mathtt{\mhyphen{}Det\mhyphen{}Sparse\mhyphen{}Cut}$. First they showed the above can be reduced to solving $O(1)$ instances of a weaker problem with $\beta_{cut}=1/100$, $(\psi_{cut}, \psi_{emb}=\poly(\psi_{cut}) \cdot 2^{-O(\epsilon^{-1}\cdot \log \log n)}, \beta_{cut}=1/100, \beta_{\textit{leftover}} = 1/12)\mathtt{\mhyphen{}Det\mhyphen{}Sparse\mhyphen{}Cut}$. If the algorithm returns an embedding in the end, then exactly one of the $O(1)$ instances return an embedding. We describe how this  weaker problem can solved recursively as follows:

\begin{enumerate}[leftmargin=*]
\item Let $(V,E)$ be the current base graph. Partition $V$ into $V_1 \ldots V_k$ where each $|V_i| \in \{\lfloor |V|/k \rfloor, \lceil |V|/k \rceil\}$ and $\max_{x\in V_i} \ID(x) \leq \min_{y \in V_{i+1}} \ID(y)$ for $1 \leq i < k$.  
\item For each $V_i$ try to embed an expander $H_i$ onto it. Each $V_i$ may be success (an expander has been embedded), fail (a sparse cut has been bound), or active at any given round.  $H_i$ is empty in the beginning. For each round $r=1 \ldots \lambda = O(\log n)$, for each $1\leq i \leq k$, if $V_i$ is still active, we try to extend $H_{i}$ with one matching by performing a round of cut-matching game. 
\begin{enumerate}[leftmargin=*]
    \item {\bf Cut Player:} Try to find a sparse cut of $H_i$ by solving $(\widehat{\psi_{cut}} = 1/2, \widehat{\psi_{emb}}=\poly(\psi_{cut}^{-1}, \log n)\cdot \psi_{emb}, \widehat{\beta_{cut}} = 1/3, \widehat{\beta_{\textit{leftover}}}=1/12)\mathtt{\mhyphen{}Det\mhyphen{}Sparse\mhyphen{}Cut}$ on the current $H_{i}$. 

    If a cut $C_i$ with $|C_i| \geq |V_i|/4$ has been returned then proceed to the matching player. Otherwise, an expander embedding on $U_i \subseteq V_i$ must have been returned with $|U_i| \geq (2/3)\cdot |V_i|$, since $|V_i| - |C_i \cup U_i| \geq |V_i|/12$, which implies $|U_i| \geq 11|V_i|/12 - |V_i|/4 \geq (2/3)\cdot |U_i|$. In this case, the status of $V_i$ becomes success.

    \item {\bf Matching Player:} Recall that $C_i$ is the cut found by the cut player. For each $V_i$ that is still active, simultaneously each $V_i$ try to embed a matching $M_i$ with congestion $c= \tilde{O}(\psi_{cut}^{-2})$ and dilation $d = \tilde{O}(\psi^{-1}_{cut})$ from $C_i$ to $V_i \setminus C_i$. This is done by parallel DFS algorithm in \cite[Theorem D.11]{ChangS20}.  For each $V_i$, either a large fraction of $C_i$ is saturated, in which we will successfully add one matching $M_i$ to $H_i$. Then, it will continue into the next iteration. Otherwise, we will find a cut that contains at least $|V_i|/8$ vertices and then set the status of $V_i$ to be fail. There is a possibility the matching player finds a global sparse cut $C$ with $\Psi(C)\leq \psi_{cut}$ and $|C|\geq |V|/8$ and then terminates directly.
\end{enumerate}
It was shown in \cite{KKOV07} that after $O(\log n)$ iterations such a cut-matching game will terminate, which means each $V_i$ is either success or fail. If a large fraction of $V_i$ fails, then the algorithm will return a cut. Otherwise, the algorithm tries to merge different expanders $U_i \subseteq V_i$ from those who succeeded.

\item\label{itm:merge} In the following we describe how such an expander is constructed. The algorithm may detect if the graph has a sparse cut during the construction, on which it will return the cut and exit. However, we will focus on the case when an expander is returned, since this is when it will be added to the hierarchical decomposition. Here, we do not attempt to show the correctness as it has been proven in \cite{ChangS20}, but rather we focus on describing how it is constructed and how they form the hierarchical decomposition. 

W.l.o.g.~let $V_1 \ldots V_{k'}$ be the parts that have succeeded and let $U_1 \ldots U_{k'}$ be the vertices where the expanders $H_1 \ldots H_{k'}$ that have been embedded. Let $T = U_1 \ldots U_{k'}$. They show that if the algorithm did not return a cut (the case where a large fraction of parts succeeded), then $|T| \geq (11/20)\dot |V|$. Let $S = V \setminus T$. Now run the parallel DFS algorithm of \cite[Theorem D.11]{ChangS20} again to find a set of paths $\mathcal{P}$ of congestion and dilation $\polylog(n)\cdot O(\psi_{cut})$ that connect between $S$ and $T$. The algorithm will match up most of the vertices in $S$ except a few of them.  

Let $U'_{i} \subseteq S$ be the set of vertices that are matched to $U_i$ via paths in $\mathcal{P}$. Let $U^{*}_i = U_i \cup U'_i$. We also create $H^{*}_i$ with vertex set $V(H^{*}_i) = V(H_i) \cup V(U'_i)$ and edge set $E(H^{*}_i) = E(H_i) \cup \{(x,y)\mid x \in U_i, y \in U'_i, \mbox{$xy$ connected by some path in $\mathcal{P}$} \}$.

A crucial observation in their construction is that if the algorithm returns an expander, then it will be the union of some $U^{*}_i$'s, say w.l.o.g.(after reordering)~we assume they are $U^{*}_1 \ldots U^{*}_{k''}$. Consider $G' = G[U^{*}_1 \ldots U^{*}_{k''}]$. A property that they have shown is that for all cuts $S$ of $G'$ that respect $\{U^{*}_1 \ldots U^{*}_{k''} \}$, $\Phi(S) \geq \psi_{cut}/8$ (\cite[Lemma 4.7]{ChangS20}). By combining this property with the fact that $\Psi(H^{*}_i) = \Omega(\Psi(H_i)) = \Omega(\poly(\psi_{cut}^{-1}, \log n)\cdot \psi_{emb})$ and that $H^{*}_i$'s are simultaneous embedding into $G$ with quality $\tilde{O}(\psi^{-1}_{cut})$, they showed that $\Psi(G') \geq \psi_{emb}$. That is, they showed that the union of these vertices $U^{*}_1 \ldots U^{*}_{k''}$ have slightly weaker expansion than individual $H^{*}_i$'s. However, since each $H_i$ was embedded with slightly higher expansion (i.e.~$\poly(\psi_{cut}^{-1}, \log n)\cdot \psi_{emb}$), the loss on expansion in combining them still satisfies our purpose (i.e.~the expansion is still lower bounded by $\psi_{emb}$).

Now we describe its connection to the hierarchical decomposition. If the algorithm returns an expander $G[U^{*}_1 \ldots U^{*}_{k''}]$ and $|U^{*}_1 \ldots U^{*}_{k''}| \geq (2/3)\cdot |V|$, we will create a node $X$ and add it to $\T$, where we set the $X = U^{*}_1 \ldots U^{*}_{k''}$. Moreover, we set $t = k''$ and then we set the good children of $X$, $X_1 \ldots X_t$ to be that $X_1 = U_1 \ldots X_t = U_k$. Note that since this recursion is proccessed in the post-order, the children $X_1 \ldots X_t$ must have been already added to $\T$. Then we create the bad children of $X$, $X'_1 \ldots X'_t$, to be that $X'_1 = U'_1 \ldots X'_t = U'_{k''}$. 
\end{enumerate}

First we check \Cref{prop:hierarchical}(\ref{itm:hierarchical:structure}). Suppose that $X$ is a good internal node. The fact that $$\max_{x \in X_{i}} \ID(x) \leq \min_{y \in X_{i+1}} \ID(y)$$ holds because of how $V_i$'s are partitioned and each $X_i \subseteq V_i$. Moreover, for each $X^{*}_i$ we have:
\begin{align*}
|X^{*}_i| &\leq |X_i| + |X'_i| \\
&\leq 2|X_i| && |X'_i| \leq |X_i|\\
&\leq 2|V_i| && X_i \subseteq V_i\\
&\leq 4|V|/k && |V_i| \leq 2|V|/k\\
&\leq 6|X|/k && |X| \geq (2/3) |V|
\end{align*}
For the lower bound side, we have:
\begin{align*}
|X^{*}_i| &\geq |X_i| \\
&\geq (2/3)\cdot |V_i|  && \mbox{$X_i$ is successful embedding to $V_i$}\\
&\geq (1/3)\cdot |V|/k && |V_i| \geq (1/2)\cdot(|V|/k)\\
&\geq (1/3)\cdot |X|/k 
\end{align*}

For the second set of inequalities, observe that $|V|/k - 1 \leq |V_i| \leq |V|/k + 1$ and we set $\tau = |V|/k$. 

To verify \Cref{prop:hierarchical}(\ref{itm:hierarchical:embedding}), we observe how the parameters $\psi_{cut}$ and $\psi_{emb}$ in different levels of the recursion. Suppose that the root level has level 0. Let $\ell$ be the current level, we have: \begin{multicols}{2} 
\centering
$\psi_{cut} = \begin{cases}1/3 & \ell \geq 1 \\ \psi /2  & \ell = 0 \end{cases} $
\vspace*{\fill}
\columnbreak
$\psi_{emb} = \begin{cases} 1/(\log^{O(\ell(T) - \ell)} n)  & \ell \geq 1 \\ \poly(\psi)/ \log^{O(1/\epsilon)} n  & \ell = 0 \end{cases} $
\end{multicols}

If $X$ is a good internal node, then the simultaneous embedding of $X_1 \ldots X_t$, $\bigcup_{i=1}^{t} H_{X_{i}}$ has congestion and dilation $\polylog(n) \cdot O(\psi_{cut}^{-1})$ in $X$. Moreover, for any good node $X$, the recursion must have found an expander here. Thus, we have $\Psi(H_{X}) \geq \psi_{emb}$.

\Cref{prop:hierarchical}(\ref{itm:hierarchical:matching}) follows by the construction in Step \ref{itm:merge} of the algorithm above, where $X$ is formed by the union of $U^{*}_i \ldots U^{*}_{k''}$, where $U^{*}_i = U_i \cup U'_i$. Note that we have set $X_i = U_i$ and $X'_i = U'_i$. The set of paths $\oldmathcal{P}$ are paths in $H_{X}$ that connects between $U_1 \cup \ldots \cup U_{k''}$ and $U'_1 \cup \ldots \cup U'_{k''}$ with congestion and dilation $\polylog(n) \cdot O(\psi^{-1}_{cut})$. Moreover, each vertex in $U'_i$ is connected to some vertex in $U_i$ via a path in $\oldmathcal{P}$ and each vertex in $U_i$ is the endpoint of at most one path in $\oldmathcal{P}$. Therefore, $|U'_i| \leq |U_i|$ and so $|X'_i| \leq |X_i|$.

\section{The Construction of Shufflers}\label{sec:the-modified-cut-matching-game}

In this section we prove~\Cref{lem:disperse} by giving implementation details for the cut player and the matching player.

\subsection{Cut Player}
At the $i$-th iteration, the cut player computes two disjoint vertex subsets $S, S'\subseteq V(Y)$ with the following properties:

\begin{property}\label{prop:cut}~
\begin{enumerate}[itemsep=0pt]

\item \label{itm:size} $|S_{X}| < |S'_{X}|$.
\item \label{itm:potential} Consider any mapping $\sigma: S\to S'$. Then,
$$\sum_{y\in S} \left\| R_{i-1}[y] - R_{i-1}[\sigma(y)]\right\|^2 \geq \frac{1}{720} \cdot \Pi(i-1)\ . $$
\end{enumerate}
\end{property}
\begin{restatable}{lemma}{cutplayerpotential}
\label{lem:cut-player-potential} There exist subsets $S$ and $S'$ with \Cref{prop:cut}. Moreover, they can be computed in $\poly(\psi^{-1}, k, \log^{1/\epsilon} n)$ rounds.
\end{restatable}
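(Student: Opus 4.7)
The plan is to adapt the classical spectral cut player of Khandekar--Rao--Vazirani to the vertex-weighted cluster graph $Y$, using the R\"acke--Shah--T\"aubig potential as the progress measure. Let $P$ denote the $|Y|\times |Y|$ matrix whose $y$-th row is $R_{i-1}[y] - \tfrac{1}{|Y|}\mathbf{1}$, so that $\Pi(i-1) = \|P\|_F^2$. Let $v$ be a top right singular vector of $P$; then $\|Pv\|_2^2 \geq \Pi(i-1)/|Y|$ because the squared Frobenius norm is the sum of squared singular values and $P$ has rank at most $|Y|$. For each $y\in V(Y)$ define the scalar projection $x_y := \langle R_{i-1}[y] - \tfrac{1}{|Y|}\mathbf{1},\, v\rangle$; we have $\sum_y x_y = 0$ and $\sum_y x_y^2 \geq \Pi(i-1)/|Y|$.

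To produce $S$ and $S'$, I would sort the vertices of $Y$ by $x_y$ and locate two thresholds that split $V(Y)$ into a ``left'' set, a small middle buffer, and a ``right'' set, chosen at appropriate quantiles (e.g.\ around $|Y|/3$ and $2|Y|/3$) so that the extreme sets together still carry a constant fraction of $\sum_y x_y^2$. I then assign $S$ to be the side with the smaller total weighted size $|S_X|$ and $S'$ the opposite side. By \Cref{prop:hierarchical}(\ref{itm:hierarchical:structure}) all part sizes $|X_i^*|$ lie within a constant factor of $|X|/k$, so a constant-sized buffer is sufficient to guarantee $|S_X| < |S'_X|$ regardless of the sign of $\sum_{y\in S}x_y - \sum_{y\in S'}x_y$. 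This construction also produces a projection gap $\delta = \Omega(\sqrt{\Pi(i-1)/|Y|})$ between $S$ and $S'$. Property 2 then follows because orthogonal projection is $1$-Lipschitz, so for any $\sigma:S\to S'$ we have $\|R_{i-1}[y] - R_{i-1}[\sigma(y)]\|^2 \geq (x_{\sigma(y)}-x_y)^2$; summing over $y\in S$ and combining with the separation yields, after a standard cut-and-paste comparison on sorted values (as in the R\"acke--Shah--T\"aubig analysis), the bound $\sum_{y\in S}(x_{\sigma(y)}-x_y)^2 \geq \Pi(i-1)/720$. The explicit constant $1/720$ tracks three losses: the spectral gap factor $1/|Y|$, the constant-factor quantile trimming, and the conversion from absolute to pairwise squared differences.

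For the distributed implementation, note that $|Y|=t\leq k = n^{\epsilon}$, so the matrix $R_{i-1}$ has only $O(k^2)$ entries of $O(\log n)$ bits. I would maintain $R_{i-1}$ incrementally from the previously computed fractional matchings, stored at a cluster leader of each part $X_i^*$ (selected once during preprocessing). At the start of iteration $i$, the leaders broadcast their rows throughout $X$ using the flatten embedding $f^0_{H_X}$; since $H_X$ has diameter $\log^{O(1/\epsilon)} n$ and the embedding has quality $\poly(\psi^{-1})\cdot \log^{O(1/\epsilon)} n$, the entire matrix can be gathered at every vertex in $\poly(\psi^{-1}, k, \log^{1/\epsilon} n)$ rounds via \Cref{fact:routing-along-precomputed-paths}. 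Every vertex then computes the singular vector $v$, the projections $x_y$, the quantile thresholds, and the resulting partition locally with zero further communication, and the output is already agreed upon globally.

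The main obstacle is reconciling the weighted balance requirement $|S_X|<|S'_X|$ with the potential bound: a naive sign-based split satisfies the potential argument but not necessarily the weighted balance, whereas a purely weighted-median split may destroy the quadratic projection separation. The resolution leverages the $\Theta(|X|/k)$ uniformity of $|X_i^*|$ from \Cref{prop:hierarchical}(\ref{itm:hierarchical:structure}), which lets a constant-size buffer absorb any weighted imbalance at the cost of only a constant factor in the potential. The remaining technical care is in how the projections are summed, which is where the ``cut-and-paste on sorted values'' step replaces the individual tail bounds and cleanly yields the stated $1/720$ constant.
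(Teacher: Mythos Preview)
Your plan shares the paper's overall shape: reduce the $t$-dimensional row vectors $R_{i-1}[y]-\tfrac{1}{|Y|}\mathbf{1}$ to scalars, carve out $(S,S')$ via the R\"acke--Shah--T\"aubig separation lemma, and then have every vertex gather all of $R_{i-1}$ (only $O(k^2)$ words) and compute $(S,S')$ locally. The distributed implementation you sketch is essentially what the paper does; the paper in fact goes further and simply \emph{brute-force enumerates} all candidate pairs $(S,S')$ once $R_{i-1}$ is known everywhere, so no explicit construction is needed at all on the algorithmic side.

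The genuine gap is quantitative and sits in your existence argument. Projecting onto the top singular vector $v$ gives you only $\sum_y x_y^2 \ge \Pi(i-1)/|Y|$, as you state. After applying the RST cut to the scalars $\{x_y\}$ you get $\sum_{y\in S}(x_y-x_{\sigma(y)})^2 \ge c\cdot \sum_y x_y^2$ for an absolute constant $c$, and then the $1$-Lipschitz step yields
\[
\sum_{y\in S}\bigl\|R_{i-1}[y]-R_{i-1}[\sigma(y)]\bigr\|^2 \;\ge\; c\cdot \frac{\Pi(i-1)}{|Y|},
\]
not $\Pi(i-1)/720$. The factor $1/|Y|$ does not cancel anywhere in your chain, so the claimed constant is unjustified. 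This is not cosmetic: feeding a $\Theta(1/|Y|)$ constant into \Cref{lem:cut-matching-iterations} makes the shuffler require $\Theta(|Y|\log n)=\Theta(k\log n)$ matchings instead of $O(\log n)$, which injects an $n^{\Theta(\epsilon)}$ factor into the query time and kills the preprocessing/query tradeoff. The paper's argument is set up precisely to avoid this loss: it uses a \emph{random} unit vector $r$ together with the identity $\|w\|^2 = t\cdot \E_r[(w\cdot r)^2]$ from \Cref{lem:Gaussian} at \emph{both} ends of the chain (once for $w=R_{i-1}[y]-R_{i-1}[\sigma(y)]$ and once for $w=R_{i-1}[y]-\tfrac{1}{|Y|}\mathbf{1}$), so the dimension $t$ cancels and a dimension-free constant survives; existence then justifies the brute-force search.

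A smaller point: your ``quantile split with a buffer producing a uniform gap $\delta$'' is not what \Cref{lem:RSTcut} delivers. The RST sets satisfy the pointwise condition $|\mu(v)-\gamma|\ge |\mu(v)-\bar\mu|/3$ for $v\in A^l$ together with the mass condition $\sum_{A^l}(\mu-\bar\mu)^2\ge (1/80)\sum_A(\mu-\bar\mu)^2$, and these are what drive the $1/720$; a plain quantile threshold gives neither.
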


The following \Cref{lem:Gaussian} and \Cref{lem:RSTcut} are very useful for the proof.

\begin{lemma}[{\cite[Lemma 3.5]{KRV09}}]\label{lem:Gaussian}Let $v\in \mathbb{R}^{t}$ be a vector with $\lVert v \rVert = \ell$, and let $r \in \mathbb{R}^{t}$ be a uniformly random vector. We have 
$\E[ \lVert v\cdot r \rVert^2 ] = \ell^2/t$.
\end{lemma}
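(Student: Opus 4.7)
The plan is to reduce the claim $\E[\lVert v\cdot r\rVert^2]=\ell^2/t$ to a computation of the second-moment matrix $\E[rr^\top]$, then exploit symmetry to evaluate that matrix. Since $v\cdot r$ is a scalar, $\lVert v\cdot r\rVert^2=(v\cdot r)^2=v^\top r r^\top v$, so by linearity of expectation the target equals $v^\top\,\E[rr^\top]\,v$. The whole proof thus reduces to showing $\E[rr^\top]=(1/t) I_t$.

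For this, I would invoke rotational invariance of the uniform distribution on the unit sphere $S^{t-1}$ (the natural reading of ``uniformly random vector'' given the $1/t$ in the statement): for every orthogonal $U\in O(t)$, the vector $Ur$ has the same law as $r$, so $\E[rr^\top]=U\,\E[rr^\top]\,U^\top$. The only matrices commuting with every orthogonal transformation are scalar multiples of the identity, hence $\E[rr^\top]=c\,I_t$ for some scalar $c$. Taking traces and using $\lVert r\rVert=1$ almost surely pins down $c$: $1=\E[\lVert r\rVert^2]=\mathrm{tr}(c\,I_t)=tc$, so $c=1/t$. Substituting back yields $v^\top(I_t/t)v=\lVert v\rVert^2/t=\ell^2/t$. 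A coordinate-free variant of the same argument would first use invariance to rotate $v$ to $\ell\,e_1$, note $(v\cdot r)^2=\ell^2 r_1^2$, and then use the symmetry $\E[r_1^2]=\cdots=\E[r_t^2]=\E[\lVert r\rVert^2]/t=1/t$.

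The main (mild) obstacle is simply pinning down what ``uniformly random vector'' means, since the argument's ingredients are entirely symmetry plus one moment computation. The sketch above treats $r$ as uniform on $S^{t-1}$, but the identity $\E[rr^\top]=(1/t)I_t$---and hence the conclusion---holds verbatim for any distribution on $\R^t$ with $\E[r_i r_j]=\tfrac{1}{t}\delta_{ij}$, including a normalized Gaussian $r\sim\mathcal{N}(0,I_t/t)$ or independent $\pm 1/\sqrt t$ signs per coordinate. Under any of these conventions the computation above goes through unchanged, so no additional work is required beyond the one-line moment identity.
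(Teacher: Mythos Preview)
Your argument is correct and is essentially the standard proof of this fact. Note, however, that the paper does not supply its own proof of this lemma: it is quoted directly from \cite[Lemma~3.5]{KRV09} and used as a black box in the proof of \Cref{lem:cut-player-potential}. So there is no ``paper's proof'' to compare against; your write-up would serve perfectly well as a self-contained justification. Your caveat about the meaning of ``uniformly random vector'' is apt---in the paper's application $r$ is actually taken to be a random unit vector orthogonal to $\mathbf{1}$, but since the relevant vectors $R_{i-1}[y]-\bm{\frac{1}{|Y|}}$ are themselves orthogonal to $\mathbf{1}$, the same symmetry argument (now on the $(t{-}1)$-sphere inside $\mathbf{1}^\perp$) goes through.
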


\begin{lemma}[{Properties of $A^{l}$ and $A^{r}$ \cite[Lemma 3.3]{RST14}}]\label{lem:RSTcut} 
Let $A$ be any set.
Consider any mapping $\mu: A \to \mathbb{R}$ and define $\bar{\mu} = (\sum_{x \in A} \mu(x))/|A|$.
There exists two disjoint sets $A^{l}, A^{r} \subseteq A$ and a separation value $\gamma$ such that the following holds:
\begin{enumerate}
\item \label{itm:RSTcut1} Either $\max_{v \in A^{l}} \mu(v) \leq \gamma \leq \min_{v \in A^{r}} \mu(v)$ or $\max_{v \in A^{r}} \mu(v) \leq \gamma \leq \min_{v\in A^{l}} \mu(v)$.
\item  \label{itm:RSTcut2} For each $v \in A^{l}$, $|\mu(v) - \gamma| \geq |\mu(v) - \bar{\mu}|/3$.
\item \label{itm:RSTcut3}  $|A^{l}| \leq |A|/8$ and $|A^{r}| \geq |A| /2$.
\item \label{itm:RSTcut4}  $\sum_{v \in A^{l}}|\mu(v) - \bar{\mu}|^2 \geq (1/80)\cdot \sum_{v \in A} |\mu(v) - \bar{\mu}|^2$.
\end{enumerate}
\end{lemma}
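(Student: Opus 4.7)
The approach is to reduce the vector-valued condition (\ref{itm:potential}) to a scalar problem by projecting the rows of $R_{i-1}$ onto a carefully chosen direction, and then applying \Cref{lem:RSTcut}. Write $p_y := R_{i-1}[y]$ and $\bar{p} := \bm{1/|Y|}$, so that $\Pi(i-1) = \sum_{y\in V(Y)} \|p_y - \bar{p}\|^2$, and let $\Sigma := \sum_{y} (p_y - \bar{p})(p_y - \bar{p})^\top$, whose trace equals $\Pi(i-1)$. The plan is to pick a vector $r \in \mathbb{R}^{|Y|}$ (suitably normalized), set $\mu(y) := r \cdot p_y$, and observe $\bar{\mu} = r \cdot \bar{p}$ and $\sum_y (\mu(y) - \bar{\mu})^2 = r^\top \Sigma r$. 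By \Cref{lem:Gaussian} with the appropriate scaling of $r$, an averaging argument over a uniformly random $r$ produces a deterministic direction achieving $\sum_y (\mu(y) - \bar{\mu})^2 = \Omega(\Pi(i-1))$.

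Given such $\mu$, invoke \Cref{lem:RSTcut} with $A = V(Y)$ to obtain disjoint $A^l, A^r \subseteq V(Y)$ and a separation value $\gamma$; set $S := A^l$ and $S' := A^r$. The cardinality condition $|S_X| < |S'_X|$ in \Cref{prop:cut}(\ref{itm:size}) follows from $|A^l| \leq |Y|/8$ and $|A^r| \geq |Y|/2$ combined with the part-size bounds in \Cref{prop:hierarchical}(\ref{itm:hierarchical:structure}). Concretely,
\[ |S_X| \leq (|Y|/8) \cdot 2(\tau+1) \quad \text{and} \quad |S'_X| \geq (|Y|/2) \cdot (2/3)(\tau - 1), \]
whose ratio is at most $3(\tau+1)/(4(\tau-1)) < 1$ as soon as $\tau = \Theta(|X|/k)$ exceeds a small constant; this is guaranteed by the leaf-trimming step that forces $|X| \geq k^4$.

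For the potential bound in \Cref{prop:cut}(\ref{itm:potential}), take any $\sigma: S \to S'$ and chain three inequalities. Cauchy--Schwarz gives $\|p_y - p_{\sigma(y)}\|^2 \geq (\mu(y)-\mu(\sigma(y)))^2/\|r\|^2$; by \Cref{lem:RSTcut}(\ref{itm:RSTcut1}), $\mu(y)$ and $\mu(\sigma(y))$ lie on opposite sides of $\gamma$, so $(\mu(y)-\mu(\sigma(y)))^2 \geq (\mu(y)-\gamma)^2$; and by \Cref{lem:RSTcut}(\ref{itm:RSTcut2}), $(\mu(y)-\gamma)^2 \geq (\mu(y)-\bar{\mu})^2/9$. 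Summing over $y \in S$ and invoking \Cref{lem:RSTcut}(\ref{itm:RSTcut4}) yields
\[
\sum_{y\in S} \|p_y - p_{\sigma(y)}\|^2 \;\geq\; \frac{1}{9\|r\|^2}\sum_{y \in S} (\mu(y) - \bar{\mu})^2 \;\geq\; \frac{1}{720\|r\|^2}\sum_{y \in V(Y)} (\mu(y)-\bar{\mu})^2,
\]
and the choice of $r$ ensures the right-hand side is at least $\Pi(i-1)/720$.

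For the round complexity, since $|Y| \leq k$ the matrix $R_{i-1}$ has $O(k^2)$ entries that can be aggregated at every vertex of $X$ via the embedded virtual graph $H_X$ using the distributed primitives of \Cref{sec:expander-sort}; this fits in $\poly(\psi^{-1}, k, \log^{1/\epsilon} n)$ rounds. Each vertex then computes $r$, $\mu$, and the RST cut locally, and finally broadcasts $S, S'$. The main technical obstacle is securing a \emph{constant}-factor preservation of $\Pi(i-1)$ in the projected variance rather than the naive $\Pi(i-1)/|Y|$ loss inherent to a single 1D projection; this is resolved by normalizing $r$ so that the inflation of $\|r\|^2$ in the Cauchy--Schwarz step is precisely offset by the variance gain from \Cref{lem:Gaussian}, producing the stated $1/720$ constant.
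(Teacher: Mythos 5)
Your proposal does not prove the stated lemma. \Cref{lem:RSTcut} is a self-contained, purely combinatorial statement: for \emph{any} finite set $A$ and \emph{any} map $\mu: A \to \mathbb{R}$, there exist disjoint subsets $A^l, A^r$ and a threshold $\gamma$ satisfying the four listed properties (separation by $\gamma$, the factor-$3$ comparison between $|\mu(v)-\gamma|$ and $|\mu(v)-\bar\mu|$, the size bounds $|A^l|\le |A|/8$ and $|A^r|\ge |A|/2$, and the capture of a $1/80$ fraction of the variance by $A^l$). What you have written is instead a proof of \Cref{lem:cut-player-potential} — the cut player's guarantee — in which you \emph{invoke} \Cref{lem:RSTcut} as a black box ("By \Cref{lem:RSTcut}, with the same $\mu$ mapping and $A = V(Y)$, there exists disjoint sets $A^l, A^r$ and value $\gamma$\ldots"). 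Used as a proof of \Cref{lem:RSTcut} itself, this is circular: the existence of $A^l$, $A^r$, and $\gamma$ is precisely what must be established, and nothing in your argument constructs them or verifies any of properties (1)--(4). The discussion of random projections, $R_{i-1}$, $\Sigma$, part sizes $\tau$, and round complexity is all machinery for the surrounding cut-matching argument and is irrelevant to the lemma at hand, which mentions none of these objects.

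For reference, the paper does not reprove this lemma either — it cites it directly as Lemma 3.3 of R\"acke, Shah, and T\"aubig. An actual proof is an elementary case analysis on the scalar values $\mu(v)-\bar\mu$: one first picks the side of $\bar\mu$ (say, values $\le \bar\mu$ or $\ge \bar\mu$) containing at least half the elements of $A$ to serve as (a superset of) $A^r$, then distinguishes whether most of the variance $\sum_v |\mu(v)-\bar\mu|^2$ is contributed by points far from $\bar\mu$ on the opposite side; $A^l$ is taken to be a set of at most $|A|/8$ such outliers and $\gamma$ is placed between $\bar\mu$ and those outliers (e.g., at a constant fraction of the way out) so that both the factor-$3$ bound in property (2) and the $1/80$ variance capture in property (4) hold simultaneously. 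None of these steps appears in your writeup, so the proof of the stated lemma is entirely missing.
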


\begin{proof}[Proof of \Cref{lem:cut-player-potential}]

Let $r\in\mathbb{R}^t$ be a random unit $t$-dimensional vector orthogonal to $\mathbf{1}$. Let $\mu = R_{i-1} \cdot r$. For each $y\in V(Y)$, $\mu[y]$ is a mapping of $y$ to a real number.
By \Cref{lem:RSTcut}, with the same $\mu$ mapping and $A = V(Y)$.
there exists disjoint sets $A^{l}, A^{r}$ and value $\gamma$ with the properties listed in the lemma statement of \Cref{lem:RSTcut}.
Now we set $S = A^{l}$ and $S' = A^{r}$. 

It is now straightforward to check that \Cref{prop:cut}(\ref{itm:size}) holds:
by \Cref{prop:hierarchical}(\ref{itm:hierarchical:structure}), there exists $\tau = \Theta(|X|/k)$ such that for any $y \in Y$, we have $(2/3)(\tau - 1) \leq |\{y\}_{X}| \leq 2(\tau +1)$.
Therefore, with \Cref{lem:RSTcut}(\ref{itm:RSTcut3}), we have:
\begin{align*}
|S_{X}| &\leq  2(\tau + 1)|Y|/8 
= (\tau + 1)t/4 = (\tau t)/4 + t/4 
\leq (1+1/n^{\epsilon})(\tau t)/4 \text{, and }\\
|S'_{X}| &\geq (2/3)(\tau - 1) \cdot (|Y|/2) 
\geq (1/3)\tau t - (1/3) t 
\geq (1-1/n^{\epsilon})(\tau t)/3 .
    \end{align*}
Thus, with $\tau \ge n^{2\epsilon}$, we have $|S_{X}| < |\bar{S}_{X}|$.

Now, given $\sigma(y)$ for every $y \in S$, we have:
\begin{align*}
\sum_{y\in S} \| R_{i-1}[y] - R_{i-1}[\sigma(y)]\|^2 &= t\cdot \sum_{y \in S} \E[(\mu[y] - \mu[\sigma(y)])^2] && \text{(by \Cref{lem:Gaussian})}\\
&\geq t\cdot \sum_{y \in S} \E[(\mu[y] - \gamma)^2] && 
\text{(by \Cref{lem:RSTcut}(\ref{itm:RSTcut1}))} \\
&\geq t \cdot \sum_{y \in S} \E[(\mu[y] - \bar{\mu})^2]/9   &&  \text{(by \Cref{lem:RSTcut}(\ref{itm:RSTcut2}))}\\
&\geq t \cdot \sum_{y \in V(Y)} \E[(\mu[y] - \bar{\mu})^2]/720  &&  \text{(by \Cref{lem:RSTcut}(\ref{itm:RSTcut4}))}\\
&= (1/720)\cdot \sum_{y \in V(Y)} \left\lVert R_{i-1}[y] - \bm{\frac{1}{|Y|}} \right\rVert^2  && \text{(by \Cref{lem:Gaussian})}\\\
&= (1/720) \cdot \Pi(i-1)
\end{align*}

Since subsets of such property exists. We can have each node in $X$ to learn the topology of $Y$ in $\poly(k) \cdot \log^{O(1/\epsilon)} n$ rounds. Every node then compute such subsets locally by enumerating all possible disjoint sets $S$ and $S'$ and agree on the lexicographical smallest ones that satisfy the property. 
\end{proof}

\Cref{lem:cut-player-potential} shows that if the current random walk induced by $(M^{1} \ldots M^{i-1})$ is not mixing then there exist two disjoint subsets $S$ and $S'$ such that the probability of the random walk ending at nodes in $S$ and nodes in $S'$ are quite different. In particular, \Cref{prop:cut}(\ref{itm:potential}) says if we take $|S|$ pairs of vertices (one per vertex in $S$) between $S$ and $S'$, the sum of the distance on the distribution of the pairs takes up to a constant fraction of the current potential. Thus, if we add a matching between $S$ and $S'$ in the next iteration, the potential will drop significantly. The proof of \Cref{lem:cut-player-potential} (which we defer the appendix) follows from an adaption of the framework of \cite{KRV09, RST14}. Also, since every node in $X$ can learn the cluster graph $Y$ efficiently, the sets $S$ and $S'$ can be computed locally at each vertex.  Now we describe our implementation to the matching player.

\subsection{Matching Player}

At the $i$'th iteration, suppose that $(S, S')$ are the disjoint subsets computed by the cut player. Let $(S_{X}, S'_X)$ be the corresponding subsets in $X$. The matching player finds a virtual matching $M^{i}_{X}$ that covers every vertex in $S_{X}$ and an embedding $f_{M^{i}_{X}}$ of $M^{i}_{X}$ onto $X$ with good quality.

The matching player invokes 
\Cref{lem:deterministic_path_embedding}
with $\psi = \Psi(H_{X})/2$, $S = S_{X}$, and $T = S'_{X}$. 
Once the matching $M^{i}_{X}$ and the embedding $f_{M^{i}_{X}}$ are found, we set $M^{i}$ be the corresponding natural fraction matching in $Y$.
Note that if an edge $(u,v)$ has value $x_{uv}$ in $M^{i}$, then it corresponds to exactly $n' \cdot x_{uv}$ different paths  in $f_{M^{i}_{X}}(M^{i}_{X})$ that connects between $X^{*}_{u}$ and $X^{*}_{v}$.

\subsection{Termination} The algorithm terminates when $\Pi(i) \leq 1/(9n^3)$. Note that the potential function can be computed in $\poly(\psi^{-1}, k, \log^{1/\epsilon} n)$ rounds after each iteration.
The following lemma concludes that the number of iterations will be at most $O(\log n)$.

\begin{lemma}\label{lem:size-of-shuffler}\label{lem:cut-matching-iterations}
The algorithm terminates in $O(\log n)$ iterations.
\end{lemma}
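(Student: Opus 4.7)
The plan is to follow the standard cut-matching game potential argument of \cite{KRV09,RST14}, adapted to our setting where the cut is on $Y$ and the matching is fractional on $Y$ (coming from a natural matching on $X$). The strategy has three steps: (i) bound the initial potential, (ii) show a constant-factor potential drop per iteration, and (iii) solve the resulting geometric recurrence.

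First, I would observe that $R_0 = I$, so $R_0[y]$ is the $y$-th standard basis vector. A direct computation gives
\[
\Pi(0) \;=\; \sum_{y\in V(Y)} \left\lVert e_y - \bm{\tfrac{1}{|Y|}}\right\rVert^2 \;=\; |Y| - 1 \;\le\; n.
\]

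For the key step, the per-iteration drop, I would prove a ``matching potential drop'' lemma of the following flavor: for the fractional matching $M^i$ produced by the matching player, there exists a map $\sigma: S \to S'$ induced by $M^i_X$ (matching each $y \in S$ to a neighbor in $S'$ that receives a non-negligible fraction of its outgoing weight) such that
\[
\Pi(i-1) - \Pi(i) \;\ge\; c_0 \sum_{y \in S} \bigl\lVert R_{i-1}[y] - R_{i-1}[\sigma(y)] \bigr\rVert^2
\]
for some absolute constant $c_0 > 0$. The pairwise version of this drop is standard: if $R_i[u] = (1-\alpha)R_{i-1}[u] + \alpha R_{i-1}[v]$ and symmetrically for $v$ with $\alpha = x_{uv}/2$, then the contribution to $\Pi(i-1) - \Pi(i)$ from the pair $(u,v)$ equals exactly $2\alpha(1-\alpha)\lVert R_{i-1}[u] - R_{i-1}[v]\rVert^2$. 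To extend this to our setting, I would use that the matching player saturates $S_X$ with edges going into $S'_X$, so every $y \in S$ satisfies $\sum_{y' \in S'} x_{yy'} = |X^*_y|/n' \ge 1/18$ by \Cref{prop:hierarchical}(\ref{itm:hierarchical:structure}); the induced $\sigma(y)$ can be taken as the heaviest neighbor in $S'$, which carries an $\Omega(1/|Y|)$ share of weight. Combining this with the cut-player bound (\Cref{prop:cut}\,(\ref{itm:potential})) yields
\[
\Pi(i-1) - \Pi(i) \;\ge\; \frac{c_0}{720}\, \Pi(i-1),
\]
so $\Pi(i) \le (1 - c_1)\,\Pi(i-1)$ for a fixed constant $c_1 > 0$.

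Chaining this inequality from $i = 1$ to $\lambda$ gives $\Pi(\lambda) \le (1-c_1)^\lambda \cdot \Pi(0) \le (1-c_1)^\lambda \cdot n$, which drops below $1/(9n^3)$ once $\lambda = \Theta(\log n)$, yielding termination in $O(\log n)$ iterations.

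The main obstacle is the pairwise-to-global conversion in the matching-drop lemma: our fractional matching on $Y$ may spread each $y \in S$ across many $y' \in S'$ (each with small weight $x_{yy'}$), so I cannot directly charge the drop to a single $\sigma(y)$. I expect to handle this by exploiting the bilinearity of the quadratic form $\lVert R_{i-1}[y] - \cdot\rVert^2$ together with the fact that the total fractional mass out of $y$ to $S'$ is a constant, using a convexity/Cauchy--Schwarz argument to lower-bound the total drop by a constant fraction of $\sum_{y \in S}\lVert R_{i-1}[y] - R_{i-1}[\sigma(y)]\rVert^2$ for the worst-case matching $\sigma$ permitted by \Cref{prop:cut}\,(\ref{itm:potential}). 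Once this combinatorial lemma is in place, the rest is bookkeeping.
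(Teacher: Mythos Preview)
Your overall architecture---bound $\Pi(0)$, show a constant-factor drop, chain geometrically---is correct and matches the paper. Two specific steps in your argument, however, do not go through as written. First, the pairwise identity $2\alpha(1-\alpha)\|R_{i-1}[u]-R_{i-1}[v]\|^2$ applies only when $u$ has a \emph{single} match; here $R_i[a]=\tfrac12 R_{i-1}[a]+\tfrac12\sum_b m^*_{ab}R_{i-1}[b]$ is a genuine convex combination over many $b$'s, and you need the Jensen-based manipulation the paper carries out to obtain the clean inequality $\Pi(i-1)-\Pi(i)\ge \sum_{a\in S}\sum_{b\in S'}\tfrac{m_{ab}}{2}\|R_{i-1}[a]-R_{i-1}[b]\|^2$. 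Second, your ``heaviest neighbor'' choice for $\sigma$ carries only $\Omega(1/|Y|)$ mass, which would yield a $(1-\Theta(1/|Y|))$ contraction and hence $O(|Y|\log n)$ iterations, not $O(\log n)$; you correctly flag this as the obstacle.

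The paper resolves this obstacle differently from what you sketch: rather than arguing via a single worst-case $\sigma$, it exploits that $M^i_X$ is an \emph{integral} matching on $X$ saturating $S_X$, and that every part has size at least $n'/18$. This lets one partition the edges of $M^i_X$ into $n'/18$ layers, each inducing a mapping $\sigma_j:S\to S'$ on $Y$, so that $\sum_{a,b}\tfrac{m_{ab}}{2}\|R_{i-1}[a]-R_{i-1}[b]\|^2=\tfrac{1}{2n'}\sum_j\sum_{a\in S}\|R_{i-1}[a]-R_{i-1}[\sigma_j(a)]\|^2$, and then \Cref{prop:cut}(\ref{itm:potential}) is applied to each $\sigma_j$ separately. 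That said, your proposed fix can also be made to work once the first gap is closed: since $\sum_{b\in S'} m_{ab}=|X^*_a|/n'\ge 1/18$ for every $a\in S$, one has $\sum_b m_{ab}\|R_{i-1}[a]-R_{i-1}[b]\|^2\ge \tfrac{1}{18}\min_{b:m_{ab}>0}\|R_{i-1}[a]-R_{i-1}[b]\|^2$; defining $\sigma(a)$ as this minimizer (Property~\ref{prop:cut}(\ref{itm:potential}) allows any map, not just injections) yields the same $\tfrac{1}{36\cdot 720}$ drop. So the missing idea is not convexity or Cauchy--Schwarz but simply using the constant \emph{total} mass out of each $a\in S$ rather than the mass on a single edge.
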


\begin{proof}
Consider iteration $i$, we show that after we construct the fractional matching $M^i = \{m_{ab}\}_{(a,b) \in \binom{V(Y)}{2}}$ in iteration $i$ the potential decrease is:
\begin{equation}\label{eq:potential-difference}
\begin{aligned}
\Pi(i-1) - \Pi(i) &= \sum_{y \in V(Y)} \left\| R_{i-1}[y] - \bm{\frac{1}{|Y|}}\right\|^2
-  \sum_{y \in V(Y)} \left\|R_{i}[y] - \bm{\frac{1}{|Y|}}\right\|^2 \\
&=  \sum_{y \in V(Y)} \left( \| R_{i-1}[y] \|^2 -  \|R_{i}[y]\|^2\right) \hspace*{1.6cm}\mbox{(since $R_i\cdot \bm{\frac{1}{|Y|}} = R_{i-1}\cdot \bm{\frac{1}{|Y|}} = \frac{1}{|Y|}$)}
\end{aligned}
\end{equation}

Here we apply a similar argument from \cite[Lemma 3.4]{RST14}.
However, our definition of $R_{i}[y]$ is completely different from theirs so the derivation is \emph{slightly} different.
Let $M^{i*} =\{m_{ab}^*\}$ be the perfect fractional matching where self-loops are added to $M^i$ such that $\sum_b m_{ab}^*=1$.
We now observe that for all $a\in V(Y)$,
$R_i[a] = (\frac12R_{i-1}[a] + \frac12 \sum_b m_{ab}^* R_{i-1}[b])$.
Thus,
\begin{align*}
\|R_{i-1}[a]\|^2 - \|R_{i}[a]\|^2 &=
\|R_{i-1}[a]\|^2 - \left\|\frac12 R_{i-1}[a] + \frac12 \sum_b m^*_{ab} R_{i-1}[b]\right\|^2\\
&=\frac14\left(\|R_{i-1}[a]\|^2  - \left\|\sum_b m^*_{ab} R_{i-1}[b]\right\|^2\right)
+ \frac12 \left\langle R_{i-1}[a], R_{i-1}[a] - \sum_b m^*_{ab}R_{i-1}[b] \right\rangle
\intertext{We now use Jensen's inequality
and observe that $\|\sum_b m^*_{ab} R_{i-1}[b]\|^2 \le \sum_b m^*_{ab} \| R_{i-1}[b]\|^2$. This gives
}
&\ge 
\frac14\left(\|R_{i-1}[a]\|^2  - \sum_b m^*_{ab}\left\| R_{i-1}[b]\right\|^2\right)
+ \frac12 \left\langle R_{i-1}[a], R_{i-1}[a] - \sum_b m^*_{ab}R_{i-1}[b] \right\rangle
\end{align*}
By summing up the above inequality with all $a\in V(Y)$, with the fact that $m_{ab}^* = m_{ba}^*$ and $\sum_{b} m_{ab}^* = 1$, we have:
\begin{align*}
\sum_{a\in V(Y)} \frac14\left(\|R_{i-1}[a]\|^2 - \sum_b m_{ab}^* \|R_{i-1}[b]\|^2\right) = 0\ .
\end{align*}
Now, plugging the above observation into \eqref{eq:potential-difference}, we have:
\begin{align*}
\Pi(i-1) - \Pi(i) &\ge \sum_{a\in V(Y)} \frac12 \left\langle R_{i-1}[a], R_{i-1}[a] - \sum_b m^*_{ab}R_{i-1}[b] \right\rangle \\
&= \sum_{a\in V(Y)}\sum_{b\in V(Y)} \frac{m_{ab}^*}{2} \left\langle R_{i-1}[a], R_{i-1}[a] - R_{i-1}[b]\right\rangle  \tag{$\sum_b m_{ab}^*=1$}\\
\intertext{By our choice of the fractional matching $M$, we know that if $a\neq b$ and $m_{ab}^*\neq 0$, then $a$ and $b$ belong to different side of $S$ and $S'$. Furthermore, $m_{ab}^*=m_{ab}$ whenever $a\neq b$. Hence, the above expression can be further simplified to:}
&= \sum_{a\in S}\sum_{b\in S'} \frac{m_{ab}}{2} \left\langle R_{i-1}[a], R_{i-1}[a] - R_{i-1}[b]\right\rangle + 
\sum_{a\in S'}\sum_{b\in S} \frac{m_{ab}}{2} \left\langle R_{i-1}[a], R_{i-1}[a] - R_{i-1}[b]\right\rangle\\
\intertext{Now, we use a trick of renaming the second part of the summation. Using $m_{ab}=m_{ba}$ and merge the corresponding terms carefully:}
&= \sum_{a\in S}\sum_{b\in S'} \frac{m_{ab}}{2} \left\langle R_{i-1}[a], R_{i-1}[a] - R_{i-1}[b]\right\rangle + 
\sum_{b\in S'}\sum_{a\in S} \frac{m_{ba}}{2} \left\langle R_{i-1}[b], R_{i-1}[b] - R_{i-1}[a]\right\rangle\\
&=\sum_{a\in S}\sum_{b\in S'} \frac{m_{ab}}{2} \left\langle R_{i-1}[a] - R_{i-1}[b], R_{i-1}[a] - R_{i-1}[b]\right\rangle\\
&=\sum_{a\in S}\sum_{b\in S'} \frac{m_{ab}}{2} \|R_{i-1}[a]-R_{i-1}[b]\|^2\\
\intertext{Recall that $m_{ab}$ can be viewed as the number of virtual matchings from the $a$-th part to the $b$-th part of a underlying good node $X$.
By the implementation of the matching player, we know that there is a matching from $S_X$ to $S'_X$ saturating $S_X$.
Therefore, since each part of $X$ has at least $n'/18$ vertices, by partitioning the virtual matching, we obtain at least $n'/18$ matchings  from $S$ to $S'$.
Let $\{\sigma_j\}$ be such mappings from $S$ to $S'$. The above expression can then be lower bounded by:}
&\ge \frac{1}{2n'}\sum_{j=1}^{n'/18}\sum_{a\in S} \|R_{i-1}[a] - R_{i-1}[\sigma_j(a)]\|^2\\
&\ge \frac{1}{2n'} \cdot \frac{n'}{18} \cdot \frac{1}{720}\cdot \Pi(i-1)\ . \tag{by \Cref{lem:cut-player-potential}}
\end{align*}

Therefore, $$\Pi(i) \leq \left(1-\frac{1}{36\cdot 720} \right) \cdot \Pi(i-1) \ .$$
At the beginning of the cut-matching game, we have $\Pi(0)= t-1$.
By setting $\lambda = 36\cdot 720\cdot 4\cdot \ln n$, we know that $\Pi(\lambda) \le 1/(9n^3)$.
This implies that the algorithm terminates in $\lambda = O(\log n)$ rounds.
\end{proof}

\section{Distributed Expander Sorting}\label{sec:distributed-expander-sorting-appendix}

\deterministicexpandersort*

\begin{proof}[Proof of \Cref{thm:expander-sort}]
Without loss of generality, we may assume that $L=1$, and that every vertex holds exactly one token.
The idea is to embed an entire AKS sorting network \cite{AKS83} over the vertices of $X_{best}$.
Once we have such an embedding, the algorithm can then be done in three
steps:
\begin{description}[itemsep=0pt]
\item[Step 1.] Send all tokens to $X_{best}$ in any way but load-balanced: each vertex in $X_{best}$ holds at most $L\cdot\rho_{best}$ tokens.
\item[Step 2.] Simulate the AKS sorting network to sort all the tokens on $X_{best}$.
\item[Step 3.] Using a precomputed order-preserving all-to-best route that sends the tokens back to each vertex, preserving the sortedness.
\end{description}

It is straightforward to see that, the chosen route in the first step can be the same as the third step.
Therefore, it suffices to build an order-preserving all-to-best route.
We note that if
$X$ is a leaf component, then there is no need to perform Step 1 and Step 3 as $X=X_{best}$.

Now,  suppose that $X$ is a non-leaf component.
We notice that the third step is not trivial as the IDs of vertices in $X$ may not be consecutive, so we cannot directly invoke a single \textbf{Task 2}.
Fortunately, in the preprocessing step it is affordable to perform $k$ binary search procedures, so each vertex $v\in X^*$ knows which part $X_{j_v}^*$ it is mapped to.

To compute an order-preserving all-to-best route, the algorithm 
creates a token at each vertex $v\in X^*$ with the part mark $j_v$, and then invoke \textbf{Task 3} to route these tokens to the associated part.
Then, for each $j$, the algorithm invokes the expander sorting (\Cref{thm:expander-sort}) within the part $X_j^*$.
After the tokens are sorted within $X_j^*$, an order-preserving all-to-best route can be constructed by concatenating the all-to-best routes in the parts.

An AKS sorting network $\oldmathcal{I}_{\textsf{AKS}}$ has $O(\log |X_{best}|)$ layers, and in each layer there is a matching specifying which pairs of the keys are being compared.
During the preprocessing time, each vertex in $X_{best}$ knows its rank, so an actual ``routable'' sorting network can be constructed by invoking \textbf{Task 2} $O(\log n)$ times. 
Since there are $\rho_{best}$ tokens on each best vertex, in every comparison two sets of $\rho_{best}$  tokens are first merged and then split into two halves.
When $L>1$ the same argument applies since in each comparison two sets of $L\rho_{best}$ tokens are being compared.

The theorem statement concerns $X^*=X\cup X'$, where additional $2Q(f^0_M)^2$ rounds are added for moving tokens back and forth between $X'$ and $X$.
\end{proof}

\tokenranking*

Here we provide a reduction to expander sorting.
In the preparation step, the algorithm obtains the rank of each node.
To achieve this, the algorithm first builds an arbitrary serialization that assigns each vertex a serial number in $\{1, 2, \ldots, n\}$. This can be done by obtaining a BFS tree and assigning each vertex the in-order index in the diameter time $D(H_X)\cdot Q(f^0_{H_X})^2$.
Then, each vertex generates a token with this serial number as the key, setting $L=1$ and invoking \Cref{thm:expander-sort}.
After the sort, the vertex $v$ obtains a token with key equals $\mathit{rank}(v)$, which is the rank of $\mathit{ID}(v)$ among all vertex IDs.
Now we solve the real token ranking problem.

An ideal case is that all keys are distinct.
In this case,
each vertex generates extra tokens with $k_z=\infty$ such that each vertex holds \emph{exactly} $L$ tokens.
After invoking the expander sorting procedure (\Cref{thm:expander-sort}), we know that each vertex with rank $i=\mathit{rank}(v)$ now holds exactly $L$ tokens where those token ranks are in the range $[(i-1)L, iL)$.
Thus, all token ranks can be assigned correctly.

Now we solve the general case.
The idea is to perform deduplication.
One possible implementation is using expander sort described in \Cref{thm:expander-sort} with a tiny add-on:
Initially, each token $z$ is tagged with its starting location vertex ID and its sequential order among all tokens within that starting location, denoted as $u_z$.
This tag $u_z$ is used for tie-breaking.
When running the expander sorting algorithm, whenever two tokens with identical keys are being compared, the token with a larger tag is \emph{marked as a duplicate} and considered as a larger value. 
The following property implies that for any \emph{comparison-based} expander sorting algorithm, for each key there will be exactly one token (the token with the smallest tag)
not marked as a duplicate.
\begin{property}[Chain of Comparisons]\label{sorting-chain}
Let $\oldmathcal{A}_{\textsf{sort}}$ be any deterministic comparison-based sorting algorithm that implements an expander sort.
Let $Z$ be a set of tokens with the same key.
Let $z^*$ be the token in $Z$ with the smallest tag $u_{z^*}$.
For any other token $z\in Z$, $z\neq z^*$, 
there exists a chain of tokens $(z_0=z, z_1, z_2, \ldots, z_\ell=z^*)$ in $Z$ such that (1) $u_{z_i} > u_{z_{i+1}}$ for all $i$, and (2) the comparisons $(z_i, z_{i+1})$ take place in chronological order during the execution of $\oldmathcal{A}_{\textsf{sort}}$.
\end{property}

\begin{proof}
To see this, we observe that such a sequence can be obtained by first slightly decreasing $z$'s key $k_z$ and running $\oldmathcal{A}_{\textsf{sort}}$, and recording all tokens in $Z$ that got compared with $z$ but with a tag smaller than the currently observed smallest tag.
If $z\neq z^*$, then this sequence must end at $z_{\rm{end}}=z^*$, as otherwise the sorting algorithm may not be sorting properly: $\oldmathcal{A}_{\textsf{sort}}$ results the same on the instance where we swap the tags between $z_{\rm{end}}$ and $z^*$.
\end{proof}

The correct token ranks can then be obtained by running \Cref{thm:expander-sort} again without the participation of the duplicated tokens.
However, we still need to propagate these ranks to all tokens that were marked as duplicates. The following lemma shows that propagation can be done by reverting the expander sort.

\localpropagation*
\begin{proof}
To achieve local propagation, one may simply revert the entire expander sorting procedure, and propagate the information whenever needed (i.e., upon comparing $z$ and $z'$, update \emph{both} tag and variable if the key is the same $k_{z'}=k_z$ but the tag is larger $u_{z'} > u_z$).
The correctness follows again from the chain of comparison property described above in \Cref{sorting-chain}.
\end{proof}

\begin{proof}[Proof of \Cref{thm:token-ranking}]
\Cref{thm:token-ranking} follows after the above discussion and \Cref{lemma:local-propagation}.
\end{proof}

\localserialization*

\begin{proof}
This can be done by invoking token ranking twice.
First, the algorithm invokes \Cref{thm:token-ranking} with the keys being attached with a unique tag (starting vertex ID and its serial number).
Then, the algorithm invokes the token ranking again but when applying the propagation all tokens with the same key obtain the smallest rank among them.
Finally, the local rank can be obtained by subtracting the global rank with the propagated value.
\end{proof}

\localaggregation*

\begin{proof}
This can be done by invoking two token rankings (\Cref{thm:token-ranking}) where the second time all tag values are negated.
After that, the token with the smallest tag value obtains the count.
With the count value, by applying a
local propagation (\Cref{lemma:local-propagation}), all tokens obtain the correct count values.
\end{proof}

\section{The Reduction from Task 1 to Task 2}\label{sec:reductions}
To reduce from {\bf Task 1} to {\bf Task 2}, we will delegate the routing task to the vertices in $V_{best}$.
The intuition leads to a definition of a load-balanced \emph{all-to-best} mapping $h:V\to V_{best}$ with a bounded pre-image size $\rho_{best}$. With the mapping $h$, the routing task can be accomplished by first sending tokens with destination $v$ to $h(v)\in V_{best}$ and then routing the token from $h(v)$ to $v$.
The latter half of the above solution can also be thought of first solving a routing task from $v$ to $h(v)$ and then reversing the path. However, the computation of $h(v)$ is not always available locally on any vertex. It depends on the design of $h$. If all vertices do not agree on the same $h(v)$ value, the above reduction will not work.

\paragraph{An Ideal Case}
In an ideal scenario where all vertices have IDs exactly from $\{1, 2, \ldots, n\}$, such a function $h$ can be easily defined by setting $h(v)$ to be the $i$-th smallest ID among $V_{best}$ where $i=\mathrm{ID}(v)\bmod |V_{best}|$.
This definition of $h$ has two benefits: On one hand, computing the size of $|V_{best}|$ can be done in a straightforward way (e.g., \cite[Lemma A.1]{ChangS20}) such that $i$ can be computed locally.
On the other hand, the indirect value ``$i$'' on a token can be rewritten recursively along the hierarchy $\T$.
The routing task on a good node $X\in\T$ can then be reduced to sending tokens to the corresponding next-level components within $X$.

The following \textbf{Task 1'} summarizes the ideal case.

\paragraph{Task 1'}
Let $G$ be a constant degree $\psi$-expander, where each vertex of $G$ has a unique destination ID in $\{0, 1, \ldots, n-1\}$. Suppose that each node in $G$ holds at most $L$ tokens, and each node is a destination of at most $L$ tokens. The goal is to route the tokens to their destinations.

\begin{lemma}\label{lem:task1}
Suppose there exists a \congest algorithm that solves \textbf{Task 1'} with $T_{1'}^{\sf{pre}}(n)$ preprocessing time and $T_{1'}^{\sf{pre}}(n, L)$ query time. Then, there exists an algorithm that solves \textbf{Task 1} with $T_{\rm hie}(n) + T_{\rm sort}^{\sf{pre}}(|W|) + T_{1'}^{\sf{pre}}(n)$ preprocessing time and $O(T_{\rm sort}(|W|, L) + T_{1'}(n, L)) + \poly(\psi^{-1}, \log^{1/\epsilon} n)$ query time, where $T_{\rm hie}(n)$ is the time to build the hierarchical decomposition $\oldmathcal{T}$, and $W$ is the root of $\oldmathcal{T}$.
\end{lemma}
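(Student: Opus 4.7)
The plan is to reduce \textbf{Task 1} to \textbf{Task 1'} by rewriting the destination ID of every token from the wide range $\{1,\ldots,n^{O(1)}\}$ to its rank within $\{0,1,\ldots,n-1\}$, after which the hypothesised algorithm for \textbf{Task 1'} can be invoked as a black box. Since the input vertex IDs are globally distinct, the ranks form a bijection and the semantics of the routing task are preserved.

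In the preprocessing phase I would first build the hierarchical decomposition $\oldmathcal{T}$ of $G$ in $T_{\rm hie}(n)$ rounds, and then embed the root-level matching $f_{M_{root}}$ of \Cref{lemma:root-level-matching} so that every vertex in $V\setminus W$ is attached to a distinct portal in the root $W$. This setup lets the expander sorting machinery of \Cref{thm:expander-sort} operate on the virtual graph $X^*=W\cup (V\setminus W)=V$ via $X=W$, giving preprocessing cost $T_{\rm sort}^{\sf{pre}}(|W|)$, with the matching term absorbed into the polylogarithmic slack. Next I would compute each vertex's rank $r(v)\in\{0,\ldots,n-1\}$ by having every vertex create a single token carrying its own destination ID and invoking the token ranking primitive of \Cref{thm:token-ranking} with $L=1$; by the bounds proved in \Cref{sec:solving-recurrence-relation} this only costs $\poly(\psi^{-1},\log^{1/\epsilon} n)$ rounds, which folds into the $T_{\rm sort}^{\sf{pre}}(|W|)$ term. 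Finally, treating the ranks as the new destination IDs in $\{0,\ldots,n-1\}$, I would run the preprocessing of the assumed \textbf{Task 1'} algorithm, paying $T_{1'}^{\sf{pre}}(n)$ rounds.

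When a query arrives, each actual token $z$ carries its original destination ID $d_z\in\{1,\ldots,n^{O(1)}\}$, and I would translate $d_z$ to the precomputed rank $r_z$ by invoking local propagation (\Cref{lemma:local-propagation}). Specifically, each vertex $v$ creates one auxiliary reference token with key $d_v$, a tag strictly smaller than any real-token tag, and variable $r(v)$; each real token gets key $d_z$, a unique larger tag, and an empty variable. Because vertex destination IDs are globally distinct, each key has a unique reference token, which is guaranteed to win the $\mathrm{argmin}$ in the local propagation semantics, so every real token ends up carrying the correct rewritten destination $r_z$. This step handles $O(L)$ tokens per vertex and costs $O(T_{\rm sort}(|W|,L))+L\cdot\poly(\psi^{-1},\log^{1/\epsilon}n)$ rounds in total, including routing across $f_{M_{root}}$. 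After dropping the reference tokens I would invoke the \textbf{Task 1'} query algorithm, paying $T_{1'}(n,L)$, which delivers every real token to the vertex whose new ID equals $r_z$---which is exactly the vertex with the original destination ID $d_z$. Summing the two contributions gives the claimed $O(T_{\rm sort}(|W|,L)+T_{1'}(n,L))+\poly(\psi^{-1},\log^{1/\epsilon}n)$ query bound.

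The main thing I expect needs careful handling is that the sorting and propagation primitives must span \emph{all} of $V$, even though the hierarchical decomposition only covers the root $W\subseteq V$; this is exactly what the matching $f_{M_{root}}$ from \Cref{lemma:root-level-matching} and the $X^*=W\cup(V\setminus W)$ entry point of \Cref{thm:expander-sort} are designed for, and nothing deeper is needed. A smaller bookkeeping point is the tag assignment in the propagation step---placing reference tokens at strictly smaller tags than any real-token tag guarantees they are the argmin of their key class and thus the conduit through which the precomputed ranks reach every real token with the same destination.
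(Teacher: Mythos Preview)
Your proposal is correct and follows essentially the same route as the paper: reduce \textbf{Task 1} to \textbf{Task 1'} by replacing each destination ID with its rank in $\{0,\ldots,n-1\}$, using the token-ranking primitive built on expander sorting over $X^*=V$ via the root-level matching. The only cosmetic difference is that you precompute vertex ranks during preprocessing and then use local propagation at query time to push those ranks onto the real tokens, whereas the paper performs a single token-ranking call at query time on the union of the real tokens (keyed by destination ID) and one dummy token per vertex (keyed by its own ID); both variants cost $O(T_{\rm sort}(|W|,L))$ per query and rely on the same underlying machinery.
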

\begin{proof}%
During the preprocessing phase, we first build the hierarchical decomposition. Then, we invoke the preprocessing algorithms for expander sorting and {\bf Task 1'}. Given a query, it suffices to show that all vertices and all tokens can translate their IDs and destination IDs to $\{1, 2, \ldots, n\}$ properly.  The algorithm begins with setting the key $k_z$ of each token $z$ to be its destination ID. The algorithm also creates a dummy token for each vertex in $V$ with the key being its ID. Then, after invoking \Cref{thm:token-ranking}, which takes $O(T_{\rm sort}(|W|, L)) +\poly(\psi^{-1}, \log^{1/\epsilon} n)$ time, the translation is done. Then we can run the query algorithm for {\bf Task 1'}.
\end{proof}

Now we are ready to reduce {\bf Task 1'} to {\bf Task 2}. 
\begin{lemma}\label{lem:task2}
Given a hierarchical decomposition $\oldmathcal{T}$, suppose there exists a \congest algorithm that solves \textbf{Task 2} with $T_{2}^{\sf{pre}}(|X|)$ preprocessing time and $T_{2}^{\sf{pre}}(|X|, L)$ query time for $X \in \oldmathcal{T}$. Then, there exists an algorithm that solves \textbf{Task 1'} with $\poly(\log^{1/\epsilon} n, \psi^{-1}) + T_{2}^{\sf{pre}}(|W|)+T_{2}(|W|, O(1))$ preprocessing time and $T_{2}(|X|, L) +L\cdot \poly(\log^{1/\epsilon} n, \psi^{-1})$ query time, where $W$ is the root of $\oldmathcal{T}$.
\end{lemma}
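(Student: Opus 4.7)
The plan is to reduce \textbf{Task 1'} to one \textbf{Task 2} query per routing instance, by caching reverse paths between every vertex of $W$ and a delegate best vertex. During preprocessing I would first invoke the token-ranking primitive of \Cref{thm:token-ranking} so every vertex learns $|V_{best}|$ and the rank of each best vertex, in $\poly(\log^{1/\epsilon} n,\psi^{-1})$ rounds. Every vertex $v\in V$ can then locally compute a delegate index $i_v := \lfloor v\cdot |V_{best}|/n\rfloor$ and identify its delegate $h(v)\in V_{best}$ as the $i_v$-th smallest best vertex. Since $|V_{best}|\ge |W|/\rho_{best}\ge (2n/3)/\rho_{best}$, each best vertex is the delegate of at most $O(\rho_{best})$ distinct vertices, so the map $h$ obeys the load condition of \textbf{Task 2}.

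Next, after running the \textbf{Task 2} preprocessing in $T_2^{\sf{pre}}(|W|)$ rounds, I would invoke \textbf{Task 2} once with $L=O(1)$: every vertex $v\in W$ emits one dummy token with destination marker $i_v$. Every intermediate vertex along the resulting route records its two neighbors on the path, so the family of paths $\{P_v\}_{v\in W}$ linking $v$ to $h(v)$ is stored distributively and can be replayed in either direction. Tokens originating from or destined to $V\setminus W$ are shuttled in and out of $W$ through the pre-embedded matching $f_{M_{root}}$ of \Cref{lemma:root-level-matching}, which adds only $\poly(\log^{1/\epsilon} n,\psi^{-1})$ to both phases. At query time, each token $z$ with destination $d_z$ overrides its destination marker with $i_{d_z}$ and invokes \textbf{Task 2} on $W$ with load $L$, reaching $h(d_z)$ in $T_2(|W|,L)$ rounds. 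The \textbf{Task 2} load condition still holds because each destination in $V$ carries at most $L$ tokens and each marker is shared by $O(\rho_{best})$ destinations. Once it arrives at $h(d_z)$, the token follows the stored reverse copy of $P_{d_z}$ back to $d_z$.

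The main obstacle will be controlling the congestion of the reverse walk: after the first \textbf{Task 2} phase a best vertex holds up to $O(L\rho_{best})$ tokens that must fan out along $O(\rho_{best})$ distinct stored paths, i.e.~at most $L$ tokens per path. The key observation is that the entire family of stored paths was produced by a single $L=O(1)$ invocation of \textbf{Task 2}, so the union $\bigcup_v P_v$ inherits its quality $Q=\poly(\log^{1/\epsilon} n,\psi^{-1})$ from that query; by \Cref{fact:routing-along-precomputed-paths} a single token per path can be routed in $Q^2$ rounds, and pipelining $L$ passes of this gives the claimed $L\cdot \poly(\log^{1/\epsilon} n,\psi^{-1})$ cost. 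A secondary subtlety is that the whole construction leans on the Task~1' hypothesis that destination IDs fill $\{0,1,\ldots,n-1\}$ exactly, which is precisely what allows $i_v$ to be computed from $v$ alone without any additional communication per query.
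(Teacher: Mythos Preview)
Your approach is essentially the same as the paper's: define a delegate map $h:V\to V_{best}$ via the Task~1' ID assumption, precompute routes $v\to h(v)$ by one $L=O(1)$ instance of \textbf{Task 2}, and at query time route to $h(d_z)$ via \textbf{Task 2} then replay the stored reverse route. The paper uses $h(v)=\mathrm{ID}(v)\bmod |V_{best}|$ rather than your floor formula, but either works.

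One minor point: invoking \Cref{thm:token-ranking} just to learn $|V_{best}|$ is both heavier than needed and creates an ordering issue, since token ranking rests on expander sorting, which in turn requires the \textbf{Task 2} preprocessing you only perform afterward. The paper sidesteps this by computing $|X_{best}|$ with a simple bottom-up aggregation along the hierarchy in $\poly(\log^{1/\epsilon}n,\psi^{-1})$ rounds, with no dependence on the sorting machinery. Your reverse-walk cost analysis is actually more explicit than the paper's, which simply asserts that tokens ``are routed to the destination via the precomputed all-to-best route'' without spelling out the $L\cdot\poly(\log^{1/\epsilon}n,\psi^{-1})$ bound.
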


\begin{proof}%
\item
\paragraph{Preprocessing} 
For each component $X\in\T$, the algorithm obtains the number of best vertices $|X_{best}|$ using a bottom-up approach, which takes $O(D(|H_{X}|\cdot f^{0}_{X}(H_{X})) = \poly(\log^{1/\epsilon} n, \psi^{-1})$ time. Then $W_{best}$ can be propagated to every node in $V$ by using the matching embedding $f_{M_{root}}$ in $Q(f_{M_{root}}) = \poly(\log^{1/\epsilon} n, \psi^{-1})$ time.

Now the algorithm performs preprocessing steps required for {\bf Task 2} in $T_{2}^{\sf{pre}}(|W|)$ time. Then, the algorithm invokes Task 2 on the instance where each vertex $v$ generates a token $z$ with its destination marker $i_z := h(v) = \mathrm{ID}(v)\bmod |V_{best}|$. This can be done in $T_{2}(|W|, O(1))$ time by first throwing the tokens outside of $W$ to $W$ using the matching embedding $f_{M_{root}}$ in $Q(f_{M_{root}}) = \poly(\log^{1/\epsilon} n, \psi^{-1})$ time.

After obtaining the routes for each vertex $v$ to the corresponding best vertex $h(v)$, the routes are memorized in order to serve the query.

\item 
\paragraph{Query} 
For each token $z$ with destination $\mathit{dst}_z$, the algorithm assigns the destination mark $j_z := \mathit{dst}_z \bmod |V_{best}|$. Now the algorithm sends  the tokens outside of $W$ to $W$ using the matching embedding $f_{M_{root}}$ in $L\cdot Q(f_{M_{root}}) = L\cdot \poly(\log^{1/\epsilon} n, \psi^{-1})$ time. Then, the algorithm invokes {\bf Task 2} on $W$ with time $T_2(|W|, L)$.  Finally, the tokens on the best vertices are routed to the destination via the precomputed all-to-best route.
\end{proof}

\begin{theorem}\label{thm:task1}
Fix $\epsilon = \Omega(\sqrt{\log\log n/\log n})$. There exists a \congest algorithm solving \textbf{Task 1} that uses $L\cdot \poly(\psi^{-1})\cdot n^{O(\epsilon)}$ rounds for preprocessing and $L\cdot \poly(\psi^{-1})\cdot \log^{O(1/\epsilon)} n$ rounds for routing.
\end{theorem}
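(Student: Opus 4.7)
The plan is to chain together the reductions and time bounds already established in the paper. The theorem statement is a direct consequence of \Cref{lem:task1}, \Cref{lem:task2}, \Cref{thm:task2}, \Cref{thm:expander_sorting_time}, and \Cref{thm:build-hierarchy}. So the proof will be essentially a bookkeeping argument combining these components.

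First I would invoke \Cref{lem:task1} to reduce \textbf{Task 1} to \textbf{Task 1'}, paying an additive cost of $T_{\rm hie}(n) + T_{\rm sort}^{\sf{pre}}(|W|)$ for preprocessing and $O(T_{\rm sort}(|W|, L)) + \poly(\psi^{-1}, \log^{1/\epsilon} n)$ for the query. Then I would apply \Cref{lem:task2} to reduce \textbf{Task 1'} to \textbf{Task 2}, adding $T_2^{\sf{pre}}(|W|) + T_2(|W|, O(1)) + \poly(\log^{1/\epsilon} n, \psi^{-1})$ to preprocessing and $T_2(|W|, L) + L\cdot \poly(\log^{1/\epsilon} n, \psi^{-1})$ to the query. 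Finally I would plug in the concrete bounds from \Cref{thm:task2} and \Cref{thm:expander_sorting_time}: $T_2^{\sf{pre}}(|X|) = T_{\rm sort}^{\sf{pre}}(|X|) = \poly(\psi^{-1}, k, \log^{1/\epsilon} n)$ and $T_2(|X|, L) = T_{\rm sort}(|X|, L) = L\cdot \poly(\psi^{-1}, \log^{1/\epsilon} n)$, together with $T_{\rm hie}(n) = \poly(\phi^{-1})(n^{O(\epsilon)} + \log^{O(1/\epsilon)} n)$ from \Cref{thm:build-hierarchy}.

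Setting $k = n^{\epsilon}$ (the parameter fixed in \Cref{prop:hierarchical}), all preprocessing terms collapse to $\poly(\psi^{-1}) \cdot (n^{O(\epsilon)} + \log^{O(1/\epsilon)} n)$. Since the constraint $\epsilon = \Omega(\sqrt{\log \log n / \log n})$ ensures $\log^{O(1/\epsilon)} n \le 2^{O(\sqrt{\log n \log \log n})} \le n^{O(\epsilon)}$, the preprocessing bound simplifies to $L \cdot \poly(\psi^{-1}) \cdot n^{O(\epsilon)}$ as claimed. For the query, every term is of the form $L \cdot \poly(\psi^{-1}, \log^{1/\epsilon} n) = L \cdot \poly(\psi^{-1}) \cdot \log^{O(1/\epsilon)} n$, matching the statement.

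There is no real obstacle here: all the heavy lifting has been done in the earlier sections, and the proof is a routine assembly. The only minor point worth a sentence is verifying that the parameter regime $\epsilon = \Omega(\sqrt{\log \log n / \log n})$ is what makes $\log^{O(1/\epsilon)} n$ absorbable into $n^{O(\epsilon)}$ in the preprocessing bound, which follows from the inequality $(\log n)^{1/\epsilon} \le 2^{\log \log n / \epsilon} \le 2^{\epsilon \log n} = n^{\epsilon}$ whenever $\epsilon^2 \ge \log \log n / \log n$. With this remark, the proof is complete.
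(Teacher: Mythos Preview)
Your proposal is correct and follows essentially the same approach as the paper: both proofs chain \Cref{lem:task1} and \Cref{lem:task2} to reduce to \textbf{Task 2}, then plug in the bounds from \Cref{thm:build-hierarchy}, \Cref{thm:task2}, and \Cref{thm:expander_sorting_time}, and finally observe that the constraint $\epsilon = \Omega(\sqrt{\log\log n/\log n})$ absorbs $\log^{O(1/\epsilon)} n$ into $n^{O(\epsilon)}$. Your explicit verification of the inequality $(\log n)^{1/\epsilon}\le n^{\epsilon}$ under this regime is a nice touch that the paper leaves implicit.
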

\begin{proof}
By \Cref{lem:task1} and \Cref{lem:task2}, the preprocessing time needed to prepare an instance for {\bf Task 1} is:
$$T_{\rm hie}(n) + T_{\rm sort}^{\sf{pre}}(|W|) + \poly(\log^{1/\epsilon} n, \psi^{-1}) + T_{2}^{\sf{pre}}(|W|)+T_{2}(|W|, O(1))$$
By \Cref{thm:build-hierarchy}, \Cref{thm:expander_sorting_time}, and \Cref{thm:task2}, this quantity is:
$$\poly(\psi^{-1}, n^{\epsilon}, \log^{1/\epsilon} n) = \poly(\psi^{-1}, n^{\epsilon})$$
when $\epsilon =  \Omega(\sqrt{\log\log n/\log n})$.
By \Cref{lem:task1} and \Cref{lem:task2}, the query time is:
$$O(T_{\rm sort}(|W|, L) + T_{2}(|X|, L) +L\cdot \poly(\log^{1/\epsilon} n, \psi^{-1}))$$
By \Cref{thm:expander_sorting_time} and \Cref{thm:task2}, this quantity is:
\[L \cdot \poly(\psi^{-1}, \log^{1/\epsilon} n).\qedhere\]
\end{proof}
\section{Routing on General Expanders}\label{sec:general_graphs}

\paragraph{Expander Split} We follow the framework of \cite{ChangS20}.
A constant degree expander $G^\diamond$ is constructed by replacing each vertex $v$ on $G$ into an expander of $\deg(v)$ vertices $(v, 1), \ldots, (v, \deg(v))$. Then, for each edge incident on $v$ we assign the incident vertex to an arbitrary but unique vertex $(v, i)$.

The main challenge here is that in the actual routing task, the vertices holding the tokens with the same destination vertex $v$ do not have the knowledge to re-assign vertex labels $(v, i)$ in a load-balanced way.
In \cite{ChangS20}, they modified the routing algorithm such that, the destination is represented as a range and the notion of \emph{average load} is introduced.
The modification adds another complication to the algorithm we have developed in \Cref{sec:reductions}.

\paragraph{Local Propagation and Local Serialization to the Rescue} We provide an alternative and conceptually simpler solution of re-assigning destination labels\footnote{The actual implementation may not be simpler, but the correctness should be much simpler to validate.}.
The key to simplifying this reduction is through local propagation (\Cref{lemma:local-propagation}) and local serialization (\Cref{lemma:local-serialization}).
\begin{itemize}[itemsep=0pt]
    \item Local Propagation: The goal is to make sure every token $z$ with destination vertex $\mathit{dst}_z=v$ obtains the knowledge of $\deg(v)$. To achieve this, the algorithm generates a dummy token on each vertex $v\in V$, sets up its grouping key to be $v$, its tag to be $-\infty$, and its variable to be $\deg(v)$. For all query tokens, the algorithm sets the key to be its destination vertex and an arbitrary positive tag.
    Then, after invoking \Cref{lemma:local-propagation}, every token learns $\deg(v)$.
    \item Local Serialization: Each token $z$ obtains a unique serial number $\mathit{SID}_z$ among all tokens with the same destination vertex.
\end{itemize}
Once each token $z$ obtains the information about its destination $\deg(v)$ and $\mathit{SID}_z$, a load-balanced assignment of destination labels can be simply computed by $(v, i:=\mathit{SID}_z\bmod \deg(v)+1)$.

\paragraph{Remark: Unknown Maximum Load} If $L$ is unknown to a routing instance, we can use a standard doubling trick: Set $L'=1, 2, 4, \ldots$ and invoke an expander routing instance with $L=L'$. If at any moment the algorithm realizes some part holds more than the maximum allowed number of tokens, then the current expander routing instance is halted. This doubling trick posed a constant factor to the running time, compared with the case where the maximum load is known.

\section{Expander Routing and Expander Sorting are Equivalent}\label{sec:equivalence}

Let $G$ be an expander graph in \congest model with conductance $\phi$.
We define the following problems on $G$:

\begin{mdframed}
\textsc{ExpanderRouting}$(\mathit{ID}, Z, \mathit{dst}, L)$

\noindent\textbf{Input:}
Each vertex $v$ has a unique identifier $\mathit{ID}(v)\in [1, \poly(n)]$ and holds at most $L$ tokens. 
$Z$ is the set of tokens.
Each token $z\in Z$ has a destination ID $\mathit{dst}_z$ and there are at most $L$ tokens having the same destination ID.

\noindent\textbf{Goal:}
For each $v\in V$, all tokens with destination ID $\mathit{dst}_z=\mathit{ID}(v)$ are located at $v$.
\end{mdframed}

\begin{mdframed}
\textsc{ExpanderSorting}$(\mathit{ID}, Z, \mathit{k}, L)$

\noindent\textbf{Input:}
Each vertex $v$ has an $\mathit{ID}(v)\in [1, \poly(n)]$
and holds at most $L$ tokens.
$Z$ is the set of tokens, and 
each token $z\in Z$ has a key $\mathit{k}_z$.

\noindent\textbf{Goal:}
There are at most $L$ tokens on each vertex.
Furthermore, for each pair of tokens $(z, z')$ on vertices $v$ and $v'$ respectively, $\mathit{ID}(v) \le \mathit{ID}(v')$ implies $k_z\le k_{z'}$.
\end{mdframed}

\begin{lemma}
Suppose there is a \congest algorithm $\oldmathcal{A}_{\textsf{route}}$ that solves \textsc{ExpanderRouting} in $T_{\textsf{route}}(n, \phi, L)$ time.
Then, there is a \congest algorithm $\oldmathcal{A}_{\textsf{sort}}$ that solves \textsc{ExpanderSorting} in $O(\log n)\cdot T_{\textsf{route}}(n, \phi, L)$ time.
\end{lemma}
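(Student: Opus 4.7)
The plan is to sort the tokens by simulating an oblivious sorting network of depth $\lambda = O(\log n)$---concretely the AKS network on $nL$ items---where each comparator layer is implemented by a constant number of invocations of $\oldmathcal{A}_{\textsf{route}}$. I would start by arranging for every vertex $v$ to know its rank $r(v)\in\{1,\ldots,n\}$ in the sorted order of IDs, plus the IDs of the $O(\log n)$ vertices whose ranks appear as comparator partners of $v$ in some layer of AKS. Ranks themselves can be computed in $O(\phi^{-1}\log^{2} n)$ rounds by building a BFS spanning tree of depth $O(\phi^{-1}\log n)$ and running $O(\log n)$ binary-search iterations over the ID space, each iteration aggregating a single global count along the tree in $O(\phi^{-1}\log n)$ rounds; this cost is absorbed into $O(\log n)\cdot T_{\textsf{route}}(n,\phi,L)$ because any non-trivial routing instance already takes $\Omega(\phi^{-1}\log n)$ rounds just to span the diameter.

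Once the directory is in place, I would identify position $p\in\{1,\ldots,nL\}$ in the sorting network with slot $((p-1)\bmod L)+1$ of the vertex with rank $\lceil p/L\rceil$. Each vertex packs its tokens into its $L$ slots and fills any empty slot with a dummy token of key $+\infty$ so that every position is occupied. In each of the $\lambda$ AKS layers, every token computes the position of its comparator partner (a fixed function of the network layout), looks up that partner's vertex ID in the directory, and is shipped by one invocation of $\oldmathcal{A}_{\textsf{route}}$ with load $O(L)$ to the partner's vertex. The receiving vertex then performs the comparator on each incoming pair locally, keeps one element, and a second invocation of $\oldmathcal{A}_{\textsf{route}}$ returns the other so that loads are restored to $L$. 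Summing over the $\lambda$ layers yields $O(\log n)\cdot T_{\textsf{route}}(n,\phi,L)$ total rounds, and correctness follows from the AKS guarantee: after the last layer the $L$ tokens at the vertex of rank $r$ are exactly the $((r-1)L+1)$-th through $rL$-th smallest keys, precisely the configuration required by \textsc{ExpanderSorting}.

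The genuine difficulty is bootstrapping the rank-to-ID directory without a sorting primitive in hand, since each lookup query is naturally addressed by rank while $\oldmathcal{A}_{\textsf{route}}$ can only be addressed by ID. My plan is to build the directory incrementally alongside the sorting-network simulation: the base layer requires only immediate rank-neighbours, which each vertex can learn by a single flooding on the BFS tree in $O(\phi^{-1}\log n)$ rounds; at each subsequent layer, the token being shipped through $\oldmathcal{A}_{\textsf{route}}$ piggy-backs its (rank, ID) pair, so that when two partners meet they record each other's IDs and cache the association for future layers. Because each of the $\lambda$ layers adds only $O(1)$ new entries per vertex and consumes $O(1)$ auxiliary routings, the directory construction costs an additional $O(\log n)\cdot T_{\textsf{route}}(n,\phi,L)$ rounds, matching the claimed bound.
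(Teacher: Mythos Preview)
Your overall strategy---simulate an $O(\log n)$-depth AKS sorting network, implementing each comparator layer by a constant number of calls to $\oldmathcal{A}_{\textsf{route}}$---is exactly the paper's approach. The gap is in your handling of what you correctly identify as the ``genuine difficulty'': the rank-to-ID directory.

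Your bootstrapping scheme is circular. To ship a token to its layer-$i$ partner via $\oldmathcal{A}_{\textsf{route}}$ you must already know that partner's ID, since the destination field of a routing instance is an ID; but you propose to learn the partner's ID only \emph{after} the two tokens meet. There is no reason the layer-$i$ comparator pairing in AKS should coincide with, or be deducible from, meetings in layers $<i$, and the claim that ``the base layer requires only immediate rank-neighbours'' is false for AKS (it holds for odd--even transposition sort, which has depth $\Theta(n)$). Your rank computation is also underspecified: $O(\log n)$ global counts along a BFS tree let you locate a single quantile, not every vertex's individual rank simultaneously.

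The paper sidesteps the directory entirely by observing that the \textsc{ExpanderRouting} interface accepts \emph{any} unique identifiers in $[1,\poly(n)]$, so one can address routing by rank rather than by the original ID. Concretely: first assign an arbitrary serialization $\mathit{ID}'\in\{1,\ldots,n\}$ via in-order labels on a BFS tree (diameter time); run the AKS-via-routing procedure once using $\mathit{ID}'$ as the addressing scheme to sort tokens keyed by the original $\mathit{ID}$, which yields $\mathit{rank}(v)$ for every $v$; then run AKS-via-routing a second time using $\mathit{rank}$ as the addressing scheme to sort the actual query tokens. In both passes the comparator partner at each layer is known directly from its rank, so no lookup table is needed. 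A minor further simplification in the paper: the network has $n$ wires (one per vertex) rather than $nL$, and each comparator merges two batches of $L$ tokens and splits them---this is equivalent to your $nL$-wire version but keeps the load parameter at $L$ throughout.
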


\begin{proof}
Consider an instance of \textsc{ExpanderSorting}$(\mathit{ID}, Z, k_z, L)$.

Suppose that each vertex $v$ obtains $\mathit{rank}(v)\in \{1, \ldots, n\}$, its rank among all identifiers $\{\mathit{ID}\}$.
To sort the query tokens $Z$, the algorithm first adds dummy tokens (with arbitrary keys) on all vertices such that each vertex has exactly $L$ tokens.
Let $Z^*$ be this extended set of tokens.
Then, following the approach of Su and Vu \cite{su2019distributed},
the algorithm simulates sorting over
an AKS sorting network~\cite{AKS83}.
The sorting network consists of $O(\log n)$ layers, where each layer describes a matching denoting which two sets of tokens are being compared.

Let $M$ be such a matching.
The algorithm calls $\oldmathcal{A}_{\textsf{route}}$ on the following routing task
\textsc{ExpanderRouting}$(\mathit{rank}, Z^*, \mathit{dst}, L)$ such that, 
if a token $z$ is currently on a vertex $u$ that is currently a matched edge $(u, v)\in M$ with $\mathit{rank}(u)<\mathit{rank}(v)$, then $\mathit{dst}_z \gets \mathit{rank}(v)$. Otherwise, we simply set $\mathit{dst}_z\gets u$ so the token would not move.
After the routing, for each matched edge $(u, v)\in M$, all $2L$ tokens are united on the same vertex. Then, after sorting the tokens locally, the $L$ tokens with smaller keys are sent back to vertex $u$, following the previous route.

The sorting is accomplished after simulating the sorting network.
Now, it suffices to obtain  $\mathit{rank}(v)$ for each $v$.
It turns out we can obtain $\mathit{rank}(v)$ by imposing another expander sorting with customized identifiers
\textsc{ExpanderSorting}$(\mathit{ID}', Z', k, 1)$:
The new IDs are an arbitrary serialization among $\{1, 2, \ldots, n\}$ (which can be done within diameter time $D(G)=O(\phi^{-1}\log n)$ by obtaining in-order labels from an arbitrary BFS tree).
Each vertex $v$ generates one token with the key $k_v := \mathit{ID}(v)$.
After the sorting, each token that is originated from vertex $v$ arrives will be located at a vertex $u$ where  $\mathit{ID}'(u)=\mathit{rank}(v)$.
By reversing the route of the tokens, all vertices now obtain their rank.

From the above discussion, we know that the AKS sorting network is simulated at most twice for the given instance of \textsc{ExpanderSorting}.
Hence, the round complexity of $\oldmathcal{A}_{\textsf{sort}}$ is
\[
T_{\textsf{sort}}(n, \phi, L) = O(\phi^{-1}\log n) + O(\log n)\cdot T_{\textsf{route}}(n, \phi, L).\qedhere
\]
\end{proof}

\begin{lemma}\label{lemma:reduction-from-route-to-sort}
Suppose there is a comparison-based \congest algorithm $\oldmathcal{A}_{\textsf{sort}}$ that solves \textsc{ExpanderSorting} in $T_{\textsf{sort}}(n, \phi, L)$ time.
Then, there is a \congest algorithm $\oldmathcal{A}_{\textsf{route}}$ that solves \textsc{ExpanderRouting} in $O(1)\cdot T_{\textsf{sort}}(n, \phi, 2L)$ time.
\end{lemma}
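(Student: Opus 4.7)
The plan is to reduce the routing task to $O(1)$ invocations of $\oldmathcal{A}_{\textsf{sort}}$ with maximum load $2L$, by padding the token set so that sorting itself implements routing. The key observation is: if we augment the tokens so that (i) every vertex holds exactly $2L$ tokens at input, and (ii) for each $v$, the number of tokens whose destination is $\mathit{ID}(v)$ equals exactly $2L$, then after a single sort with load $2L$ every vertex must hold exactly $2L$ tokens (by pigeonhole, since the total is $2nL$), and since destinations are assigned positions in ID-rank order, the block for destination $v$ lands precisely on the vertex with rank $\mathit{rank}(v)$, namely $v$ itself. Consequently, every real token destined for $v$ ends up at $v$, and the auxiliary tokens can be discarded.

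The algorithm has three phases. First, a \emph{preparation} phase uses local aggregation (\Cref{lemma:local-aggregation}) and token ranking (\Cref{thm:token-ranking}) to let each vertex $v$ learn $r_v$ (the number of reals destined for $v$) and its rank $\mathit{rank}(v)$; both cost $O(T_{\textsf{sort}}(n,\phi,O(L)))$ rounds by the paper's own reductions from these primitives to sort. Second, a \emph{balancing} phase generates $2L - s_u$ dummy tokens at each vertex $u$ (where $s_u \leq L$ is the number of reals currently at $u$) and assigns each dummy a destination so that destination $v$ receives exactly $2L - r_v$ dummies. Since $\sum_u (2L - s_u) = \sum_v (2L - r_v) = 2nL - |R|$, the supply and demand totals coincide, and the assignment exists. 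To realize it in $O(1)$ sort invocations, I would compute prefix sums of the supply and demand sequences via the chain-of-comparisons property (\Cref{sorting-chain}) of the comparison-based sort, assigning each dummy a global supply index and each destination a contiguous range of demand indices; then a single sort pairs supply with demand tokens bearing the same index (so each dummy learns its destination), and a reverse sort ferries the assignments back to the source vertices. Third, the \emph{delivery} phase sorts the padded set of $2nL$ tokens keyed by $(\mathit{dst},\textrm{tiebreak})$ with maximum load $2L$; by the pigeonhole/alignment argument above, every real token arrives at its destination, and the dummies are discarded.

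The main obstacle is the balancing phase: constructing a one-to-one assignment of supplies to demands and then propagating destinations back to the source vertices, all within a constant number of sort invocations. The tools developed in \Cref{sec:expander-sort}, especially prefix-sum aggregation derived from the chain-of-comparisons property, suffice for this, but care is needed to keep the entire balancing step inside $O(1)$ sort calls and within the load-$2L$ budget (note that the supply and demand tokens alone impose load at most $2L$ per vertex, so the intermediate sorts fit). Once the balancing is done, the delivery is immediate from the alignment argument, giving a total cost of $O(1) \cdot T_{\textsf{sort}}(n,\phi,2L)$ as claimed.
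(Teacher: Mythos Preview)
Your delivery-phase insight is correct and elegant: once every vertex holds exactly $2L$ tokens and every destination ID is hit by exactly $2L$ tokens, a single sort keyed on $\mathit{dst}$ with load $2L$ forces (by pigeonhole) each vertex to receive exactly the $2L$ tokens whose destination equals its own ID. This is a genuinely different route from the paper's, which does \emph{not} make the forward sort itself deliver the real tokens; instead it runs the meet-in-the-middle trick of \Cref{sec:task2-leaf-components}: each destination $v$ first learns $N_v$ via local aggregation and then \emph{generates its own} $N_v\le L$ dummy tokens, odd/even serial numbers interleave reals with dummies in one sort of load $2L$, and each dummy carries its paired real back along the dummy's reversed route to $v$. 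Because dummies originate at the destination, the assignment problem you call ``balancing'' never arises.

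That said, your balancing phase as written has two gaps. First, the chain-of-comparisons property (\Cref{sorting-chain}) is a statement about tokens sharing the \emph{same} key: it yields local propagation of a single value from the smallest-tag token to its key-class, not a prefix sum across tokens with distinct keys. There is no guarantee that the comparisons linking consecutive sorted elements occur in the chronological order a prefix-sum sweep would require, so your appeal to \Cref{sorting-chain} for prefix sums is unjustified. Second, your load accounting is off: a vertex $u$ with $s_u=r_u=0$ must hold $2L$ supply dummies \emph{and} $2L$ demand tokens, so the sort that pairs supply index $i$ with demand index $i$ has load up to $4L$, not $2L$; the sentence ``supply and demand tokens alone impose load at most $2L$ per vertex'' is false for their union. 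Both issues are fixable (prefix sums can be done on a BFS tree in $O(\phi^{-1}\log n)\le T_{\textsf{sort}}$ rounds; the pairing can be staged through two padded load-$2L$ sorts rather than one load-$4L$ sort), but the paper's approach avoids the entire balancing step by placing dummies at destinations, which is both shorter and keeps the load at $2L$ trivially.
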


\begin{proof}
Consider an instance of \textsc{ExpanderRouting}$(\mathit{ID}, Z, \mathit{dst}, L)$.

We follow the meet-in-the-middle recipe described in \Cref{sec:task2-leaf-components}.
The intuition is to first count the number of tokens $N_v$ being routed to $v$ via local aggregation (\Cref{lemma:local-aggregation}), and let each vertex $v$ generate $N_v$ dummy tokens.
Next, by a local serialization (\Cref{lemma:local-serialization}), we can assign odd numbers to real tokens and even numbers to dummy tokens. Real tokens and dummy tokens will then be interleaved on any sorted sequence.
By invoking $\oldmathcal{A}_{\textsf{sort}}$ again with maximum load $2L$ (extra empty tokens are generated to ensure that each vertex holds exactly $2L$ tokens), the algorithm is able to pair up each real token with a dummy token.
Finally, this meet-in-the-middle trick lets each dummy token bring \emph{exactly one} real token to the destination.

In conclusion, our algorithm $\oldmathcal{A}_{\textsf{route}}$ has a round complexity $O(1)\cdot T_{\textsf{sort}}(n, \phi, 2L)$.
\end{proof}

If the expander sorting algorithm $\oldmathcal{A}_{\textsf{sort}}$ is not guaranteed to be comparison-based, the following lemma shows that we can still obtain a reduction with an additional $O(\log n)$ factor.

\begin{lemma}
Suppose there is a \congest algorithm $\oldmathcal{A}_{\textsf{sort}}$ that solves \textsc{ExpanderSorting} in $T_{\textsf{sort}}(n, \phi, L)$ time.
Then, there is a \congest algorithm $\oldmathcal{A}_{\textsf{route}}$ that solves \textsc{ExpanderRouting} in $O(\log_{L+1} n)\cdot T_{\textsf{sort}}(n, \phi, 2L)$ time.
\end{lemma}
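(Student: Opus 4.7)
The plan is to follow the meet-in-the-middle strategy from the proof of \Cref{lemma:reduction-from-route-to-sort}. Since we can no longer assume the sort is comparison-based, the local aggregation and local propagation primitives (\Cref{lemma:local-aggregation} and \Cref{lemma:local-serialization}) must be reimplemented as a sequence of calls to the black-box sort, and I expect the overhead to be the claimed $O(\log_{L+1} n)$ factor.

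First, one sort will establish vertex ranks in $\{1,\ldots,n\}$: each vertex $v$ contributes one token with key $\mathit{ID}(v)$, and after sorting, $v$ reads its rank off the position of its token. As in \Cref{lemma:reduction-from-route-to-sort}, to finish the routing it will suffice for every destination vertex $v$ to learn $N_v$, the number of real tokens destined for $v$, so that the meet-in-the-middle pairing trick can be applied.

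The main step is therefore computing and delivering the counts $N_v$ without the duplicate-marking trick that a comparison-based sort provides. I will use a tree-style aggregation with fan-in $L+1$: after one sort of the real tokens by destination, tokens sharing a destination occupy at most two consecutive positions in the sorted order, so each vertex compiles partial counts for at most $L$ distinct destinations. In iteration $r$, each active position produces a summary token that encodes both the destination and a level tag, and one sort per iteration brings $L+1$ same-destination summaries from neighboring positions together so they can be merged into a single summary for level $r+1$. Because the load budget is $2L$, each vertex can both receive incoming summaries and emit outgoing ones within the same round. After $O(\log_{L+1} n)$ iterations, the total $N_v$ for every destination is consolidated into a single summary token keyed by $d$, and one more sort (using the previously computed ranks) delivers each $N_v$ to the vertex $v$ with $\mathit{ID}(v)=d$.

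Once every vertex knows its $N_v$, the remainder follows the template of \Cref{lemma:reduction-from-route-to-sort}: each vertex generates $N_v$ dummies, assigns interleaved serial numbers to real tokens and dummies (which requires an analogous $O(\log_{L+1} n)$-round serialization procedure), a single sort with load $2L$ pairs real tokens with dummies, and a final sort keyed on each dummy's stored origin delivers each real token to its destination. The main obstacle is designing the key scheme for the tree-style aggregation so that one sort call per round exactly shrinks the number of same-destination summaries by a factor of $L+1$ at every active position while respecting the $2L$ load bound; correctness of the overall reduction then follows immediately from the argument in \Cref{lemma:reduction-from-route-to-sort}.
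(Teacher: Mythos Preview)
Your plan is correct and rests on the same core idea as the paper: $O(\log_{L+1} n)$ successive calls to the black-box sort, each shrinking the number of active same-key tokens by a factor of $\Theta(L)$, recover the primitives needed for the meet-in-the-middle argument of \Cref{lemma:reduction-from-route-to-sort}. The paper's implementation is simpler, however. Instead of maintaining summary tokens with explicit partial counts, level tags, and a custom key scheme for a fan-in-$(L{+}1)$ aggregation tree, the paper simply re-sorts all currently unmarked tokens by their \emph{original} key; after each sort, every vertex locally marks as duplicate all but one of its tokens per key and drops the marked ones from subsequent rounds. Since same-key tokens are contiguous in the sorted order and each vertex holds $\Theta(L)$ of them, the number of unmarked tokens per key falls by a $\Theta(L)$ factor per round, so at most two survivors per key remain after $O(\log_{L+1} n)$ rounds; one extra sort (with a dummy of key $-\infty$ inserted) places those two on a single vertex, completing deduplication. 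Local propagation is then obtained by reversing this entire sequence of sorts, after which \Cref{lemma:reduction-from-route-to-sort} applies verbatim. This completely sidesteps the ``main obstacle'' you flag: there is no key scheme to design and no separate re-derivation of serialization is required.

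One small slip in your write-up: after a single sort, tokens sharing a destination do not occupy ``at most two consecutive positions''; they form a contiguous block that may span many positions and many vertices. This does not affect the overall plan, since the subsequent observation that each vertex compiles partial counts for only $O(L)$ distinct destinations holds regardless.
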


\begin{proof}
It suffices to show that the deduplication step in the token ranking procedure (\Cref{thm:token-ranking}) can be implemented using $O(\log_{L+1} n)$ calls to $\oldmathcal{A}_{\textsf{sort}}$ when expander sorting is used as a black-box.
Consider the following algorithm: we repeatedly call $\oldmathcal{A}_{\textsf{sort}}$ for $O(\log_{L+1} n)$ times, with the maximum load being set to be at least 2.
After each execution of the expander sort,
each vertex locally marks all but one token with the same key as duplicate.
It is straightforward to check that, after $O(\log_{L+1} n)$ calls, for each distinct key, there will be at most 2 tokens that are left as not marked. These two tokens locate on different vertices.
By generating a dummy token with key being $-\infty$ and invoking another expander sort, the deduplication step can be accomplished as now these 2 tokens are guaranteed to be on the same vertex.
Once we have the deduplication step, we can implement local propagation (\Cref{lemma:local-propagation}) by reverting the above procedure and propagating the desired information.

Thus, the procedure described in the proof of \Cref{lemma:reduction-from-route-to-sort} can be implemented in $O(\log_{L+1} n)$ calls to $\oldmathcal{A}_{\textsf{sort}}$.
\end{proof}

\end{document}